\def\pstlw{.8pt}
\begin{document}


\titlepgtrue 
\signaturepagetrue 
\copyrighttrue 
\coauthortrue
\abswithesistrue 
\acktrue 
\tablecontentstrue 
\tablespagetrue 
\figurespagetrue 

\title{Entanglement, Invariants, and Phylogenetics} 
\author{Jeremy G Sumner} 
\prevdegrees{B.Sc. Hons (Tas)} 
\advisor{Dr Peter Jarvis} 
\dept{Mathematics and Physics} 
\submitdate{December, 2006} 

\newcommand{\abstextwithesis}
{\noindent This thesis develops and expands upon known techniques of mathematical physics relevant to the analysis of the popular Markov model of phylogenetic trees required in biology to reconstruct the evolutionary relationships of taxonomic units from biomolecular sequence data.\\
The techniques of mathematical physics are plethora and have been developed for some time. The Markov model of phylogenetics and its analysis is a relatively new technique where most progress to date has been achieved by using discrete mathematics. This thesis takes a group theoretical approach to the problem by beginning with a remarkable mathematical parallel to the process of scattering in particle physics. This is shown to equate to branching events in the evolutionary history of molecular units. The major technical result of this thesis is the derivation of existence proofs and computational techniques for calculating polynomial group invariant functions on a multi-linear space where the group action is that relevant to a Markovian time evolution. The practical results of this thesis are an extended analysis of the use of invariant functions in distance based methods and the presentation of a new reconstruction technique for quartet trees which is consistent with the most general Markov model of sequence evolution.}

\newcommand{\acknowledgement}
{\noindent First and foremost my thanks go to my supervisor Peter Jarvis. Not only for having the insight to take on this novel work and his outstanding knowledge of mathematical physics, but also for being a true friend and good bloke.

\noindent
These people have all played their own special role in bringing this thesis to fruition: Michael Sumner, Robert Delbourgo, Patrick McLean; William Joyce and the Physics department of the University of Cantebury; Mike Steel and the organisers of the New Zealand phylogenetics meeting; Rex Lau, Lars Jermiin, Michael Charleston and SUBIT; Alexei Drummond; Simon Wotherspoon (for giving me such a hard time), Malgorzata O'Reilly, Jim Bashford, Giuseppe Cimo, Stuart Morgan, Isamu Imahori and Graham Legg; Mum, Dad and Kate; Keith, Tim, Sarah, Wazza and Beans. 

\noindent
A special mention for my high school maths teacher Mr. Rush, who used to laugh when I continually interrupted his classes with: ``That's all very well, Mr. Rush, but how is this going to help me lay bricks?''

}

\newcommand{\quotations}
{\noindent\textit{Here, as it draws to its last Halt, if anywhere, might both Gentlemen take joy of a brief Holiday from Reason. Yet, ``Too busy,'' Mason insists, and ``Far too cheerful for thah','' supposes Dixon.}

\textbf{Mason and Dixon}\\\indent
Thomas Pynchon

}

\beforepreface
\afterpreface



\newcommand{\fra}[2]{\textstyle{\frac{#1}{#2}}}
\newcommand{\beqn}{\begin{eqnarray}\begin{aligned}}
\newcommand{\eqn}{\end{aligned}\end{eqnarray}}

\newtheorem{thm}{Theorem}[section]
\newtheorem{cor}[thm]{Corollary}
\newtheorem{lem}[thm]{Lemma}
\newtheorem{prop}[thm]{Proposition}
\newtheorem{remark}[thm]{}
\theoremstyle{definition}
\newtheorem{define}[thm]{Definition}
\newtheorem{example}{Example}[thm]
\newtheorem{postulate}{Postulate}


\pagenumbering{arabic}

\chapter{Introduction}\label{chap1}

The rationale of this thesis is taken from a remarkable analogy between the stochastic models used to infer phylogenetic relationships in mathematical biology and the structure of multiparticle quantum physics. There is a direct relationship between Feynman diagrams that describe the interactions of sub-atomic particles and phylogenetic trees that graphically represent the evolutionary relationship between taxonomic units. A Feynman diagram gives the graphical representation of creation and annihilation events of particle interactions. A taxonomic unit may be any biomolecular unit such as a gene, an amino acid or base pair, and the time evolution of these molecular units is modelled stochastically under a Markov assumption. Techniques which reconstruct the evolutionary history of molecular units from present observations are based on these models. Given the correct framework, these Markov models and the formalism of multiparticle quantum mechanics can be put into a mathematical correspondence. This is a very useful observation because phylogenetics is a relatively new mathematical problem (for example see the classic paper by Felsenstein \cite{felsenstein1981}) whereas the mathematics of particle physics has been studied for over a century. (For an outstanding introduction to the history of theoretical particle physics see \cite{pais1988}, and for a comprehensive introduction to mathematical physics see \cite{szekeres2004}.) Given that there is a mathematical connection between the two problems it would certainly be unfortunate to see results that have been obtained in physics re-derived independently in the context of phylogenetics. This thesis looks at a particular aspect of quantum systems known as \textit{entanglement} and shows that measures of entanglement can be utilized to improve the reconstruction of phylogenetic relationships.   
\\
We will need to be clear that the probabilities associated with quantum systems and those of phylogenetic models arise in quite a different scientific way. Quantum mechanics is a probabilistic theory because the theoretical predictions give the correct statistical behaviour regarding the outcomes of particular experiments. The theoretical predictions can be used to infer (incredibly accurately) the distribution of results for many repetitions of the same experiment. (For a popular discussion of the amazing accuracy of quantum theory see Feynman's discussion of the magnetic moment on the electron as predicted from quantum electrodynamics \cite{feynman1988}.) Since quantum theory is (and should be) seen as a \textit{theory} of nature there has been argument for many decades on how to interpret this probabilistic aspect of quantum theory. This argument raises quite profound scientific and philosophical issues which, thankfully, we will not be concerned with in this thesis. Models of phylogenetics are exactly that -- \textit{models}, and should not be seen as being theories of nature. No one would argue that the time evolution of molecular units follow the Markov model of phylogenetics in detail, but rather that these models are the best (tractable) approximation that give us recourse to establishing properties of phylogenetic history. Primarily the points of interest are the branching structure of the evolutionary history and also the evolutionary distance (or time) between branching events. \\
After we have made the mathematical analogy between quantum theory and the Markov model of phylogenetics, we will concentrate on only a small part of what can be done using techniques known in mathematical physics. We will focus on the study of entanglement invariants and their generalization to the phylogenetic case \cite{sumner2005,sumner2006}. There is potential for concentrating on other techniques such as Lie algebra symmetries \cite{bashford2004} and the analysis of the path integral formulation \cite{jarvis2001,jarvis2005}, but these techniques will not be explored here.\\
The distance based  technique has been used in phylogenetics as a tree building algorithm following the discovery that it is possible to calculate a distance from the observed sequences that is consistent with the Markov model. This distance function is a well defined mathematical object known as a group invariant function and is used in quantum physics to quantify and test for the phenomenon of entanglement. Entanglement is a general property that can exist in many different physical systems and the invariant function used as a distance measure in phylogenetics is used to quantify entanglement for only the most elementary case. Hence, it seems astute to investigate what the next most complicated types of entanglement correspond to in phylogenetics.  
\subsubsection{Theoretical outcomes of the thesis} 
We present a group representation theoretic analysis of the Markov model of phylogenetic trees. Specifically this formalism is used to construct all the one-dimensional representations of the (appropriately defined) Markov semigroup. These one-dimensional representations occur as polynomials in the (discrete) probability distributions predicted from the Markov model which we coin \textit{Markov invariants}. We establish the connection between these one-dimensional representations and that of \textit{phylogenetic invariants} \cite{cavender1987,evans1993,felsenstein1991,steel1993} and pairwise distance measures \cite{gu1996,lockhart1994}. This representation theoretical approach touches upon existing techniques and can be incorporated into known algorithms to give novel results and insights to the problem of phylogenetic reconstruction. The main theoretical outcome of the thesis is this use of representation theory. We will also develop the theory of invariants of the general linear group on a tensor product space and show how to infer existence of these invariants in different cases. We develop a procedure for computing the explicit form of these invariant functions, firstly developed for the general linear group and then generalized to the Markov semigroup. 
\subsubsection{Practical outcomes of the thesis}
We study a group invariant function, well known in quantum physics as the \textit{tangle}, in the context of phylogenetics. The tangle is used in physics to give a measure of the amount of entanglement between three qubits. Qubits are two state objects in quantum physics and correspond in phylogenetics to a probability distribution on two states. In phylogenetics the classic example is to use the DNA as a state space and hence the case of four state objects is of interest. To this end we have generalized the tangle to the case of three and four character states. This is a new result that to the best of the author's knowledge was previously unknown. Having successfully generalized the tangle we investigate how the tangle can be used to construct improved phylogenetic distance matrices. Additionally we study a set of Markov invariants which exist for the case of phylogenetic quartet tree. In the case of the evolution of four taxa there are three possible historical evolutionary relationships. We show that these Markov invariants can be used to distinguish these three cases under the assumption of the most general Markov model. It is expected that the use of the tangle to construct distance matrices and using the Markov invariants to distinguish the three possible quartets will lead to improvements of the reconstruction of phylogenetic relationships from observed biomolecular data.\\    
\subsubsection{Structure of the thesis}
Chapter \ref{chap2} begins by introducing the mathematical material needed to understand the results presented in this thesis. This includes a short introduction to group representation theory, group characters and tensor product; a presentation of the Schur/Weyl duality and the Schur functions; a definition of group invariant functions and their relation to one-dimensional representations. The chapter ends with several relevant examples of invariants of the general linear group.\\
Chapter \ref{chap3} begins with a light speed introduction to the formalism of quantum mechanics, the concept of entanglement and mathematical analysis thereof using group invariant functions. The Markov model of phylogenetic trees is then developed in its usual presentation, followed by a change of formalism which makes apparent the analogy between phylogenetic trees and multiparticle quantum systems. The chapter ends with a detailed analysis of the mathematical analysis of the invariant functions when evaluated upon a phylogenetic tree.\\
Chapter \ref{chap4} gives a review of phylogenetic distance measures and shows how the tangle invariant function used to analyse three qubit entanglement can be generalized to the phylogenetic case and used to improve popular distance measures. This is done by defining the branch lengths of a phylogenetic tree, reviewing the standard measure known as the \textit{$\log\det$} and then using the tangle invariant to give a consistent distance measure for the case of quartets.\\
Chapter \ref{chap5} returns to the mathematical detail of Chapter \ref{chap2} and derives invariant functions that are more closely relevant to the Markov model of a phylogenetic tree. This is done by first defining the Markov semigroup. The invariant functions of the general linear group are rederived using a technique which is generalized to derive the Markov invariants. Finally we examine the structure of the Markov invariants on a phylogenetic tree. In particular we concentrate on the quartet case where there exists four Markov invariants which can be used to distinguish between the three possible quartet trees.\\


\chapter{Mathematical background}\label{chap2}

In this chapter we will present the requisite mathematical background for developing the results presented in this thesis. It will be assumed that the reader is familiar with elementary concepts of algebra, most importantly the theory of groups and finite dimensional vector spaces (for example see \cite{kunze1971}) and the theory of Lie groups and the classical groups (see \cite{miller1972}). The presentation will be brief and the reader interested in proofs is referred to the relevant literature as the discussion progresses. Our aim is to show how representation theory of groups -- most notably the Schur/Weyl duality -- can be used to count and construct the group invariant functions on a multi-linear (tensor product) space. We will develop some explicit invariants for the general linear group using a method which is known intuitively to many mathematical physicists and we formalize the technique.

\section{Group representations}
Throughout this thesis we will be interested in the vector spaces $\mathbb{C}^n$ and $\mathbb{R}^n$. Almost all of the results presented will be equally valid whether one considers the complex or real space. Hence, we will simply refer to the vector space $V$, making the distinction between the real and complex case only when confusion may arise. For proofs of theorems that will be presented and further discussion of group representation theory the reader is referred to the excellent texts \cite{hammermesh1964,keown1975,miller1972}.
\begin{define}\label{representation}
A group \textit{representation} $\rho$ on the vector space $V$ is a homomorphism from a group $G$ to the set of invertible, linear transformations $GL(V)$. The image element of $g\in G$ is denoted by $\rho(g)$ and the $dimension$ of the representation is taken to be the dimension of the corresponding vector space.
\end{define}\noindent
A simple example of a group representation is constructed from the symmetric group on $n$ elements, $S_n$, by taking a given group element $\sigma\in S_n$ to simply permute the basis vectors of the $n$ dimensional vector space $V$:
\beqn
\rho(\sigma)e_i:=e_{\sigma i}.\nonumber
\eqn
It is clear that we have $\rho(\sigma\sigma')=\rho(\sigma)\rho(\sigma')$ so that $\rho$ is indeed a homomorphism from $S_n$ to $GL(V)$.\\
We will often be interested in the case where the abstract group is a matrix group such as the general linear group $GL(V)$ which is, of course, defined by its \textit{action} on the vector space $V$. To avoid confusion, we will refer to this representation as the \textit{defining} representation. To increase confusion we will write elements of the defining representation simply as $g$.\\
Given a matrix group $G$, there is always a one-dimensional representation defined by the determinant function:
\beqn
\det: G\rightarrow \mathbb{C}^*,\nonumber
\eqn    
where $\mathbb{C^*}\cong \mathbb{C}\setminus\{0\}$ is the group of multiplications of non-zero complex numbers. The multiplicative property of the determinant:
\beqn
\det(g_1g_2)=\det(g_1)\det(g_2),\nonumber
\eqn
ensures that the determinant function defines a group homomorphism.
\begin{define}
A subspace $U\subseteq V$ is \textit{invariant} under the group representation $\rho$ if for all $u\in U$ it follows that $\rho(g)u\in U$ for all $g\in G$.
\end{define}\noindent
The notion of invariant subspaces allows us to break a given representation into its essential parts. That is, we can simplify the representation by considering its action upon the invariant subspaces alone.
\begin{define}\label{def:reps}
A representation is \textit{reducible} if there exists a non-trivial invariant subspace $U$. An \textit{irreducible} representation is one which has no non-trivial invariant subspaces. A representation is \textit{decomposable} if there exist non-trivial invariant subspaces $U$ and $W$ such that $V\cong U\oplus W$, and \textit{indecomposable} otherwise. A representation is \text{completely reducible} if whenever there exists a non-trivial invariant subspace $U$, then there exists a second non-trivial invariant subspace $W$ such that $V\cong U\oplus W$. 
\end{define}\noindent
The matrix interpretation of a completely reducible representation is that there exists a basis where the matrix representation of each group element takes on a block-diagonal form. We will be exclusively interested in \textit{integral} representations of the general linear group and its subgroups. Integral representations are those in which the entries of the representation matrix are polynomials in the matrix entries of $GL(V)$ with respect to a particular basis. The integral representations of $GL(V)$ are completely reducible \cite{keown1975}.
\begin{define}
The representations $\rho_1$ and $\rho_2$ are said to be $equivalent$ if there exists an invertible linear transformation $S$ on $V$ such that 
\beqn
S\rho_1(g)S^{-1}=\rho_2(g)\nonumber
\eqn
for all $g\in G$.
\end{define}\noindent
From these considerations we can conclude that a given integral representation of the general linear group can be decomposed as
\beqn
\rho=\bigoplus_a\rho_a,\nonumber
\eqn
where each $\rho_a$ is an irreducible representation.
\subsection{Group characters}
\begin{define}
The \textit{character} of a representation $\rho$ is defined as the trace function: 
\beqn\chi(g)=tr(\rho(g)).\nonumber
\eqn
\end{define}\noindent
It follows immediately that the character is unaffected by similarity transformations:
\beqn
tr(S\rho(g)S^{-1})=tr(\rho(g) S^{-1}S)=tr(\rho(g)),\nonumber
\eqn
and is hence the same for equivalent representations. \\
The problem of classifying irreducible representations reduces to identifying the characters. Although the following result is valid only for finite groups, we will see that understanding the representation theory of $S_n$ (a finite group) is crucial to constructing the irreducible representations of $GL(V)$ (an infinite group). 
\begin{remark}
For a finite group, the number of non-equivalent irreducible representations of a group $G$ is equal to the number of conjugacy classes of G.
\end{remark}\noindent
For example the conjugacy classes of the symmetric group can be found by considering the cycle notation which presents an element of $S_n$ as a product of disjoint cycles. The lengths of these cycles adds to $n$ and hence we get the well known result that the conjugacy classes of $S_n$ are labelled by the partitions of $n$. (We will discuss partitions in more detail in the next section.) To illustrate this, consider that any element of the symmetric group can be written in the following form:
\beqn
\sigma=(i_1i_2\ldots i_{\alpha_1})(j_1j_2\ldots j_{\alpha_2})\ldots (l_1l_2\ldots l_{\alpha_p}).\nonumber
\eqn
This element belongs to the conjugacy class which is specified by the partition $\{\alpha_1,\alpha_2,\ldots ,\alpha_p\}$ where $\alpha_1+\alpha_2+\ldots +\alpha_p=n$. The fundamental result follows:
\begin{remark}
The irreducible representations of the symmetric group $S_n$ can be labelled by the partitions of $n$.
\end{remark}\noindent
For example we consider the representation on the $n$-dimensional vector space $V$ of the symmetric group $S_n$ defined, as above, by
\beqn
\rho(\sigma) e_i=e_{\sigma i}.\nonumber
\eqn
Introducing the change of basis
\beqn\label{zbasis}
z_0&=\fra{1}{\sqrt{n}}\sum_{i=1}^ne_i,\\
z_a&=\fra{1}{\sqrt{a(a+1)}}\sum_{i=1}^a(e_i-\sqrt{a}e_{a+1}),\quad a=1,2,\ldots ,n-1.
\eqn
It is clear that $z_0$ spans a one-dimensional invariant subspace
\beqn
\rho(\sigma)z_0=\fra{1}{\sqrt{n}}\sum_{i=1}^ne_{\sigma i}=z_0;\nonumber
\eqn
and we have
\beqn
\rho(\sigma) z_a=\fra{1}{\sqrt{a(a+1)}}\sum_{i=1}^a(e_{\sigma i}-\sqrt{a}e_{\sigma(a+1)}),\nonumber
\eqn
which itself belongs to the span of $\{z_1,z_2,\ldots ,z_{n-1}\}$ which is consequently a complementary invariant space. To prove this consider the standard inner product:
\beqn
(e_i,e_j):=\delta_{ij}.\nonumber
\eqn
and show that
\beqn
(\rho(\sigma)z_a,z_0)=0,\quad\forall\sigma\in S_n.\nonumber
\eqn
The representation of the symmetric group on the subspace $z_0$ corresponds to the partition of $n$ consisting of a single element: $\{n\}$.\\
Another one-dimensional representation of the symmetric group can be constructed by taking the sign of the permutation
\beqn
sgn(\sigma)=\pm 1,\nonumber
\eqn
with the representation space $\mathbb{C}$. This representation corresponds to the partition $\{1,1,\ldots ,1\}$ with $1+1+\ldots +1=n$.
\subsection{Tensor product}
The \textit{dual} of the vector space, $V$, is denoted as $V^*$ and defined to be the set of linear functionals $\{f: V\rightarrow \mathbb{C}\}$:
\beqn
f(cv)&=cf(v),\\
f(v+v')&=f(v)+f(v'),\nonumber
\eqn
for all $c\in \mathbb{C}$ and $v,v'\in V$. Of course $V^*$ itself forms a vector space and we use the basis $\xi_1,\xi_2,\ldots ,\xi_n$ such that $\xi_i(e_j)=\delta_{ij}$. Since $V$ and $V^*$ are complex vector spaces of identical dimension they must be isomorphic and we define the linear functional $\overline{v}$ as
\beqn
\overline{v}(u)=\sum_{i=1}^n v_{i}^*u_{i},\nonumber
\eqn
so that
\beqn
\overline{v}=\sum_{i=1}^n v_{i}^*\xi_{i}.\nonumber
\eqn
 With these definitions in hand we consider $bi$-$linear$ functionals on the ordered product of two vector spaces $V_1$ and $V_2$ with bases $\{e_i^{(1)}\}$ and $\{e_j^{(2)}\}$ respectively. Such functionals map $V_1\times V_2$ to $\mathbb{C}$ and satisfy
\beqn
f(cv_1,v_2)&=cf(v_1,v_2)=f(v_1,cv_2),\\
f(v_1+v_1',v_2)&=f(v_1,v_2)+f(v_1',v_2),\\
f(v_1,v_2+v_2')&=f(v_1,v_2)+f(v_1,v_2'),\nonumber
\eqn 
for all $c\in \mathbb{C},$ $v_1,v_1'\in V_1$ and $v_2,v_2'\in V_2$. Again this set of functionals forms a vector space which we denote as $(V_1\otimes V_2)^*$ with basis given by the set of functionals $\xi_i^{(1)}\otimes \xi_j^{(2)}$ defined as
\beqn
\xi_i^{(1)}\otimes \xi_j^{(2)}(e_k^{(1)},e_l^{(2)}):=\xi_i(e_k^{(1)})\xi_j(e_l^{(2)}).\nonumber
\eqn
From which it follows that the bi-linear functional $f$ can be written as
\beqn
f=\sum_{i,j} f_{ij}\xi_i^{(1)}\otimes \xi_j^{(2)},\nonumber
\eqn
where $f_{ij}=f(e_i^{(1)},e_j^{(2)})$. From this we can induce the definition of the \textit{tensor product} of $V_1$ and $V_2$ to be the vector space $V_1\otimes V_2$. A given element $\psi\in V_1\otimes V_2$ is referred to as a tensor and can be expressed uniquely in the form
\beqn
\psi=\sum_{i,j}\psi_{ij}e^{(1)}_{i}\otimes e^{(2)}_{j}.\nonumber
\eqn  
This process can be iterated to the tensor product of multiple vector spaces $\mathcal{H}=V_1\otimes V_2 \otimes \ldots \otimes V_m$ where a given element $\psi\in \mathcal{H}$ can be expressed as 
\beqn
\psi=\sum_{i_1,i_2,\ldots ,i_m}\psi_{i_1i_2\ldots i_m}e^{(1)}_{i_1}\otimes e^{(2)}_{i_2}\otimes\ldots \otimes e^{(m)}_{i_m}.\nonumber
\eqn
The tensor product space satisfies the axioms of a vector space with addition and scalar multiplication defined in the obvious way:
\beqn
c\cdot \psi&=\sum_{i_1,i_2,\ldots ,i_m}c \psi_{i_1i_2\ldots i_m}e^{(1)}_{i_1}\otimes e^{(2)}_{i_2}\otimes\ldots \otimes e^{(m)}_{i_m},\nonumber\\
\psi+\varphi&=\sum_{i_1,i_2,\ldots ,i_m}(\psi_{i_1i_2\ldots i_m}+\varphi_{i_1i_2\ldots i_m})e^{(1)}_{i_1}\otimes e^{(2)}_{i_2}\otimes\ldots \otimes e^{(m)}_{i_m}.\nonumber
\eqn
When one is taking the tensor product of a single vector space we use the notation
\beqn
V^{\otimes m}:=V\otimes V\otimes\ldots \otimes V.\nonumber
\eqn
Again, $\mathcal{H}:=V^{\otimes m}$ must be isomorphic to $\mathcal{H}^*\cong (V^*)^{\otimes m}$ and we define
\beqn
\overline{\psi}=\sum_{i_1,i_2,\ldots ,i_m}\psi_{i_1i_2\ldots i_m}^*\xi_{i_1}\otimes\xi_{i_2}\otimes\ldots \otimes\xi_{i_m},\nonumber
\eqn
so that
\beqn
\overline{\psi}(\varphi)=\sum_{i_1,i_2,\ldots ,i_m}\psi_{i_1i_2\ldots i_m}^*\varphi_{i_1i_2\ldots i_m}.\nonumber
\eqn
\subsection{Group action on a tensor product space}
Given a set of representations of a group 
\beqn
\rho_a: G\rightarrow GL(V_a),\qquad a=1,2,\ldots ,m,\nonumber
\eqn
it is possible to construct a new representation $\rho$ by taking the tensor product
\beqn
\mathcal{H}=V_1\otimes V_2 \otimes\ldots  \otimes V_m\nonumber
\eqn
and define the \textit{tensor product representation} on the vector space $\mathcal{H}$ to act as:
\beqn
\rho(g)\psi:&=\rho_1(g)\otimes \rho_2(g) \otimes\ldots \otimes \rho_m(g)\psi,\nonumber\\
&=\sum_{i_1,i_2,\ldots ,i_m} \psi_{i_1i_2\ldots i_m} \rho_1(g)e^{(1)}_{i_1}\otimes \rho_2(g)e^{(2)}_{i_2}\otimes\ldots \otimes \rho_m(g)e^{(m)}_{i_m}.
\eqn
In contrast to this we consider another important case which occurs when we have the direct (cartesian) product of $m$ groups: 
\beqn
G=G_1\times G_2\times\ldots \times G_m,\nonumber
\eqn
with representations $\rho_1,\rho_2,\ldots ,\rho_m$ and associated representation spaces
\beqn
V_1,V_2,\ldots .,V_m.\nonumber
\eqn
It is again possible to define a representation $\bar{\rho}$ on $\mathcal{H}$ as 
\beqn
\bar{\rho}(g)\psi=\bar{\rho}(g_1\times g_2\times\ldots \times g_m) \psi=\rho_1(g_1)\otimes \rho_2(g_2)\otimes \ldots \otimes \rho_m(g_m) \psi.\nonumber
\eqn
\footnote{Interestingly, in quantum physics the appropriate description of a multi-particle system is given by taking the tensor product of different representations of a single group, such as the orthogonal or Lorentz groups, where the choice of each representation is fixed by the individual particle types. Whereas in the case of stochastic models of phylogenetics the reverse is the case; the system is described by taking the group action on the tensor product space as the direct product of the defining representation of the Markov semigroup.} For future use we define the notation
\beqn
\times^m G:&=G\times G\times\ldots \times G\nonumber\\
\otimes^mg:&=g\otimes g \otimes\ldots \otimes g. 
\eqn
Presently we will recall the character theory of the general linear group to enable us to decompose such representations into their irreducible parts.
\section{Irreducible representations of the general linear group}
It is well known from group representation theory that the finite-dimensional irreducible representations of the general linear and the symmetric group can be put into a correspondence. This result is known as the Schur/Weyl duality. As we saw above, the irreducible representations of the symmetric group on $n$ elements can be labelled by the partitions of $n$. Additionally, there exist algorithms for explicitly constructing these irreducible representations once a partition has been specified. Here we will show how the irreducible representations of the general linear group on $V$ occur as subspaces of the tensor product space $V^{\otimes m}$. These projections are constructed using operators known as \textit{Young's operators} which are computed from the partitions of $m$.
\subsection{Partitions}
A finite sequence of positive integers 
\beqn
\lambda=\{\lambda_1,\lambda_2,\ldots \}\nonumber
\eqn
with $\lambda_1\geq \lambda_2\geq\ldots$, is an (ordered) \textit{partition} of the integer $n$ if the \textit{weight} of the partition,
\beqn
|\lambda|:=\lambda_1+\lambda_2+\ldots \nonumber,
\eqn
satisfies $|\lambda|=n$.   
\\
It is usual to use a notation which indicates the number of times each integer occurs as a part:
\beqn
\lambda=\{\ldots ,r^{m_r},\ldots ,2^{m_2},1^{m_1}\}\nonumber
\eqn
so that $m_i$ of the parts of $\lambda$ are equal to $i$. It is useful to represent a given partition as a \textit{Ferrers} \textit{diagram} by drawing a row of squares for each part of the partition, and placing these rows upon each other sequentially such that the rows decrease in length down the page. For example the partition $\lambda=\{5,3^2,2,1\}$ is represented by:
\begin{figure}[h]
\centering
\pspicture[](0,0)(0,3)
\psline(-1.25,2.5)(1.25,2.5)
\psline(-1.25,2)(1.25,2)
\psline(-1.25,1.5)(.25,1.5)
\psline(-1.25,1)(.25,1)
\psline(-1.25,.5)(-.25,.5)
\psline(-1.25,0)(-.75,0)
\psline(-1.25,0)(-1.25,2.5)
\psline(-.75,0)(-.75,2.5)
\psline(-.25,.5)(-.25,2.5)
\psline(.25,1)(.25,2.5)
\psline(.75,2)(.75,2.5)
\psline(1.25,2)(1.25,2.5)
\endpspicture 
\end{figure}
\\
\begin{define}
A \textit{Young tableau}, $T$, of shape $\lambda$ with $|\lambda|=n$ is an assignment of the integers $1,2,\ldots ,n$ to a Ferrers diagram such that the rows and columns are strictly increasing. A \textit{semi-standard tableau}, $T'$, requires that only the rows need to be increasing.
\end{define}\noindent
For example, the canonical Young tableau of shape $\{5,3^2,2,1\}$ is:
\begin{figure}[h]
\centering
\pspicture[](0,0)(0,3)
\psline(-1.25,2.5)(1.25,2.5)
\psline(-1.25,2)(1.25,2)
\psline(-1.25,1.5)(.25,1.5)
\psline(-1.25,1)(.25,1)
\psline(-1.25,.5)(-.25,.5)
\psline(-1.25,0)(-.75,0)
\psline(-1.25,0)(-1.25,2.5)
\psline(-.75,0)(-.75,2.5)
\psline(-.25,.5)(-.25,2.5)
\psline(.25,1)(.25,2.5)
\psline(.75,2)(.75,2.5)
\psline(1.25,2)(1.25,2.5)
\rput(-1,2.25){1}
\rput(-.5,2.25){2}
\rput(0,2.25){3}
\rput(.5,2.25){4}
\rput(1,2.25){5}
\rput(-1,1.75){6}
\rput(-.5,1.75){7}
\rput(0,1.75){8}
\rput(-1,1.25){9}
\rput(-.5,1.25){10}
\rput(0,1.25){11}
\rput(-1,.75){12}
\rput(-.5,.75){13}
\rput(-1,.25){14}
\endpspicture 
\end{figure}
\\
while a semi-standard tableau of the same shape is:
\begin{figure}[h]
\centering
\pspicture[](0,0)(0,3)
\psline(-1.25,2.5)(1.25,2.5)
\psline(-1.25,2)(1.25,2)
\psline(-1.25,1.5)(.25,1.5)
\psline(-1.25,1)(.25,1)
\psline(-1.25,.5)(-.25,.5)
\psline(-1.25,0)(-.75,0)
\psline(-1.25,0)(-1.25,2.5)
\psline(-.75,0)(-.75,2.5)
\psline(-.25,.5)(-.25,2.5)
\psline(.25,1)(.25,2.5)
\psline(.75,2)(.75,2.5)
\psline(1.25,2)(1.25,2.5)
\rput(-1,2.25){1}
\rput(-.5,2.25){1}
\rput(0,2.25){1}
\rput(.5,2.25){2}
\rput(1,2.25){3}
\rput(-1,1.75){3}
\rput(-.5,1.75){4}
\rput(0,1.75){4}
\rput(-1,1.25){4}
\rput(-.5,1.25){5}
\rput(0,1.25){6}
\rput(-1,.75){6}
\rput(-.5,.75){6}
\rput(-1,.25){7}
\endpspicture. 
\end{figure}
\\
\begin{define}
The ring of \textit{symmetric functions}, $\Lambda_n=\mathbb{Z}[x_1,\ldots ,x_n]^{S_n}$, is the set of polynomials in $n$ independent variables $x_1,\ldots ,x_n$ which are invariant under the representation of $S_n$ defined by permutations of the variables.
\end{define}\noindent
That is, $f$ is a symmetric function if and only if:
\beqn
f(x_1,x_2,\ldots ,x_n)=f(x_{\sigma 1},x_{\sigma 2},\ldots ,x_{\sigma n}),\quad\forall \sigma\in S_n.\nonumber
\eqn
It is clear that $\Lambda_n$ is a graded ring:
\beqn
\Lambda_n=\oplus_{d\geq 0}\Lambda_n^d\nonumber
\eqn
where $\Lambda_n^d\subset \Lambda_n$ consists of the homogeneous symmetric polynomials of degree $d$. 
Various bases exist for the ring of symmetric functions (see \cite{macdonald1979}). The basis which will be of use to us is given by the Schur functions.
\subsection{The Schur functions}
For a given partition $\lambda$ define the monomial $x^{\lambda}=x_1^{\lambda_1}x_2^{\lambda_2}\ldots x_{n}^{\lambda_n}$. Consider the polynomial which is obtained by anti-symmetrizing:
\beqn
a_{\lambda}=a_{\lambda}(x_1,\ldots ,x_n)=\sum_{\sigma\in S_n}sgn(\sigma)\sigma(x^{\lambda})\nonumber,
\eqn 
where
\beqn
\sigma(x^{\lambda}):=x_{\sigma 1}^{\lambda_1}x_{\sigma 2}^{\lambda_2}\ldots x_{\sigma{n}}^{\lambda_n}.\nonumber
\eqn
By considering the partition $\delta=\{n-1,n-2,\ldots ,1\}$ it follows that
\beqn
a_{\delta}=\prod_{1\leq i<j\leq n}(x_i-x_j),\nonumber
\eqn
which is called the \textit{Vandermonde determinant}. The Schur functions are then defined as the quotient
\beqn
s_{\lambda}=s_{\lambda}(x_1,x_2,\ldots ,x_n)=a_{\lambda+\delta}/a_{\delta}\nonumber,
\eqn
which is clearly symmetric. A more intuitive and constructive way of defining the Schur functions is to take:
\beqn
s_{\lambda}=\sum_{T'}x^{T'},\nonumber
\eqn
where the summation is over all semi-standard $\lambda$ tableaux $T'$. For example, for $\lambda=\{2,1\}$ the semi-standard tableaux are displayed in Figure \ref{fig:semi}.
\begin{figure}[t]
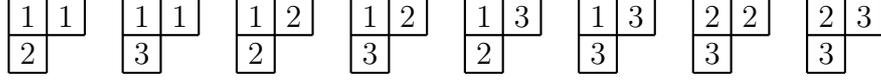

\centering
\pspicture[](0,0)(0,1)
\psline(-.5,0)(-.5,1)
\psline(-.5,1)(.5,1)
\psline(0,0)(0,1)
\psline(-.5,0)(0,0)
\psline(.5,.5)(.5,1)
\psline(-.5,.5)(.5,.5)
\rput(-.25,.75){1}
\rput(.25,.75){1}
\rput(-.25,.25){2}
\endpspicture
\pspicture[](-1.5,0)(0,1.5)
\psline(-.5,0)(-.5,1)
\psline(-.5,1)(.5,1)
\psline(0,0)(0,1)
\psline(-.5,0)(0,0)
\psline(.5,.5)(.5,1)
\psline(-.5,.5)(.5,.5)
\rput(-.25,.75){1}
\rput(.25,.75){1}
\rput(-.25,.25){3}
\endpspicture 
\pspicture[](-1.5,0)(0,1.5)
\psline(-.5,0)(-.5,1)
\psline(-.5,1)(.5,1)
\psline(0,0)(0,1)
\psline(-.5,0)(0,0)
\psline(.5,.5)(.5,1)
\psline(-.5,.5)(.5,.5)
\rput(-.25,.75){1}
\rput(.25,.75){2}
\rput(-.25,.25){2}
\endpspicture 
\pspicture[](-1.5,0)(0,1.5)
\psline(-.5,0)(-.5,1)
\psline(-.5,1)(.5,1)
\psline(0,0)(0,1)
\psline(-.5,0)(0,0)
\psline(.5,.5)(.5,1)
\psline(-.5,.5)(.5,.5)
\rput(-.25,.75){1}
\rput(.25,.75){2}
\rput(-.25,.25){3}
\endpspicture 
\pspicture[](-1.5,0)(0,1.5)
\psline(-.5,0)(-.5,1)
\psline(-.5,1)(.5,1)
\psline(0,0)(0,1)
\psline(-.5,0)(0,0)
\psline(.5,.5)(.5,1)
\psline(-.5,.5)(.5,.5)
\rput(-.25,.75){1}
\rput(.25,.75){3}
\rput(-.25,.25){2}
\endpspicture 
\pspicture[](-1.5,0)(0,1.5)
\psline(-.5,0)(-.5,1)
\psline(-.5,1)(.5,1)
\psline(0,0)(0,1)
\psline(-.5,0)(0,0)
\psline(.5,.5)(.5,1)
\psline(-.5,.5)(.5,.5)
\rput(-.25,.75){1}
\rput(.25,.75){3}
\rput(-.25,.25){3}
\endpspicture 
\pspicture[](-1.5,0)(0,1.5)
\psline(-.5,0)(-.5,1)
\psline(-.5,1)(.5,1)
\psline(0,0)(0,1)
\psline(-.5,0)(0,0)
\psline(.5,.5)(.5,1)
\psline(-.5,.5)(.5,.5)
\rput(-.25,.75){2}
\rput(.25,.75){2}
\rput(-.25,.25){3}
\endpspicture 
\pspicture[](-1.5,0)(0,1.5)
\psline(-.5,0)(-.5,1)
\psline(-.5,1)(.5,1)
\psline(0,0)(0,1)
\psline(-.5,0)(0,0)
\psline(.5,.5)(.5,1)
\psline(-.5,.5)(.5,.5)
\rput(-.25,.75){2}
\rput(.25,.75){3}
\rput(-.25,.25){3}
\endpspicture 
\caption{Semi-standard tableaux}
\label{fig:semi}
\end{figure}
In this case each tableau corresponds to a monomial $x^T$ to give
\beqn
s_{21}(x_1,x_2,x_3)&=x_1^2x_2+x_1^2x_3+x_1x_2^2+2x_1x_2x_3+x_1x_3^2+x_2^2x_3+x_2x_3^2\nonumber
\eqn
which is easily seen to be a symmetric polynomial.
\subsection{Group characters of $GL(n)$}
For a given matrix $g\in GL(n)$ it is possible to use the Jordan decomposition to put it in upper triangular form and hence the character is simply the sum of the eigenvalues:
\beqn
\chi(g)=tr(g)=x_1+x_2+\ldots +x_n.\nonumber
\eqn
This corresponds to the Schur function
\beqn
s_{\{1\}}(x_1,\ldots ,x_n)=x_1+x_2+\ldots +x_n.\nonumber
\eqn
By considering the tensor product representation of $GL(V)$ on $V\otimes V$ we have 
\beqn
V^{\{2\}}:=\{\psi^{(s)}|\psi^{(s)}_{i_1i_2}=\psi_{i_2i_1}\},\qquad
V^{\{1^2\}}:=\{\psi^{(a)}|\psi^{(a)}_{i_1i_2}=-\psi_{i_2i_1}\},\nonumber
\eqn
as irreducible subspaces known as the \textit{symmetric} and \textit{anti-symmetric} tensors with dimensions $\frac{1}{2}n(n+1)$ and $\frac{1}{2}n(n-1)$ respectively. We have
\beqn
\psi=\fra{1}{2}(\psi^{(s)}+\psi^{(a)}),\nonumber
\eqn
where 
\beqn
\psi^{(s)}_{i_1i_2}=\fra{1}{2}(\psi_{i_1i_2}+\psi_{i_2i_1}),\qquad\psi^{(a)}_{i_1i_2}=\fra{1}{2}(\psi_{i_1i_2}-\psi_{i_2i_1}),\nonumber
\eqn
so that the decomposition of $V\otimes V$ under the action of $GL(n)$ into irreducible subspaces is given by
\beqn
V\otimes V=V^{\{2\}}\oplus V^{\{1^2\}}.\nonumber
\eqn
Now suppose we take the group element $g\in GL(n)$. It follows from an elementary calculation that the character of this group element on the representation $V^{\otimes 2}$ is simply the product:
\beqn
\chi(g\otimes g)=(x_1+x_2+\ldots +x_n)(x_1+x_2+\ldots +x_n).\nonumber
\eqn
In terms of the Schur functions it follows that we have the decomposition
\beqn
\chi(g\otimes g)&=s_{\{1\}}(x)s_{\{1\}}(x),\nonumber\\
&=s_{\{2\}}(x)+s_{\{1^2\}}(x)
\eqn
where
\beqn
s_{\{2\}}(x)&=(x_1^2+x_1x_2+x_1x_3+\ldots +x_2^2+x_2x_3+\ldots +x_n^2),\nonumber\\
s_{\{1^2\}}(x)&=(x_1x_2+x_1x_3+\ldots +x_2x_3+\ldots +x_{n-1}x_n).
\eqn
Thus we see that the decomposition of the tensor product representation into irreducible parts can be inferred by using the Schur functions as a basis for the ring of symmetric functions. This is the archetypal example from physics and leads to the full Schur/Weyl duality which allows us to classify the irreducible representations of $GL(n)$ (and its subgroups) by simply using the character formulas and the Schur functions.
\subsection{The Schur/Weyl duality}
In this section we will construct the Schur/Weyl duality which states that the irreducible representations of the general linear group and that of the symmetric group can be put into correspondence.
\begin{remark}
If $V$ decomposes into the direct sum $V=U\oplus W$ where $U$ and $W$ are invariant subspaces under the group representation $\rho$, then the projection operator $P$, defined by $PV\cong U$, satisfies
\beqn\label{project}
P\rho(g)=\rho(g)P,\quad \forall g\in G,
\eqn
and similarly for the orthogonal projection $(1-P)$. Conversely, if $P$ is a projection operator satisfying (\ref{project}) then the subspace it projects to is invariant under $\rho$.
\end{remark}\noindent
Consider the representation of the symmetric group on $V^{\otimes m}$ defined by
\beqn
\sigma (e_{i_1}\otimes e_{i_2}\otimes\ldots \otimes e_{i_m}):=e_{i_{\sigma 1}}\otimes e_{i_{\sigma 2}}\otimes\ldots \otimes e_{i_{\sigma m}}.\nonumber
\eqn
It should be clear that the action of any such element of the symmetric group will commute with the tensor product representation of $GL(n)$. In addition to this the \textit{algebra} generated from this action will commute with $GL(n)$ and hence can be used to construct projection operators which satisfy (\ref{project}). Presently we will discuss how to construct such projection operators such that the corresponding invariant subspaces are in fact irreducible.\\
Consider a Young tableau with shape $\lambda$ and $|\lambda|=m$. Consider the permutations $p$ which interchange the integers in the same row, and, conversely, permutations $q$ which interchange numbers in the same column. In the algebra of the symmetric group action defined above, consider the quantities
\beqn
P&=\sum_{p}p,\nonumber\\
Q&=\sum_{q}sgn(q)q.\nonumber
\eqn 
The Young operator corresponding to the standard tableau $T$ is then defined to be
\beqn
Y=QP,\nonumber
\eqn
and we have the fundamental result:
\begin{remark}
For a given partition $\lambda$, $Y$ projects onto an irreducible subspace of $V^{\otimes m}$ under the tensor product representation of $GL(n)$. Young tableaux of the same shape label equivalent representations. 
\end{remark}\noindent
Now suppose $Y_{\lambda}$ is the Young operator corresponding to the partition $\lambda$. We define the subspace
\beqn
V^{\lambda}:=Y_{\lambda}V^{\otimes m}.\nonumber
\eqn
It is possible to prove that the group character of the tensor product representation of the general linear group on the subspace $V^{\lambda}$ is none other than the Schur function $s_{\lambda}(x_1,x_2,\ldots ,x_n)$.\\\noindent
For example we consider the standard tableau:
\begin{figure}[h]
\centering 
\pspicture[](0,0)(3,1)
\psline(-.5,0)(-.5,1)
\psline(-.5,1)(.5,1)
\psline(0,0)(0,1)
\psline(-.5,0)(0,0)
\psline(.5,.5)(.5,1)
\psline(-.5,.5)(.5,.5)
\rput(-.25,.75){1}
\rput(.25,.75){3}
\rput(-.25,.25){2}
\endpspicture .
\end{figure}
\\
With corresponding Young operator given by 
\beqn
P&=e+(13),\nonumber\\
Q&=e-(12),\\
Y&=e+(13)-(12)-(123).
\eqn
We also note that the dimension of the invariant subspaces are given by setting the characteristic values in the Schur function equal to the identity:
\beqn
\textit{dimension of $V^\lambda$}=s_{\lambda}(1,1,\ldots ,1).\nonumber
\eqn 
\subsection{More representations}
From this construction we can build more representations such as
\beqn
V^\mu\otimes V^\nu,\nonumber
\eqn
with group character which corresponds to the \textit{outer} product of two Schur functions which is defined as the pointwise product:
\beqn
s_{\mu}s_{\nu}(x):=s_{\mu}(x)s_{\nu}(x)=\sum_{\lambda}c^{\lambda}_{\mu\nu}s_{\lambda},\nonumber
\eqn
where $|\lambda|=|\mu|+|\nu|$ and the $c^{\lambda}_{\mu\nu}$ are integer coefficients which can be determined by the Littlewood-Richardson rule \cite{littlewood1940,macdonald1979}.
\\
Another way of constructing representations is to consider
\beqn
(V^\mu)^\nu\nonumber.
\eqn
The group character of this representation is given by another type of multiplication of Schur functions known as the \textit{plethysm} (defined formally in Macdonald \cite{macdonald1979}). Here we use Young's tableaux to give a constructive definition. Recall that we have
\beqn
s_{\mu}(x)=\sum_{T'}x^{T'}\nonumber
\eqn
which is a summation of monomials in $x_1,x_2,\ldots ,x_n$. If there are $m$ such monomials in $s_{\mu}(x)$ and these are denoted by $y_i$, $1\leq i\leq m$, then the plethysm is given by
\beqn
s_{\lambda}[s_{\mu}](x)=s_{\lambda}(y)=\sum_{T'}y^{T'}.\nonumber
\eqn
The plethysm $s_{\lambda}[s_{\mu}]$ can be interpreted as giving the character of the representation $(V^\mu)^\lambda$. That is we take $V^\mu$ as the defining representation and symmetrize this representation with $\lambda$.
\\
Finally the \textit{inner} product of two Schur functions is defined as
\beqn
s_{\mu}(x)\ast s_{\nu}(x)=\sum_{\lambda}\gamma^{\lambda}_{\mu\nu}s_{\lambda}(x),\nonumber
\eqn
where $|\mu|=|\nu|=|\lambda|=n$ and the $\gamma^{\lambda}_{\mu\nu}$ are the integer multiplicities of the $\lambda$ representation of $S_n$ occurring in the decomposition of the tensor product of the $\mu$ and $\nu$ representations of $S_n$. The inner product comes into play if we wish to compute the character of $GL(n)\times GL(n')$ on $V^\lambda \otimes V^{\lambda'}$ with $|\lambda|=m,|\lambda'|=m'$. The character of this representation is $s_{\lambda}(x)s_{\lambda'}(y)$ where $x_1,x_2,\ldots ,x_n$ and $y_1,y_2,\ldots ,y_n'$ are the eigenvalues of the relevant group elements in $GL(n)$ and $GL(n')$ respectively. The decomposition of characters is given by the formula
\beqn
s_{\lambda}(x)s_{\lambda'}(y)=\sum_{\rho}\gamma^{\rho}_{\lambda\lambda'}s_{\rho}(xy),\nonumber
\eqn
where $(xy)=(x_1y_1,x_1y_2,\ldots ,x_2y_1,\ldots ,x_ny_n)$ \cite{macdonald1979}.\\
We will often write the Schur function $s_{\lambda}$ simply as $\{\lambda\}$ and the plethysm will sometimes be written as
\beqn
s_{\lambda}[s_{\mu}]=\{\mu\}^{\otimes \{\lambda\}}.\nonumber
\eqn
In practice we compute Schur multiplications by using the group theory software \textbf{Schur} \cite{schur}. For further discussion of Schur functions and their various multiplications see \cite{baker1994,carvalho2001,fauser2005,fauser2005a,littlewood1940}.
\\
As an example consider the defining representation of $GL(n)$ on the tensor product $V^{\otimes m}$. That is
\beqn
\psi\rightarrow g\otimes g\otimes\ldots \otimes g\psi:=\otimes^mg\psi,\nonumber
\eqn 
for $\psi\in V^{\otimes m}$, $g\in GL(n)$. The character of this representation is given by the pointwise product of $m$ copies of $s_{\{1\}}(x)$ and can be decomposed into irreducible characters by using the Littlewood-Richardson coefficients $c^{\lambda}_{\mu\nu}$. In the case where $m=4$, \textbf{Schur} gives
\beqn
s_{\{1\}}(x)s_{\{1\}}(x)s_{\{1\}}(x)&s_{\{1\}}(x)=\\\nonumber
s_{\{4\}}(x)&+3s_{\{31\}}(x)+2s_{\{2^2\}}(x)+3s_{\{21^2\}}(x)+s_{\{1^4\}}(x).\nonumber
\eqn
This tells us that under the action of $GL(n)$ the tensor product $V^{\otimes 4}$ decomposes into irreducible subspaces:
\beqn
V^{\otimes 4}=V^{\{4\}}+3V^{\{31\}}+2V^{\{2^2\}}+3V^{\{21^2\}}+V^{\{1^4\}},\nonumber
\eqn
where the multiplicities account for the number of legal standard tableaux for each partition. 
\subsection{One-dimensional representations}\label{onedrep}
Recall that the dimension of the irreducible representation $\lambda$ is given by $s_{\lambda}(1,1,\ldots ,1)$. It follows that the one-dimensional representations occur when there is only a single semi-standard tableau with shape $\lambda$. In the case when $V$ is $n$-dimensional it should be clear that the one-dimensional representations occur when we have $\lambda=\{k^n\}$ for some $k$.\\
Consider the character of $GL(n)$ on $V^{\{k^n\}}$:
\beqn
s_{\{k^n\}}(x)&=(x_1x_2\ldots x_n)^k,\nonumber\\
&=\det(g)^k.
\eqn
Thus for any $\eta\in V^{\{k^n\}}$ we have
\beqn
\eta\mapsto\det(g)^k\eta,\nonumber
\eqn
under the $\{k^n\}$ representation of $GL(n)$.

\section{Invariant theory}
Given the defining representation of a group $G$ on a vector space $V$, it is possible to define a representation which acts on the vector space of functions $f:V\rightarrow \mathbb{C}$ as 
\beqn\label{induced}
gf:=f\circ g^{-1}.
\eqn
(It is necessary to take the inverse of the group element to ensure that the induced representation satisfies the properties of a group homomorphism.) An \textit{invariant} with \textit{weight} $k\in\mathbb{N}$ is then defined as any function which satisfies 
\beqn\label{inv}
g^{-1}f=f\circ g=\det(g)^kf.
\eqn
We will be exclusively interested in the case where $f$ is a polynomial in the dual vector space $V^*$ with basis elements $\{\xi_1,\xi_2,\ldots ,\xi_n\}$. In order to generate polynomials in this space multiplication is defined pointwise:
\beqn
\xi_a\xi_b(x):=\xi_a(x)\xi_b(x),\qquad 1\leq a,b\leq n.\nonumber
\eqn
The full set of polynomials generated from this construction is denoted as $\mathbb{C}[V]$. A \textit{homogeneous} polynomial satisfies
\beqn
f\circ c1=c^df,\quad c\in\mathbb{C},\nonumber
\eqn
for some positive integer $d$ which is referred as the \textit{degree}. From elementary considerations it follows that $\mathbb{C}[V]$ has the structure of a graded algebra over the degree:
\beqn
\mathbb{C}[V]=\oplus_d\mathbb{C}[V]_{d},\nonumber
\eqn
where $\mathbb{C}[V]_d$ is the set of homogeneous polynomials of degree $d$.\\
By counting the degree of the various algebraic quantities we see that $d=nk$, and we denote
\beqn
\mathbb{C}[V]_d^G=\{f\in \mathbb{C}[V]_d|f\circ g=\det(g)^kf, \forall g\in G\}.\nonumber
\eqn
Of course we have already studied invariant functions on a finite group! The symmetric functions are none other than the set $\mathbb{C}[V]^{S_n}$ with $k=0$. Another example comes from the classical groups which are defined by imposing invariant functions. For example the orthogonal group $O(n)$ acting on $\mathbb{R}^n$ can be considered to be defined by the invariant function
\beqn
r^2(x)=\sum_{i=1}^n x^2_i.\nonumber
\eqn
Consider the tensor product space $\mathbb{C}^2\otimes \mathbb{C}^2$ with associated group action $GL(2)\times GL(2)$. The following relation holds for any $\psi=\sum_{i,j} \psi_{ij}e_{i}\otimes e_{j}\in \mathbb{C}^2\otimes \mathbb{C}^2$:
\beqn
(\psi'_{11}\psi'_{22}-\psi'_{12}\psi'_{21})=\det{g}(\psi_{11}\psi_{22}-\psi_{12}\psi_{21}),\nonumber
\eqn
where $\psi':=g_1\otimes g_2 \psi$ and $\det{g}=\det{g_1}\det{g_2}$. So $(\psi_{11}\psi_{22}-\psi_{12}\psi_{21})$ is an invariant. 
\subsection{Invariants as irreducible representations}
In this section we will show that the group invariant polynomials $\mathbb{C}[V^{\otimes m}]_d^G$ occur exactly as the one-dimensional representations $V^{\{k^n\}}$ in the decomposition of ${(V^{\otimes m})}^{\{d\}}$ with $md=kn$. As a first step consider a vector space $U$. We establish the vector space isomorphism:
\beqn
U^{\{d\}}\cong\mathbb{C}[U]_d.\nonumber
\eqn
This follows by observing that if $U$ has basis $\{u_i\}$, then $U^{\{d\}}$ consists of all tensors of the form
\beqn
\psi=\sum_{i_1,i_2,\ldots,i_d}\psi_{i_1i_2\ldots i_d}u_{i_1}\otimes u_{i_2}\otimes\ldots \otimes u_{i_d},\nonumber
\eqn
where $\psi_{i_1i_2\ldots i_d}$ is invariant under permutations of indices. Now if $U^*$ has basis $\{\zeta_i\}$, consider an arbitrary element of $\mathbb{C}[U]_d$:
\beqn
f=\sum_{i_1,i_2,\ldots,i_d} f_{i_1i_2\ldots i_d}\zeta_{i_1}\zeta_{i_2}..\zeta_{i_d}.\nonumber
\eqn
Clearly $f_{i_1i_2\ldots i_d}$ is also invariant under permutations of indices. This identification establishes the isomorphism. We define the canonical isomorphism $\omega:U^{\{d\}}\rightarrow \mathbb{C}[U]_d$ as
\beqn\label{omegax}
\omega(\psi)=\sum_{i_1,i_2,\ldots,i_d} \psi_{i_1i_2\ldots i_d}\zeta_{i_1}\zeta_{i_2}\ldots \zeta_{i_d},
\eqn
with inverse
\beqn
\omega^{-1}(f)=\sum_{i_1,i_2,\ldots,i_d} f_{i_1i_2\ldots i_d}u_{i_1}\otimes u_{i_2}\otimes\ldots \otimes u_{i_d}.\nonumber
\eqn
\\
By explicit computation
\beqn\label{omegamap1}
\omega(\otimes^dg^{t}\psi)=\omega(\psi)\circ g=g^{-1}\omega(\psi),
\eqn
and
\beqn\label{omegamap2}
\omega^{-1}(g^{-1}f)=\omega^{-1}(f\circ g)=\otimes^dg^{t}\omega^{-1}(f)
\eqn
for all $g\in GL(U)$, $f\in\mathbb{C}[U]_d$ and $\psi\in U^{\{d\}}$.\\
From these considerations we generalize to the case where $U=V^{\otimes m}$ and establish the main result of this section:
\begin{thm}
Consider integers $m,d,k,n$ with $md=kn$ and label the occurrences of $V^{\{k^n\}}$ in the decomposition of $(V^{\otimes m})^{\{d\}}$ by an integer $a$. It follows that
\beqn
\mathbb{C}[V^{\otimes m}]^{GL(n)}_d\cong\oplus_a V^{\{k^n\}}_a.\nonumber
\eqn
\end{thm}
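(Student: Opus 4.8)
The plan is to reduce the general statement to the isomorphism $U^{\{d\}}\cong\mathbb{C}[U]_d$ already established in the text, applied to $U=V^{\otimes m}$, and then to track how the group action of $GL(n)$ interacts with this identification. First I would set $U=V^{\otimes m}$, so that the defining representation of $GL(n)$ on $V$ induces the representation $\otimes^m g$ on $U$, and hence a representation on $U^{\otimes d}$ and on the symmetric part $U^{\{d\}}$. Under the canonical isomorphism $\omega:U^{\{d\}}\to\mathbb{C}[U]_d$ of \eqref{omegax}, equations \eqref{omegamap1}--\eqref{omegamap2} show that $\omega$ intertwines the action of $\otimes^d h$ on $U^{\{d\}}$ with the induced action $f\mapsto f\circ h^{-1}$ on $\mathbb{C}[U]_d$, where here $h$ ranges over $GL(U)$; restricting $h$ to the image of $g\mapsto\otimes^m g$ gives the action relevant to $\mathbb{C}[V^{\otimes m}]$. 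So $\omega$ is a $GL(n)$-module isomorphism between $U^{\{d\}}$ and $\mathbb{C}[U]_d$ with the two group actions matched up (modulo the transpose/inverse bookkeeping, which is harmless since $g\mapsto g^t$ and $g\mapsto g^{-1}$ are automorphisms of $GL(n)$ that fix $\det$).

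Next I would identify the invariant subspace. By definition, $\mathbb{C}[V^{\otimes m}]_d^{GL(n)}$ consists of those $f\in\mathbb{C}[V^{\otimes m}]_d$ with $f\circ g=\det(g')^k f$ for all $g$, where $g'=\otimes^m g$ and $\det(g')=\det(g)^m$; since $md=kn$ the weight is consistently determined. Transporting this condition through $\omega$, it becomes the statement that $\psi\in U^{\{d\}}$ spans (together with its scalar multiples) a subrepresentation on which $GL(n)$ acts by $\det(g)^k$ — that is, $\psi$ lies in the sum of all copies of the one-dimensional representation $V^{\{k^n\}}$ inside $U^{\{d\}}=(V^{\otimes m})^{\{d\}}$, because by Section \ref{onedrep} the character $s_{\{k^n\}}(x)=\det(g)^k$ is exactly the $\{k^n\}$ representation, and (by complete reducibility of integral representations of $GL(n)$) the $\det^k$-isotypic component of any integral $GL(n)$-module is precisely the direct sum of its $\{k^n\}$-irreducible constituents. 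Indexing those constituents by $a$ gives $\mathbb{C}[V^{\otimes m}]_d^{GL(n)}\cong\bigoplus_a V^{\{k^n\}}_a$.

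I expect the main obstacle to be purely bookkeeping rather than conceptual: one must be careful that the one-dimensional representations appearing in $(V^{\otimes m})^{\{d\}}$ are labelled $\{k^n\}$ and not some other rectangular shape, which forces the constraint $md=kn$ (degree count: $V^{\otimes m}$ has "degree $m$" in the matrix entries, symmetrizing $d$ of them gives total degree $md$, and a $\{k^n\}$ component carries degree $kn=\det(g)^k$). It is worth stating explicitly that any integral $GL(n)$-module whose character is a power of a single Schur function decomposes, by Schur/Weyl and complete reducibility, into irreducibles of prescribed shapes, so that selecting the $\det^k$-isotypic part is the same as selecting all $V^{\{k^n\}}$ summands. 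The transpose appearing in \eqref{omegamap1} should be acknowledged once — replacing $g$ by $g^t$ throughout is an isomorphism of $GL(n)$ onto itself preserving $\det$, so it does not affect which isotypic component we land in — and then the proof is complete.
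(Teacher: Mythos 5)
Your argument is essentially the paper's own proof: transport the invariance condition through the canonical isomorphism $\omega$ of (\ref{omegax}) using the intertwining relations (\ref{omegamap1}) and (\ref{omegamap2}), identify the invariants with spans of tensors on which $GL(n)$ acts by $\det(g)^k$, i.e.\ with the $V^{\{k^n\}}$ constituents of $(V^{\otimes m})^{\{d\}}$, and absorb the transpose via $\det(g)=\det(g^t)$. One small slip worth correcting: $\det(\otimes^m g)$ equals $\det(g)^{mn^{m-1}}$ rather than $\det(g)^m$, but this parenthetical does not affect your argument, since the weight $k$ in the definition of $\mathbb{C}[V^{\otimes m}]_d^{GL(n)}$ refers to powers of $\det(g)$ for the $n\times n$ matrix $g$ itself, which is exactly what your degree count $md=kn$ and the remainder of your reasoning correctly use.
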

\begin{proof}
Suppose $f\in\mathbb{C}[V^{\otimes m}]^{GL(n)}_d$. We have
\beqn
\omega^{-1}(g^{-1}f)=\omega^{-1}(\det(g)^kf)&=\det(g)^k\omega^{-1}(f)\nonumber\\
&=\otimes^dg^t\omega^{-1}(f),
\eqn
and hence the representation space $span[\omega^{-1}(f)]$ provides a one-dimensional representation of $GL(n)$. Conversely, suppose that $span[\psi]$ with $\psi\in (V^{\otimes m})^{\{d\}}$ provides a one-dimensional representation of $GL(n)$ such that 
\beqn
\otimes^dg\psi=\det(g)^k\psi.\nonumber 
\eqn
Noting that $\det(g)=\det(g^t)$ it follows that
\beqn
\omega(\otimes^d{g}\psi)=\omega(\det(g)^k\psi)&=\det(g)^k\omega(\psi)\nonumber\\
&=\omega(\otimes^d{g^t}\psi)=g^{-1}\omega(\psi),
\eqn
so we can conclude that $\omega(\psi)\in\mathbb{C}[V^{\otimes m}]^{GL(n)}_d$.
\end{proof}
\subsection{Using Schur functions to count invariants}
By the preceding theorems we conclude the following:
\begin{thm}\label{thm:pleth}
The number of invariants in $\mathbb{C}[V^{\otimes m}]^{GL(n)}_d$ of weight $k$ is equal to the number of occurrences of $\{k^n\}$ in the decomposition of $(\times^m \{1\})^{\otimes \{d\}}$.
\end{thm}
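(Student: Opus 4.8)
The plan is to read both sides of the asserted equality as Schur-function data and match them using the preceding theorem together with the character theory of $GL(n)$ assembled above. First I would restate that theorem: since $\mathbb{C}[V^{\otimes m}]^{GL(n)}_d\cong\oplus_a V^{\{k^n\}}_a$, with $a$ indexing the copies of $V^{\{k^n\}}$ appearing in $(V^{\otimes m})^{\{d\}}$ and $md=kn$ forced by the degree count $d=nk$, the number of weight-$k$ invariants of degree $d$ equals the multiplicity of the irreducible $V^{\{k^n\}}$ in the $GL(n)$-module $(V^{\otimes m})^{\{d\}}$. So the whole task reduces to identifying that multiplicity with the number of occurrences of $\{k^n\}$ in $(\times^m\{1\})^{\otimes\{d\}}$.

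Next I would compute the character of $(V^{\otimes m})^{\{d\}}$. For $g\in GL(n)$ with eigenvalues $x_1,\ldots,x_n$, the operator $\otimes^m g$ on $V^{\otimes m}$ has the $e_{i_1}\otimes\cdots\otimes e_{i_m}$ as eigenvectors with eigenvalues $x_{i_1}\cdots x_{i_m}$, so the character of $V^{\otimes m}$ is the pointwise product $s_{\{1\}}(x)^m$, which is $\times^m\{1\}$ in the notation of the theorem. Now $(V^{\otimes m})^{\{d\}}=Y_{\{d\}}(V^{\otimes m})^{\otimes d}$ is the total $\{d\}$-symmetrization of this module, so running the constructive definition of the plethysm given above --- with the $n^m$ monomials $y_j=x_{i_1}\cdots x_{i_m}$ serving as the ``alphabet'' of $V^{\otimes m}$ in place of a single Schur function --- its character is $s_{\{d\}}(y)=s_{\{d\}}[s_{\{1\}}^m](x)=(\times^m\{1\})^{\otimes\{d\}}$.

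Finally I would invoke that an integral representation of $GL(n)$ is completely reducible and determined up to equivalence by its character, that the character of each irreducible $V^\mu$ is the Schur function $s_\mu$, and that distinct $s_\mu$ with at most $n$ parts are linearly independent while $s_\mu$ vanishes identically in $x_1,\ldots,x_n$ when $\mu$ has more than $n$ parts. Since $\{k^n\}$ has exactly $n$ parts it survives this truncation, so the multiplicity of $V^{\{k^n\}}$ in $(V^{\otimes m})^{\{d\}}$ is exactly the coefficient of $\{k^n\}$ in the Schur expansion of $(\times^m\{1\})^{\otimes\{d\}}$, i.e.\ the number of occurrences of $\{k^n\}$ there. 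With the first paragraph, this is the claim.

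I expect the load-bearing point to be the character identity of the second paragraph: that symmetrizing the \emph{reducible} module $V^{\otimes m}$ with $\{d\}$ produces a character equal to the plethysm $s_{\{d\}}[s_{\{1\}}^m]$. The plethysm was set up above only for $(V^\mu)^\lambda$ with $V^\mu$ irreducible (character a single Schur function), so in the proof I would spell out that the same eigenvalue / generating-function argument goes through verbatim for an arbitrary integral $GL(n)$-module, the relevant alphabet being the multiset of eigenvalues of $g$ on it. The remaining points --- the degree constraint $md=kn$ (if it fails, both counts are zero) and the passage from symmetric functions in unboundedly many variables to genuine $GL(n)$-characters via the length-at-most-$n$ truncation --- are routine bookkeeping.
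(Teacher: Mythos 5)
Your proposal is correct and follows exactly the route the paper intends: the paper states this theorem as an immediate consequence of the preceding isomorphism $\mathbb{C}[V^{\otimes m}]^{GL(n)}_d\cong\oplus_a V^{\{k^n\}}_a$ together with the identification of the character of $(V^{\otimes m})^{\{d\}}$ as the plethysm $(\times^m\{1\})^{\otimes\{d\}}$, and you have simply made explicit the character computation and the multiplicity bookkeeping that the paper leaves implicit. Your remark that the plethysm's constructive (monomial-alphabet) definition applies verbatim to the reducible character $s_{\{1\}}(x)^m$ is exactly the load-bearing point the paper's notation presupposes.
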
\noindent
We now consider the character of $\times^mGL(n)$ on $(V^{\otimes m})^\lambda$:
\beqn\label{directchar}
s_{\lambda}&(x^{(1)}x^{(2)}\ldots x^{(m)})\\&=\sum_{\stackrel{\mu_1,\ldots,\mu_m,}{\nu_1,\ldots,\nu_{m-1}}}\gamma_{\mu_1\nu_1}^\lambda\gamma_{\mu_2\nu_2}^{\nu_1}\ldots \gamma_{\mu_{m-1}\mu_m}^{\nu_{m-2}}s_{\mu_1}(x^{(1)})s_{\mu_2}(x^{(2)})\ldots s_{\mu_m}(x^{(m)}),
\eqn
where $(x^{(1)}x^{(2)}\ldots x^{(m)})=(x^{(1)}_{i_1}x^{(2)}_{i_2}\ldots x^{(m)}_{i_m})_{1\leq i_a\leq n}$. Now each term in (\ref{directchar}) is an irreducible character
\beqn\label{directirrep}
s_{\sigma_1}(x^{(1)})s_{\sigma_2}(x^{(2)})\ldots s_{\sigma_{m}}(x^{(m)}),\nonumber
\eqn
with $|\sigma_i|=|\lambda|$ and multiplicity
\beqn
q(\sigma_1,\sigma_2,\ldots ,\sigma_m;\lambda):=\sum_{\nu_1,\ldots,\nu_{m-2}}\gamma_{\sigma_1\nu_1}^{\lambda}\gamma_{\sigma_2\nu_2}^{\nu_1}\ldots \gamma_{\sigma_{m-1}\sigma_{m}}^{\nu_{m-2}}.\nonumber
\eqn
From the definition of the inner product
\beqn
q(\sigma_1,\sigma_2,\ldots ,\sigma_m;\lambda)=\{\textit{multiplicity of $\lambda$ in $\sigma_1\ast \sigma_2\ldots \ast\sigma_m$}\}.\nonumber
\eqn
The dimension of each of the irreducible representations (\ref{directirrep}) is equal to the product of the dimensions of each component irreducible representations. To identify invariant functions we are led to the following theorem:
\begin{thm}\label{thm:inner}
The number of weight $k$ invariants in $\mathbb{C}[V^{\otimes m}]^{\times^m GL(n)}_d$ is equal to the number of occurrences of the Schur function $\{d\}$ in $\ast^m\{k^n\}$.
\end{thm}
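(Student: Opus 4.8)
The plan is to repeat the argument that precedes Theorem~\ref{thm:pleth}, with the single copy of $GL(n)$ replaced by $\times^m GL(n)$ acting on $V^{\otimes m}$ as the external tensor product $g_1\times\ldots\times g_m\mapsto g_1\otimes\ldots\otimes g_m$ of $m$ copies of the defining representation, and then to isolate one term of the character identity (\ref{directchar}).

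First I would invoke the canonical isomorphism $\omega$ of (\ref{omegax}) with $U=V^{\otimes m}$. The intertwining relations (\ref{omegamap1})--(\ref{omegamap2}) hold for every $g\in GL(V^{\otimes m})$, in particular for $g=g_1\otimes\ldots\otimes g_m$; moreover $\det(g_1\otimes\ldots\otimes g_m)=\det(g_1)\ldots\det(g_m)$, and transposition preserves determinants and respects this factorisation. Hence $f\in\mathbb{C}[V^{\otimes m}]^{\times^m GL(n)}_d$ precisely when $\omega^{-1}(f)$ lies in the subspace of $(V^{\otimes m})^{\{d\}}$ on which $g_1\times\ldots\times g_m$ acts by the scalar $\det(g_1)^k\ldots\det(g_m)^k$. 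Recalling from \S\ref{onedrep} that $V^{\{k^n\}}$ is exactly the one-dimensional module $g\mapsto\det(g)^k$, that subspace is the isotypic component for $W:=V^{\{k^n\}}\otimes\ldots\otimes V^{\{k^n\}}$ ($m$ factors), and since $\omega$ is a bijection and $W$ is one-dimensional, the number of weight-$k$ invariants equals the multiplicity of the irreducible $\times^m GL(n)$-module $W$ in $(V^{\otimes m})^{\{d\}}$.

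Next I would read that multiplicity off (\ref{directchar}). As recorded just before the statement, the character of $\times^m GL(n)$ on $(V^{\otimes m})^{\{d\}}$ is the left-hand side of (\ref{directchar}) with $\lambda=\{d\}$, which is a sum over $m$-tuples $(\mu_1,\ldots,\mu_m)$ with $|\mu_a|=d$ of the irreducible characters $s_{\mu_1}(x^{(1)})\ldots s_{\mu_m}(x^{(m)})$ of the modules $V^{\mu_1}\otimes\ldots\otimes V^{\mu_m}$, each with multiplicity $q(\mu_1,\ldots,\mu_m;\{d\})$ equal to the multiplicity of $\{d\}$ in $\mu_1\ast\ldots\ast\mu_m$. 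Since $(V^{\otimes m})^{\{d\}}$ is a polynomial, hence completely reducible, representation of $\times^m GL(n)$, the multiplicity of $W$ is the coefficient of the single term with every $\mu_a=\{k^n\}$, that is $q(\{k^n\},\ldots,\{k^n\};\{d\})$, the number of occurrences of $\{d\}$ in $\ast^m\{k^n\}$. (When $d\neq kn$ this term is absent from (\ref{directchar}) and, simultaneously, the degree-$kn$ symmetric function $\ast^m\{k^n\}$ contains no $\{d\}$, so both counts are $0$.) Combined with the first step, this proves the claim.

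I expect the only delicate points to lie in the first step: verifying that $\omega$ intertwines the $\times^m GL(n)$-action on $(V^{\otimes m})^{\{d\}}$ with the induced action on degree-$d$ polynomials up to the transpose, exactly as in (\ref{omegamap1})--(\ref{omegamap2}) for the full linear group; and checking that $W$ is the \emph{only} irreducible constituent of $(V^{\otimes m})^{\{d\}}$ affording the character $\det(g_1)^k\ldots\det(g_m)^k$, since an irreducible $V^{\mu_1}\otimes\ldots\otimes V^{\mu_m}$ is one-dimensional only when each $\mu_a=\{k_a^{\,n}\}$, and then equality of characters forces $k_1=\ldots=k_m=k$. Everything after that is the $\mu_1=\ldots=\mu_m=\{k^n\}$ slice of (\ref{directchar}).
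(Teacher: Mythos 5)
Your proposal is correct and follows essentially the same route as the paper, which leaves Theorem~\ref{thm:inner} unproved beyond the discussion surrounding (\ref{directchar}): you transport invariants to one-dimensional subrepresentations via the isomorphism $\omega$ exactly as in the earlier theorem for the diagonal $GL(n)$ action, and then read the multiplicity of $V^{\{k^n\}}\otimes\ldots\otimes V^{\{k^n\}}$ off the inner-product decomposition of the character of $\times^m GL(n)$ on $(V^{\otimes m})^{\{d\}}$. Your explicit checks (that $\omega$ intertwines the external product action, that the only one-dimensional constituent with the right character is the all-$\{k^n\}$ term, and the vacuous case $d\neq kn$) merely make precise what the paper takes for granted.
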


\section{Invariants of the general linear group}
We have established that any one-dimensional representation of $GL(n)$ occurs as a partition of the form $\{k^n\}$. This is because the columns of the partitions correspond to the anti-symmetrization process of Young operators and it is clear that if we anti-symmetrize $n$ elements $n$ times then there will only be a single independent element remaining. Presently we will present a generic scheme which allows us to generate the exact polynomial form of these representations.
\\
Consider the definition of the determinant of a matrix $g$:
\beqn
\det(g)=\sum_{\sigma\in S_n}sgn(\sigma)g_{1\sigma(1)}g_{2\sigma(2)}\ldots g_{n\sigma(n)}.\nonumber
\eqn 
By defining the (anti-symmetric) Levi-Civita tensor
\beqn
\epsilon:=\sum_{i_1,i_2,\ldots,i_n} \epsilon_{i_1i_2\ldots i_n}e_{i_1}\otimes e_{i_2}\otimes\ldots \otimes e_{i_n},\nonumber
\eqn
where $\epsilon_{\sigma(1)\sigma(2)\ldots \sigma(n)}:=sgn(\sigma)$, it follows that the determinant can be expressed as
\beqn\label{epsdet}
\det(g)={\frac{1}{n!}}\sum_{\stackrel{i_1,i_2,\ldots,i_n,}{j_1,j_2,\ldots,j_n}}^{} g_{i_1j_1}g_{i_2j_2}\ldots g_{i_nj_n}\epsilon_{i_1i_2\ldots i_n}\epsilon_{j_1j_2\ldots j_n}.
\eqn
Presently we will show that
\beqn
\epsilon':=g\otimes g\otimes\ldots \otimes g\epsilon=\det(g)\epsilon,\nonumber
\eqn
for all matrices $g$. In components we have
\beqn
\epsilon'_{i_1i_2\ldots i_n}=\sum_{j_1,j_2,\ldots,j_n} g_{i_1j_1}g_{i_2j_2}\ldots g_{i_nj_n}\epsilon_{j_1j_2\ldots j_n},\nonumber
\eqn
and it is clear that $\epsilon'_{i_1i_2\ldots i_n}$ is completely anti-symmetric under interchange of indices and hence must be proportional to $\epsilon_{i_1i_2\ldots i_n}$. Finally we use (\ref{epsdet}) to conclude that
\beqn
\epsilon'=\det(g)\epsilon.\nonumber
\eqn
\begin{thm}\label{invariantcond}
Consider a function $f: V^{\otimes n}\times V^{\otimes m}\rightarrow\mathbb{C}$ which satisfies the conditions:
\begin{enumerate}
\item {For fixed $\chi\in V^{\otimes n}$ we have $f\in\mathbb{C}[V^{\otimes m}]_d$. That is $f(\chi,c\psi)=c^df(\chi,\psi)$.}
\item {For fixed $\psi\in V^{\otimes m}$ we have $f\in\mathbb{C}[V^{\otimes n}]_k$. That is $f(c\chi,\psi)=c^kf(\chi,\psi)$.}
\item {$f(\chi,\otimes^mg\psi)=f(\otimes^ng^t\chi,\psi)$.}
\end{enumerate}
The function $f_{\epsilon}:V^{\otimes m}\rightarrow\mathbb{C}$ given by
\beqn
f_{\epsilon}(\psi):=f(\epsilon,\psi)\nonumber
\eqn
then satisfies $f_{\epsilon}(\otimes^mg\psi)=\det(g)^kf_{\epsilon}(\psi)$.
\end{thm}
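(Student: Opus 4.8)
The plan is to chain the three hypotheses together with the transformation law for the Levi-Civita tensor established immediately above the theorem, namely $\otimes^n g\,\epsilon=\det(g)\epsilon$ for every $n\times n$ matrix $g$. Hypothesis 1 will play no role in the identity itself; it only certifies that $f_{\epsilon}$, being the restriction of a degree-$d$ polynomial in the second slot, lies in $\mathbb{C}[V^{\otimes m}]_d$, so that the eventual conclusion places $f_\epsilon$ in $\mathbb{C}[V^{\otimes m}]^{GL(n)}_d$ with $md=kn$.

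First I would apply hypothesis 3 with $\chi=\epsilon$ to move the group action off the $V^{\otimes m}$ slot and onto the $V^{\otimes n}$ slot as a transpose: $f_{\epsilon}(\otimes^m g\,\psi)=f(\epsilon,\otimes^m g\,\psi)=f(\otimes^n g^{t}\epsilon,\psi)$. Next I would invoke the Levi-Civita identity with the matrix $g^{t}$ in place of $g$, together with $\det(g^{t})=\det(g)$, to get $\otimes^n g^{t}\epsilon=\det(g)\,\epsilon$, hence $f(\otimes^n g^{t}\epsilon,\psi)=f(\det(g)\,\epsilon,\psi)$. Finally I would use hypothesis 2, homogeneity of degree $k$ in the first argument, to extract the scalar: $f(\det(g)\,\epsilon,\psi)=\det(g)^{k}f(\epsilon,\psi)=\det(g)^{k}f_{\epsilon}(\psi)$. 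Concatenating the three equalities gives $f_{\epsilon}(\otimes^m g\,\psi)=\det(g)^{k}f_{\epsilon}(\psi)$, which is exactly the invariance condition (\ref{inv}) of weight $k$.

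Since each step is a single substitution, there is no genuine obstacle here; the proof is essentially a bookkeeping exercise. The one place that warrants care is the handling of transposes: one must be sure hypothesis 3 really delivers $g^{t}$ acting on $\epsilon$ (not $g$), and then that $\det(g^{t})=\det(g)$, so that the weight emerges symmetrically as $\det(g)^{k}$. It is also worth a line to note that $\epsilon$, as defined by the coefficients $\epsilon_{i_1\ldots i_n}=\mathrm{sgn}(\sigma)$, indeed belongs to $V^{\otimes n}$ and hence lies in the domain of the first slot, so that hypothesis 2 legitimately applies to it.
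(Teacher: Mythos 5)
Your proposal is correct and follows exactly the same chain of substitutions as the paper's proof: definition of $f_\epsilon$, hypothesis 3 to shift the action to the first slot, the identity $\otimes^n g^t\epsilon=\det(g^t)\epsilon=\det(g)\epsilon$, and then hypothesis 2 to extract $\det(g)^k$. Your additional remarks on the role of hypothesis 1 and on checking that $\epsilon\in V^{\otimes n}$ are sensible but not needed beyond what the paper already does.
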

\begin{proof}
We have
\beqn
f_{\epsilon}(\otimes^mg\psi)=f(\epsilon,\otimes^mg\psi)=f(\otimes^ng^t\epsilon,\psi)=f(\det(g)\epsilon,\psi)&=\det(g)^kf(\epsilon,\psi)\\
&=\det(g)^kf_{\epsilon}(\psi).\nonumber
\eqn
\end{proof}\noindent
This theorem gives us some idea of how to explicitly construct invariants for the general linear group. The rest of this chapter will be devoted to the illustration of several examples.
\subsection{Invariants of $GL(n)$ on $V^{\otimes m}$}
For this case the number of invariants of $GL(n)$ on $V^{\otimes m}$ is given by the multiplicity of $\{k^n\}$ in $(\times^m\{1\})^{\otimes\{d\}}$ with $nk=md$. Here we will consider $m=2$ and the cases $n=2,3,4.$ 
\subsubsection{The case of $GL(2)$}
In the case that $n=m=2$, the possible degrees of the invariants are 
\beqn
d=1,2,3,4,\ldots \nonumber
\eqn
and using \textbf{Schur} we find 
\beqn
(\{1\}\times \{1\})^{\otimes\{1\}}&\ni \{1^2\},\nonumber\\
(\{1\}\times \{1\})^{\otimes\{2\}}&\ni 2\{2^2\},\\
(\{1\}\times \{1\})^{\otimes\{3\}}&\ni 2\{3^2\},\\
(\{1\}\times \{1\})^{\otimes\{4\}}&\ni 3\{4^2\},\\
(\{1\}\times \{1\})^{\otimes\{5\}}&\ni 3\{5^2\},\\
(\{1\}\times \{1\})^{\otimes\{6\}}&\ni 4\{6^2\}.
\eqn
At each degree the correct number of invariants can be built from 
\beqn\label{eq:inv1}
f_1(\psi)&:=\sum_{i_1,i_2} \psi_{i_1i_2}\epsilon_{i_1i_2}=(\psi_{12}-\psi_{21}),\\
f_2(\psi)&:=\sum_{i_1,i_2,j_1,j_2} \psi_{i_1i_2}\psi_{j_1j_2}\epsilon_{i_1j_1}\epsilon_{i_2j_2}=(\psi_{11}\psi_{22}-\psi_{11}\psi_{22}),
\eqn
and are non-zero, algebraically independent, and by inspection satisfy (\ref{invariantcond}).
\subsubsection{The case of $GL(3)$}
In the case that $n=3,m=2$, the possible degrees of invariants are 
\beqn
d=3,6,9,12,\ldots \nonumber
\eqn
Computing plethysms in \textbf{Schur} gives
\beqn
(\{1\}\times \{1\})^{\otimes\{3\}}&\ni 2\{2^3\},\nonumber\\
(\{1\}\times \{1\})^{\otimes\{6\}}&\ni 3\{4^3\},\\
(\{1\}\times \{1\})^{\otimes\{9\}}&\ni 4\{6^3\}.
\eqn
At each degree the correct number of invariants can be built from the two $d=3$ invariants:
\beqn\label{eq:inv2}
f_1(\psi):&=\sum_{i_1,i_2,j_1,j_2,k_1,k_2}\psi_{i_1i_2}\psi_{j_1j_2}\psi_{k_1k_2}\epsilon_{i_1j_1j_2}\epsilon_{i_2j_2k_2}\\
&=-\psi_{13} \psi_{22} \psi_{31}+\psi_{12} \psi_{23} \psi_{31}+\psi_{13} \psi_{21} \psi_{32}\\&\hspace{20mm}-\psi_{11} \psi_{23} \psi_{32}-\psi_{12} \psi_{21} \psi_{33}+\psi_{11} \psi_{22} \psi_{33},\\
f_2(\psi):&=\sum_{i_1,i_2,j_1,j_2,k_1,k_2}\psi_{i_1i_2}\psi_{j_1j_2}\psi_{k_1k_2}\epsilon_{i_1i_2j_1}\epsilon_{j_2k_1k_2}\\
&=\psi_{1 3}^2 \psi_{2 2} - \psi_{1 2} \psi_{1 3} \psi_{2 3} - 
    \psi_{1 3} \psi_{2 1} \psi_{2 3} + \psi_{1 1} \psi_{2 3}^2\\& - 
    2 \psi_{1 3} \psi_{2 2} \psi_{3 1} + 3 \psi_{1 2} \psi_{2 3} \psi_{3 1} - 
    \psi_{2 1} \psi_{2 3} \psi_{3 1} + \psi_{2 2} \psi_{3 1}^2\\& - 
    \psi_{1 2} \psi_{1 3} \psi_{3 2} + 3 \psi_{1 3} \psi_{2 1} \psi_{3 2} - 
    2 \psi_{1 1} \psi_{2 3} \psi_{3 2} - \psi_{1 2} \psi_{3 1} \psi_{3 2}\\& - 
    \psi_{2 1} \psi_{3 1} \psi_{3 2} + \psi_{1 1} \psi_{3 2}^2 + \psi_{1 2}^2 \psi_{3 3}\\& - 
    2 \psi_{1 2} \psi_{2 1} \psi_{3 3} + \psi_{2 1}^2 \psi_{3 3},
\eqn
which are non-zero, linearly independent and satisfy (\ref{invariantcond}).
\subsubsection{The case of $GL(4)$}
In the case that $n=4,m=2$, the possible degrees of the invariants are
\beqn
d=2,4,6,\ldots \nonumber
\eqn 
and \textbf{Schur} gives
\beqn
(\{1\}\times \{1\})^{\otimes\{2\}}&\ni \{1^4\},\nonumber\\
(\{1\}\times \{1\})^{\otimes\{4\}}&\ni 3\{2^4\},\\
(\{1\}\times \{1\})^{\otimes\{6\}}&\ni 3\{3^4\},\\
(\{1\}\times \{1\})^{\otimes\{8\}}&\ni 6\{4^4\}.
\eqn
The correct number of invariants can be constructed from three invariants of degree $d=2,4,4$ respectively:
\beqn\label{sumconvention}
f_1(\psi):&=\sum_{i_1,i_2,j_1,j_2}\psi_{i_1i_2}\psi_{j_1j_2}\epsilon_{i_1i_2j_1j_2},\\
f_2(\psi):&=\sum_{i_1,i_2,j_1,j_2,k_1,k_2,l_1,l_2}\psi_{i_1i_2}\psi_{j_1j_2}\psi_{k_1k_2}\psi_{l_1l_2}\epsilon_{i_1j_1k_1l_1}\epsilon_{i_2j_2k_2l_2},\\
f_3(\psi):&=\sum_{i_1,i_2,j_1,j_2,k_1,k_2,l_1,l_2}\psi_{i_1i_2}\psi_{j_1j_2}\psi_{k_1k_2}\psi_{l_1l_2}\epsilon_{i_1i_2j_1k_1}\epsilon_{j_2k_2l_1l_2},\eqn
which by explicit expansion (either by hand or using a computer algebra package)  are non-zero, algebraically independent and satisfy (\ref{invariantcond}).\footnote{As the number of indices in these expressions is becoming prohibitively large, we will adopt a convention from now until the end of the thesis that, unless otherwise indicated, any indices that appear after a summation sign are to be summed over appropriate bounds.}
\subsection{Invariants of $\times^mGL(n)$ on $V^{\otimes m}$}
We consider the existence of invariants $q:V^{\otimes m}\rightarrow \mathbb{C}$ which take the form
\beqn
q(gx)=\det(g)^kq(x),\nonumber
\eqn
for all $g=g_{1}\otimes g_{2}\otimes\ldots \otimes g_{m}$ with $g_{a}\in GL(n)$ for $1\leq a\leq m$. We mimic the construction of the previous section and give sufficient conditions for the existence of such functions.
\begin{thm}\label{invariantcond2}
Consider a function $q: (\times^m V^{\otimes n})\times V^{\otimes m}\rightarrow\mathbb{C}$ which satisfies the conditions:
\begin{enumerate}
\item {For fixed $\psi\in V^{\otimes m}$ we have \beqn q(\chi_1,\ldots ,c\chi_a,\ldots ,\chi_m;\psi)=c^kq(\chi_1,\ldots ,\chi_a,\ldots ,\chi_m;\psi),\nonumber\eqn for each $1\leq a\leq m$.}
\item {For fixed $\chi_a\in V^{\otimes n}$, $1\leq a\leq m$, we have \beqn q(\chi_1,\ldots ,\chi_m;c\psi)=c^dq(\chi_1,\ldots ,\chi_m;\psi).\nonumber\eqn}
\item {For all $g=g_1\otimes g_2\otimes\ldots \otimes g_m$ we have\beqn q(\chi_1,\ldots ,\chi_m;g\psi)=q(\otimes^ng_1^t\chi_1,\ldots ,\otimes^ng_m^t\chi_m;\psi).\nonumber\eqn}
\end{enumerate}
The function $q_{\epsilon}:V^{\otimes m}\rightarrow\mathbb{C}$ given by
\beqn
q_{\epsilon}(\psi):=q(\epsilon,\epsilon,\ldots ,\epsilon;\psi),\nonumber
\eqn
satisfies $q_{\epsilon}(g\psi)=\det(g)^kf_{\epsilon}(\psi)$ for all $g=g_1\otimes g_2\otimes\ldots \otimes g_m$.
\end{thm}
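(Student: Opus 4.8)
The plan is to run exactly the argument used to prove Theorem~\ref{invariantcond}, but now substituting the Levi-Civita tensor $\epsilon$ into all $m$ of the ``fixed'' arguments simultaneously and then using the homogeneity in each of those arguments one at a time. First I would write, for $g=g_1\otimes g_2\otimes\ldots\otimes g_m$,
\[
q_{\epsilon}(g\psi)=q(\epsilon,\epsilon,\ldots,\epsilon;g\psi),
\]
and apply condition~3, which transfers the factor $g_a$ onto the $a$-th slot as $\otimes^n g_a^t$, giving $q(\otimes^n g_1^t\epsilon,\ldots,\otimes^n g_m^t\epsilon;\psi)$.

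Next I would invoke the Levi-Civita identity established immediately before Theorem~\ref{invariantcond} --- namely $\otimes^n h\,\epsilon=\det(h)\,\epsilon$ for every $h\in GL(n)$ --- with $h=g_a^t$ and $\det(g_a^t)=\det(g_a)$, to replace each argument and obtain $q(\det(g_1)\epsilon,\ldots,\det(g_m)\epsilon;\psi)$. Then, peeling the scalars off one argument at a time using condition~1 (degree-$k$ homogeneity in each $\chi_a$), this equals $\bigl(\prod_{a=1}^m\det(g_a)\bigr)^{k}q(\epsilon,\ldots,\epsilon;\psi)=\det(g)^k q_{\epsilon}(\psi)$, where $\det(g):=\prod_a\det(g_a)$ in line with the convention fixed in the $\mathbb{C}^2\otimes\mathbb{C}^2$ example above. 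Condition~2 is not used in deriving the transformation law; it only guarantees $q_{\epsilon}\in\mathbb{C}[V^{\otimes m}]_d$, so that $q_{\epsilon}$ is an invariant of weight $k$ in the sense of (\ref{inv}) and is of the type counted by Theorem~\ref{thm:inner}.

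I do not expect any genuine obstacle here: the proof is a finite iteration over the $m$ arguments, and each step is forced by one of the three hypotheses. The only thing requiring care is the bookkeeping --- checking that condition~3 really does decouple into a single transposed factor per argument, and that the degree $d$ appearing in condition~2 is the one dictated by the relation $md=kn$ used throughout the preceding theorems, so that the resulting $q_{\epsilon}$ has the correct weight and degree to be picked out by the character/plethysm counts.
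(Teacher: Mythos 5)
Your proposal is correct and is essentially identical to the paper's own proof: apply condition~3 to move each $g_a$ onto its slot as $\otimes^n g_a^t$, use $\otimes^n g_a^t\epsilon=\det(g_a)\epsilon$, and then extract $\det(g_a)^k$ from each argument via condition~1. Your observation that condition~2 is only needed to place $q_\epsilon$ in $\mathbb{C}[V^{\otimes m}]_d$ rather than to derive the transformation law is also accurate.
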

\begin{proof}
We have
\beqn
q_{\epsilon}(g\psi)&=q(\epsilon,\ldots ,\epsilon;g\psi)\nonumber\\
&=q(\otimes^ng_1^t\epsilon,\ldots ,\otimes^ng_m^t\epsilon;\psi)\\
&=q(\det(g_1)\epsilon,\ldots ,\det(g_m)\epsilon;\psi)\\
&=\det(g_1)^k\det(g_2)^k\ldots \det(g_m)^kq(\epsilon,\ldots ,\epsilon;\psi)\\
&=\det(g)^kq_{\epsilon}(\psi).
\eqn
\end{proof}\noindent
With these sufficient conditions in mind we will use the Schur functions to ascertain existence of these invariants and give examples of their exact form. 
\subsubsection{The case of $k=1$}
From (\ref{thm:inner}) the existence of such invariants requires that for the $m$-fold inner product we have:
\beqn
\{1^n\}\ast \{1^n\}\ast \ldots \ast \{1^n\}\ni \{n\}\nonumber.
\eqn
Now for even $m$ we have
\beqn
\{1^n\}\ast \{1^n\}\ast \ldots \ast \{1^n\}=\{n\}\nonumber
\eqn
and for odd $m$
\beqn
\{1^n\}\ast \{1^n\}\ast \ldots \ast \{1^n\}=\{1^n\}\nonumber.
\eqn
So that there exists a single invariant for each even $m$ and no invariants for odd $m$.\\
For $m=2$ and $n=2,3,4$ these invariants are
\beqn\label{concurrence}
{\det}_2(\psi)&=\sum\psi_{i_1i_2}\psi_{j_1j_2}\epsilon_{i_1j_1}\epsilon_{i_2j_2},\\
{\det}_3(\psi)&=\sum\psi_{i_1i_2}\psi_{j_1j_2}\psi_{k_1k_2}\epsilon_{i_1j_1k_1}\epsilon_{i_2j_2k_2},\\
{\det}_4(\psi)&=\sum\psi_{i_1i_2}\psi_{j_1j_2}\psi_{k_1k_2}\psi_{l_1l_2}\epsilon_{i_1j_1k_1l_1}\epsilon_{i_2j_2k_2l_2},
\eqn
which can be seen to satisfy $(\ref{invariantcond2})$ and can be generalized in the obvious manner for any $n$. (These polynomials should be distinguished from the determinant of a matrix; although their functional form is identical to that of the determinant, they arise as invariant functions on the linear space $V\otimes V$.) For $m=4$ and $n=2,3,4$ we can define:
\beqn\label{eq:quangles}
Q_2(\psi)&=\sum\psi_{i_1i_2i_3i_4}\psi_{j_1j_2j_3j_4}\epsilon_{i_1j_1}\epsilon_{i_2j_2}\epsilon_{i_3j_3}\epsilon_{i_4,j_4},\\
Q_3(\psi)&=\sum\psi_{i_1i_2i_3i_4}\psi_{j_1j_2j_3j_4}\psi_{k_1k_2k_3k_4}\epsilon_{i_1j_1k_1}\epsilon_{i_2j_2k_2}\epsilon_{i_3j_3k_3}\epsilon_{i_4j_4k_4},\\
Q_4(\psi)&=\sum\psi_{i_1i_2i_3i_4}\psi_{j_1j_2j_3j_4}\psi_{k_1k_2k_3k_4}\psi_{l_1l_2l_3l_4}\epsilon_{i_1j_1k_1l_1}\epsilon_{i_2j_2k_2l_2}\epsilon_{i_3j_3k_3l_3}\epsilon_{i_4j_4k_4l_4},
\eqn
which can also be seen to satisfy $(\ref{invariantcond2})$ and can be generalized in the obvious way for arbitrary $n$. We refer to these invariants as \textit{quangles}.
\subsubsection{The case of $\times^mGL(2)$ and $k=2$}
For $m=2,3,4$ \textbf{Schur} shows that
\beqn
\{2^2\}&\ast\{2^2\}\ni\{4\},\nonumber\\
\{2^2\}&\ast\{2^2\}\ast\{2^2\}\ni\{4\},\nonumber\\
\{2^2\}&\ast\{2^2\}\ast\{2^2\}\ast\{2^2\}\ni 3\{4\}.
\eqn
At $m=2$ the required invariant is the pointwise product of ${\det}_2$ with itself. Whereas at $m=3$ we have the \textit{tangle}\footnote{The tangle is known and used in physics to analyse multiparticle entanglement in quantum mechanics. This will be reviewed in Chapter \ref{chap3}} 
\beqn\label{tangle2}
\mathcal{T}_2(\psi)=\sum \psi_{i_1i_2i_3}\psi_{j_1j_2j_3}\psi_{k_1k_2k_3}\psi_{l_1l_2l_3}\epsilon_{i_1j_1}\epsilon_{i_2j_2}\epsilon_{k_1l_1}\epsilon_{k_2l_2}\epsilon_{i_3l_3}\epsilon_{j_3k_3}.
\eqn
At $m=4$, the pointwise product of $Q_2$ with itself forms a $k=2$ invariant and we have the additional invariants: 
\beqn
\mathcal{I}_1:=\sum\psi_{i_1i_2i_3i_4}\psi_{j_1j_2j_3j_4}\psi_{k_1k_2k_3k_4}\psi_{l_1l_2l_3l_4}\epsilon_{i_1j_1}\epsilon_{i_2j_2}\epsilon_{k_1l_1}\epsilon_{k_2l_2}\epsilon_{i_3k_3}\epsilon_{i_4k_4}\epsilon_{j_3l_3}\epsilon_{j_4l_4},\nonumber\\
\mathcal{I}_2:=\sum\psi_{i_1i_2i_3i_4}\psi_{j_1j_2j_3j_4}\psi_{k_1k_2k_3k_4}\psi_{l_1l_2l_3l_4}\epsilon_{i_1j_1}\epsilon_{i_2l_2}\epsilon_{i_3l_3}\epsilon_{i_4k_4}\epsilon_{j_2k_2}\epsilon_{j_3k_3}\epsilon_{j_4l_4}\epsilon_{k_1l_1},\nonumber\\
\eqn
which satisfy (\ref{invariantcond2}) and can be shown to be non-zero and algebraically independent.
\subsubsection{The case of $GL(3)^{\times m}$ and $k=2$}
For $m=2,3,4$ \textbf{Schur} shows that
\beqn
\{2^3\}\ast\{2^3\}&\ni\{6\},\nonumber\\
\{2^3\}\ast\{2^3\}\ast\{2^3\}&\ni \{6\},\nonumber\\
\{2^3\}\ast\{2^3\}\ast\{2^3\}\ast\{2^3\}&\ni 4\{6\}.
\eqn
At $m=2$, the pointwise product of ${\det}_3$ with itself forms a $k=2$ invariant and at $m=3$ the tangle can be generalized to the $n=3$ case:
\beqn\label{tangle3}
\mathcal{T}_3(\psi)=\sum\psi_{i_1i_2i_3}\psi_{j_1j_2j_3}\psi_{k_1k_2k_3}\psi_{l_1l_2l_3}\psi_{m_1m_2m_3}\psi_{n_1n_2n_3}\\\cdot\epsilon_{i_1j_1k_1}\epsilon_{j_2k_2l_2}\epsilon_{k_3l_3m_3}\epsilon_{l_1m_1n_1}\epsilon_{m_2n_2i_2}\epsilon_{n_3i_3j_3},
\eqn
which by explicit expansion can be shown to be non-zero.\\
The invariants at $m=4$ remain uninvestigated.
\subsubsection{The case of $\times^mGL(4)$ and $k=2$}
For $m=2,3,4$ \textbf{Schur} shows that
\beqn
\{2^4\}\ast\{2^4\}&\ni\{8\},\nonumber\\
\{2^4\}\ast\{2^4\}\ast\{2^4\}&\ni \{8\},\nonumber\\
\{2^4\}\ast\{2^4\}\ast\{2^4\}\ast\{2^4\}&\ni 7\{8\}.
\eqn
At $m=2$ the pointwise product of ${\det}_4$ with itself is a $k=2$ invariant and at $m=3$ the tangle can again be generalized:
\beqn\label{tangle4}
\mathcal{T}_4=\sum\psi_{i_1j_1k_1}\psi_{i_2j_2k_2}\psi_{i_3j_3k_3}\psi_{i_4j_4k_4}\psi_{i_5j_5k_5}\psi_{i_6j_6k_6}\psi_{i_7j_7k_7}\psi_{i_8j_8k_8}\\
\cdot\epsilon_{i_1i_2i_3i_4}\epsilon_{i_5i_6i_7i_8}\epsilon_{j_1j_5j_4j_8}\epsilon_{j_2j_6j_3j_7}\epsilon_{k_1k_5k_2k_6}\epsilon_{k_3k_7k_4k_8},
\eqn
and shown to be non-zero by explicit expansion.\\
The invariants at $m=4$ remain uninvestigated.
\section{Closing remarks}
In this chapter we have given a review of the use of the character theory to build the irreducible representations of the general linear group. We have demonstrated the concrete connection between the one-dimensional representations and the classical invariants, and have presented theorems that allow us to count these invariants at given degree $d$ and weight $k$.

\chapter{Entanglement and phylogenetics}\label{chap3}
Stochastic methods that model character distributions in aligned sequences are part of the standard armoury of phylogenetic analysis \cite{felsenstein1981,felsenstein2004,nei2000,rodriguez1990,steel1998}. The evolutionary relationships are usually represented as a bifurcating tree directed in time. It is remarkable that there is a strong conceptual and mathematical analogy between the construction of phylogenetic trees using stochastic methods, and the process of scattering in particle physics \cite{jarvis2001}. It is the purpose of the present chapter to show that there is much potential in taking an algebraic, group theoretical approach to the problem where the inherent symmetries of the system can be fully appreciated and utilized.\\
Entanglement is of considerable interest in physics and there has been much effort to elucidate the nature of this curious physical phenomenon \cite{bernevig2003,dur2000,guhne2003,linden1998,werner2001}. Entanglement has its origin in the manner in which the state probabilities of a quantum mechanical system must be constructed from the individual state probabilities of its various subsystems. Whenever there are global conserved quantities, such as spin, there exist entangled states where the choice of measurement of one subsystem can affect the measurement outcome of another subsystem no matter how spatially separated the two subsystems are. This curious physical property is represented mathematically by nonseparable tensor states. Remarkably, if the pattern frequencies of phylogenetic analysis are interpreted in a tensor framework it is possible to show that the branching process itself introduces entanglement into the state. In the context of phylogenetics this element of entanglement corresponds to nothing other than that of phylogenetic relation. This is a mathematical curiosity that can be studied using methods from quantum physics. This is a novel way of approaching phylogenetic analysis which has not been explored before.\\
This chapter will begin by establishing the formalism of quantum mechanics and introducing the concept of entanglement through an elementary example. A short review of the use of group invariant functions to analyse  entanglement will be presented. The stochastic model of a phylogenetic tree will then be developed in its standard form, followed by a discussion which establishes a presentation of this model in the form of a group action on a tensor product space as used in quantum mechanics. The invariant functions used to study entanglement will then be examined in the context of phylogenetic trees. 

\noindent
\textbf{Note:} Elements of this chapter are extracted from \cite{sumner2005}.

\section{Quantum mechanics}
The formalism of the quantum mechanical description of physical systems amounts to four fundamental postulates.
\begin{postulate}
The mathematical description of any physical system occurs as a \textit{state vector} $\psi$ in a complex vector space, $V$, together with an inner product known as a \textit{Hilbert space} $\mathcal{H}=(V,.)$. 
\end{postulate}\noindent
For a given physical system it is not \textit{a priori} apparent exactly how the Hilbert space should be chosen. As will be elaborated later, a basic property of quantum mechanics is that it is not possible to determine (in practice or in principle) the exact and complete configuration of a physical system. Thus, the Hilbert space is chosen not to represent all possible configurations of the system, but rather to represent whichever part is observable and under consideration in a given experimental setup. For example the full description of an electron is given by the tensor product of the representation space of the spin, $\mathbb{C}^2$, with that of the representation space of spatial position, square integrable functions $\{f:\mathbb{R}^3\rightarrow \mathbb{C}\}$. However, one is often only interested in the spin degrees of freedom of the system and simply ignores the position component of the state vector.\\
For our purposes it will be enough to consider only the case where $\mathcal{H}$ is the finite dimensional vector space with inner product given in terms of notation from Chapter \ref{chap2} as
\beqn
(\psi,\varphi)=\overline{\psi}(\varphi).\nonumber
\eqn
\begin{postulate}
The dynamical evolution of any physical system is governed by the linear equation
\beqn\label{schrodinger}
i\hbar\frac{\partial\psi(t)}{\partial t}=H(t)\psi(t),
\eqn
where $\hbar$ is Planck's constant and $H(t)$ is a \textit{Hermitian} operator:
\beqn
(\psi,H\varphi)=(H\psi,\varphi),\nonumber
\eqn
known as the \textit{Hamiltonian}. Completely equivalently, the dynamical evolution is described by solutions of (\ref{schrodinger}):
\beqn
\psi(t_2)=U(t_2,t_1)\psi(t_1),\nonumber
\eqn
where $U(t_2,t_1)$ is a unitary operator
\beqn
(U\psi,U\varphi)=(\psi,\varphi).\nonumber
\eqn
\end{postulate}\noindent
From this postulate it is not apparent how the Hamiltonian should be chosen in any particular case. Historically, Dirac formalized the idea of \textit{classical analogy} where the Hamiltonian is interpreted as the total energy of the system \cite{dirac1958}. However, this procedure is limited to systems which have a classical counterpart and the general case is left to the modern quantum physicist.
\begin{postulate}
An \textit{observable} of a physical system is described by an Hermitian operator $A$ with associated eigenvalues $\{\alpha_1,\alpha_2,\ldots\}$ and eigenspaces defined by the projection operators $\{P_1,P_2,\ldots\}$. If the state vector before measurement is $\psi$, then the probability of the result $\alpha_i$ is given by
\beqn
\frac{(\psi,P_i\psi)}{(\psi,\psi)},\nonumber
\eqn
and the state after measurement is $\psi'=P_i\psi$.
\end{postulate}\noindent
From this definition it is apparent that $U$ must be unitary to preserve total probability. We will follow the standard procedure of \textit{normalizing} the state vector:
\beqn
(\psi,\psi)=1.\nonumber
\eqn
\begin{postulate}
The state space of a composite of $m$ quantum systems with individual state spaces $\mathcal{H}_1,\mathcal{H}_2,\ldots,\mathcal{H}_m$ is given by the tensor product:
\beqn
\mathcal{H}=\mathcal{H}_1\otimes \mathcal{H}_2 \otimes\ldots\otimes \mathcal{H}_m.\nonumber
\eqn
\end{postulate}\noindent
From this definition it may seem that the state vector of a composite system should be expressed as the \textit{product state}
\beqn\label{productstate}
\psi=\varphi^{(1)}\otimes \varphi^{(2)} \otimes\ldots\otimes \varphi^{(m)},
\eqn
where $\varphi^{(a)}\in\mathcal{H}_a$ is the state vector of each individual system. However, for the general case, there are physical reasons why there must exist states which cannot be written in the form (\ref{productstate}). We will explore these states and their curious properties in the next section.
\subsection{Spin $\fra{1}{2}$ and entanglement}
One way to proceed in the search for the appropriate state space $\mathcal{H}$ is to study the representation spaces of the irreducible representations of a symmetry group of a physical system. For the case of three dimensional Euclidean space, consider the symmetry group of proper rotations; the special orthogonal group $SO(3)$. The irreducible representations of $SO(3)$ are labelled by the \textit{spin} quantum numbers $s=\{0,\fra{1}{2},1,\fra{3}{2},2,\fra{5}{2},\ldots\}$ (see \cite{miller1972}). Here we will study the case $s=\fra{1}{2}$ where the representation is two-dimensional: $\mathcal{H}=\mathbb{C}^2$, and a state vector is referred to as a \textit{qubit}. The physics of the spin of a qubit is captured by considering an orthonormal basis for $\mathbb{C}^2$ as $\{z_+,z_-\}$ and introducing the observable $S_z$ satisfying 
\beqn
S_zz_+=\fra{\hbar}{2}z_+,\qquad S_zz_-=-\fra{\hbar}{2}z_-,\nonumber
\eqn
so that the states $\psi^+:=z_+$ and $\psi^-:=z_-$ are eigenvectors of the spin operator. Analogously, we can define the $x$ basis $\{x_+,x_-\}$ (or any other orthonormal basis) by rotating the $z$ basis using the group element of the two-dimensional representation of $SO(3)$ which corresponds to the appropriate physical rotation. In particular, we have
\beqn
x_+=\fra{1}{\sqrt{2}}(z_++z_-),\qquad x_-=\fra{1}{\sqrt{2}}(x_+-x_-).\nonumber
\eqn
The measurement operators are then defined as being the projection operators onto the appropriate basis vectors. For instance the projection operators for spin in the $z$ direction satisfy
\beqn
P_z^{+}z_+=z_+,\qquad P_z^{+}z_-=0,\qquad P_z^{-}z_+=0,\qquad P_z^{-}z_-=z_-.\nonumber
\eqn 
A generic qubit can be written as
\beqn
\psi=c_1z_++c_2z_-.\nonumber
\eqn
Introduce the random variables $A_z\in \{+1,-1\}$ to correspond to the value of the spin along the $z$ axis, and we have
\beqn
\mathbb{P}(A_z=1)=(\psi,P_z^+\psi)=|c_1|^2,\nonumber
\eqn
and 
\beqn
\mathbb{P}(A_z=-1)=(\psi,P_z^-\psi)=|c_2|^2.\nonumber
\eqn
Now we turn our attention to composite states of $m$ qubits where the state space becomes
\beqn
\mathcal{H}={(\mathbb{C}^2)}^{\otimes m}.\nonumber
\eqn
The most general state can be expressed as
\beqn
\psi=\sum \psi_{i_1i_2\ldots i_m}e_{i_1}\otimes e_{i_2} \otimes\ldots\otimes e_{i_m},\nonumber
\eqn
so that the state is specified by $2^m$ complex numbers $\psi_{i_1i_2\ldots i_m}$. In the case where $\psi$ can be expressed in the form of a product state, we have
\beqn
\psi_{i_1i_2\ldots i_m}=\varphi^{(1)}_{i_1}\varphi^{(2)}_{i_2} \ldots\varphi^{(m)}_{i_m},\nonumber
\eqn
and we see that the state is specified by $2m$ complex numbers. The difference in these parameter counts between the general state and the product state is the origin of \textit{entanglement}.\\
To illustrate the simplest example of entanglement consider the case of a spin zero particle splitting into two spin $\fra{1}{2}$ qubits labelled as $A$ and $B$. To ensure that the total spin is zero, it must be the case that the total state is
\beqn
\psi=\fra{1}{\sqrt{2}}(z_+\otimes z_--z_-\otimes z_+),\nonumber
\eqn 
which ensures that $(S_z\otimes 1+1\otimes S_z)\psi=0$. We introduce the random variables $A_z$ for particle $A$ and $B_z$ for particle $B$. For the state, $\psi$, the measurement of spins of $A$ and $B$ along the $z$ axis is associated with the probabilities
\beqn
\mathbb{P}(A_z=1,B_z=1)&=\mathbb{P}(A_z=-1,B_z=-1)=0,\nonumber\\
\mathbb{P}(A_z=1,B_z=-1)&=\mathbb{P}(A_z=-1,B_z=1)=1/2,
\eqn
\beqn
\mathbb{P}(A_z=1)=\mathbb{P}(A_z=-1)=1/2,\nonumber
\eqn
and
\beqn
\mathbb{P}(B_z=1)=\mathbb{P}(B_z=-1)=1/2.\nonumber
\eqn
Now if we consider the same state but with spin measurements taken along the $x$ axis, it is a simple exercise to show that
\beqn
\psi=\fra{1}{\sqrt{2}}(x_+\otimes x_--x_-\otimes x_+).\nonumber
\eqn
Now if we were to go ahead and compute the various probabilities associated with the observable $S_x$ we would come to the same probabilities as above. That is, the spins of $A$ and $B$ are always opposite to give $A_x=1,B_x=-1$ with probability $\fra{1}{2}$ and $A_x=-1,B=+1$ with probability $\fra{1}{2}$. One can go further and show that this is true for $any$ orthonormal basis of $\mathbb{C}^2$. This implies that no matter which axis the spins are measured along, the outcome at $A$ is always the negative of the outcome at $B$. These probabilities have been amply confirmed by experiment.\\
A problem arises if one wishes to interpret the probabilities of the formalism of quantum mechanics as representing our ignorance of the full state of the physical system. Such a description of these events would require that at the moment of splitting, each particle actually carries the requisite information as how to respond to a spin measurement on an arbitrary axis, and somehow this information is unobservable or hidden from us. This additional information over and above the state vector was historically coined the \textit{hidden variables}. However, Bell showed that it is actually impossible to specify the required hidden variables \cite{bell1964} and thus it is not possible to interpret the probabilities as simply representing our ignorance of the system. This implies that quantum mechanics requires that the physical world is probabilistic in an intrinsic way. An alternative way out of this predicament is to assume that there is a non-local communication between particles $A$ and $B$, which ensures that spins are opposite along any axis. However, at the moment of measurement, $A$ and $B$ could be separated by a very large distance! Thus the entanglement leads us to the dilemma of having to accept one of the following:
\begin{itemize}
\item Quantum systems have an essentially non-local property.
\item The probabilities in quantum mechanics do not just indicate our ignorance of the configuration of a physical system, but are an essential part of physical reality.
\end{itemize}\noindent
Einstein was unhappy with both options, and never made his peace with the quantum theory that he was so instrumental in constructing. This is because the first violates the spirit, if not the detail of special relativity grossly, and the second implies that Einstein's contention that ``God does not play dice" cannot be true.
\\
Recall that the conditional probability that the random variable $A=x$ given that $B=y$ is defined to be
\beqn
\mathbb{P}(A=x|B=y):=\frac{\mathbb{P}(A=x,B=y)}{\mathbb{P}(B=y)}.\nonumber
\eqn
The random variables $A$ and $B$ are said to be \textit{stochastically independent} \cite{feller1968} if and only if 
\beqn
\mathbb{P}(A=x,B=y)=\mathbb{P}(A=x)\mathbb{P}(B=y),\nonumber
\eqn
from which it would follow that
\beqn
\mathbb{P}(A=x|B=y)=\mathbb{P}(A=x),\nonumber
\eqn
which motivates the definition. (This notion of stochastic independence can be extended to multiple random variables. For details see Feller \cite{feller1968}.)\\
In quantum mechanics, stochastic independence is implied if the state is a product $\psi=\varphi^{(1)}\otimes\varphi^{(2)}$. For if the state is a product state, we have
\beqn
\mathbb{P}(A=i,B=j)&=\overline{\varphi^{(1)}}(P_i\varphi^{(1)})\overline{\varphi^{(2)}}(P_j\varphi^{(2)})\nonumber\\
&:=\mathbb{P}(A=i)\mathbb{P}(B=j).
\eqn
In what follows we will equate entanglement with this notion of stochastic dependence.
\subsection{Orbit classes and invariants}
We have seen that a quantum system exhibits entanglement if the state vector cannot be written as a product. Mathematically one would like to partition the set of entangled state vectors into equivalence classes which capture the essential property of entanglement. A systematic approach to the classification problem is to study the orbit classes of the tensor product space under a group action which is designed to preserve the essential non-local properties of entanglement. The orbit of an element $\psi\in\mathcal{H}$ under the group action $G$ is defined as the set of elements $\{\psi'=g\psi\text{ for some }g\in G\}$.\\
In quantum physics the appropriate group action is known to be the set of SLOCC operators, (Stochastic Local Operations with Classical Communication) \cite{dur2000,guhne2003,linden1998,miyake2003,nielsen2000}. Mathematically SLOCC operators correspond to the ability to transform the individual parts of the tensor product space $\mathcal{H}\cong \mathcal{H}_1\otimes \mathcal{H}_2\otimes \ldots\otimes \mathcal{H}_m$ with arbitrary invertible, linear operations. These operators are expressed by group elements of the form
\beqn
g=g_1\otimes g_2\otimes\ldots\otimes g_m,\nonumber
\eqn
where $m$ is the number of individual spaces making up the tensor product, and $g_i\in GL(\mathcal{H}_i)$.\\
The task is to identify the orbit classes of a given tensor product space under the general set of SLOCC operators. A powerful tool in this analysis is the construction of the invariant functions $\mathbb{C}[\mathcal{H}]^{G}$. By definition these invariants are relatively constant up to the determinant upon each orbit class of $\mathcal{H}$. It can be shown that there exists (under the action of the general linear group at least), a \textit{finite} set of elements which generate the full set of invariants on a given linear space. It can also be shown that the set of orbit classes of a given linear space can be completely classified given a full set of invariants on that space \cite{olver2003}.\\
In what follows we study the orbit class problem for the state space of two qubits and then that of three qubits.
\subsection{Two qubits and the concurrence}
Using the notation of Chapter $\ref{chap2}$, the concurrence is defined using (\ref{concurrence}): 
\beqn
\mathcal{C}=\det{}_2,\nonumber
\eqn
so that
\beqn
\mathcal{C}(\psi)=\sum \psi_{i_1i_2}\psi_{j_1j_2}\epsilon_{i_1j_1}\epsilon_{i_2j_2}.\nonumber
\eqn
We wish to construct the orbit classes of $\mathcal{H}=\mathbb{C}^2\otimes\mathbb{C}^2$ under the group action $GL(\mathbb{C}^2)\times GL(\mathbb{C}^2)$. Any state $\psi\in \mathcal{H}$ can be expressed using the four parameters $\psi_{i_1i_2}$ which in turn can be arranged as a matrix $M=[\psi_{i_1i_2}]$. Under the group transformation
\beqn
\psi\rightarrow g_1\otimes g_2\psi,\nonumber
\eqn
 the corresponding matrix transformation is 
\beqn
M\rightarrow g_1M g_2^{t}.\nonumber
\eqn
Hence we can answer the orbit class problem by taking a canonical $2\times2$ matrix $X$ and considering the set of matrices $\{M=AXB;A,B\in GL(\mathbb{C}^2)\}$.
\begin{thm}
The vector space $V\otimes V$ where $V\equiv\mathbb{C}^2$ has three orbits under the group action $GL(2)\times GL(2)$. Under the identification $M=[\psi_{i_1i_2}]$ for all $\psi$ the orbits are characterized by the following canonical forms:\\
\\
(i) Null-orbit $X=\left(\begin{array}{cc} 0 & 0 \\ 0 & 0\nonumber
\end{array}\right)
$,\\
(ii) Separable-orbit $Y=\left(\begin{array}{cc} 1 & 0 \\ 0 & 0\nonumber
\end{array}\right)
$,\\
(iii) Entangled-orbit $Z=\left(\begin{array}{cc} 1 & 0 \\ 0 & 1\nonumber
\end{array}\right)
$.\\
\\ The separable and entangled-orbits can be distinguished by the determinant function.
\end{thm}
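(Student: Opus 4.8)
The plan is to reduce the statement to the classical rank normal form for $2\times 2$ matrices. First I would use the matrix reformulation recorded just above the theorem: writing $M=[\psi_{i_1i_2}]$, the group action $\psi\mapsto g_1\otimes g_2\,\psi$ becomes $M\mapsto g_1 M g_2^{t}$. Since the transpose is a bijection of $GL(2)$ onto itself, the orbit of $M$ under $GL(2)\times GL(2)$ equals the two-sided equivalence class $\{A M B : A,B\in GL(2)\}$. Thus the problem is exactly the classification of $2\times 2$ matrices over the ground field up to left and right multiplication by invertible matrices.

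Second, I would invoke (or reprove in a line by Gaussian elimination) the fact that under this equivalence every matrix can be brought to the rank normal form with an $r\times r$ identity block in the top-left corner and zeros elsewhere, where $r=\mathrm{rank}\,M$: row operations are left multiplication by invertible elementary matrices, column operations are right multiplication by such, and these suffice to reach this shape. For a $2\times 2$ matrix the possible ranks are $0$, $1$, $2$, giving precisely the three representatives $X$, $Y$, $Z$ of the statement.

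Third, I would check that $X$, $Y$, $Z$ lie in three distinct orbits, so that the decomposition is genuinely into three pieces. This is immediate because rank is unchanged by multiplication by invertible matrices, so $\mathrm{rank}(g_1 M g_2^{t})=\mathrm{rank}\,M$ is an orbit invariant; the representatives have ranks $0,1,2$ and so are pairwise inequivalent. Combined with the second step this shows that $V\otimes V$ is the disjoint union of exactly these three orbits. I would also remark that $\mathrm{rank}\,M\le 1$ is precisely the condition that $\psi_{i_1i_2}=\varphi^{(1)}_{i_1}\varphi^{(2)}_{i_2}$ for some vectors $\varphi^{(1)},\varphi^{(2)}$, i.e. that $\psi$ is a product state; this justifies calling the rank-$1$ orbit ``separable'' and the rank-$2$ orbit ``entangled''.

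Finally, for the last sentence I would recall that the concurrence of the preceding subsection is $\mathcal{C}(\psi)=\sum\psi_{i_1i_2}\psi_{j_1j_2}\epsilon_{i_1j_1}\epsilon_{i_2j_2}$, which expands to (twice) $\psi_{11}\psi_{22}-\psi_{12}\psi_{21}=\det M$; by multiplicativity $\det(g_1 M g_2^{t})=\det(g_1)\det(g_2)\det M$, so whether $\det M$ vanishes is an orbit invariant. Since $\det Y=0$ but $\det Z=1\ne 0$, the determinant function separates the separable orbit from the entangled orbit. I do not expect a genuine obstacle here: the content is the standard rank classification of matrices, and the only care needed is the bookkeeping between the tensor and matrix pictures together with the harmless transpose appearing on $g_2$.
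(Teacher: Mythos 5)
Your proof is correct and follows essentially the same route as the paper's: both translate the action into two-sided multiplication $M\mapsto g_1Mg_2^{t}$ and classify $2\times 2$ matrices up to this equivalence by rank, the paper carrying out the rank-one case by an explicit factorization where you invoke the Gaussian-elimination normal form. If anything your formulation is slightly cleaner, since the paper's assertion that the orbit of $Y$ is ``all matrices with $\det(S)=0$'' should strictly exclude the zero matrix, a point your rank-invariance argument handles automatically.
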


\begin{proof}
(i) The null-orbit has only one member, the null vector; it is of course unchanged by the group action.\\
(ii) We are required to show that the set of $2\times 2$ matrices $\mathcal{M}=\{S:S=AYB;A,B\in GL(V)\}$ is all matrices such that $\det(S)=0$. We begin by taking a general member of $\mathcal{M}$, $S=\left(\begin{array}{cc} a & b \\ c & d\nonumber
\end{array}\right)$ with $ad-bc=0$. Clearly the matrices 
\beqn
S':&=\left(\begin{array}{cc} 0 & 1 \\ 1 & 0
\end{array}\right)S=\left(\begin{array}{cc} c & d \\ a & b
\end{array}\right),\nonumber
\quad
S'':&=S\left(\begin{array}{cc} 0 & 1 \\ 1 & 0
\end{array}\right)=\left(\begin{array}{cc} b & a \\ d & c
\end{array}\right)\nonumber,\quad\mbox{and}
\eqn
\beqn
S''':&=\left(\begin{array}{cc} 0 & 1 \\ 1 & 0
\end{array}\right)S\left(\begin{array}{cc} 0 & 1 \\ 1 & 0
\end{array}\right)=\left(\begin{array}{cc} d & c \\ b & a
\end{array}\right)\nonumber
\eqn
also belong to $\mathcal{M}$. So without loss of generality we can take $a\neq 0$ and it is an easy computation to show that
\beqn
S=\left(\begin{array}{cc} 1 & 0 \\ c/a & 1\nonumber
\end{array}\right)Y\left(\begin{array}{cc} a & b \\ 0 & 1\nonumber
\end{array}\right),
\eqn
so that $\mathcal{M}$ is the set of $2\times 2$ matrices with vanishing determinant.\\
(iii) Clearly any $2\times 2$ matrix $N$ with non-zero determinant can be written as $N=AZB$ where $A,B\in GL(\mathbb{C}^2)$.
\end{proof}
\begin{cor}
The orbits of $\mathcal{H}=\mathbb{C}^2\otimes \mathbb{C}^2$ under $SL(\mathbb{C}^2)\times SL(\mathbb{C}^2)$ are labelled by the determinant function $\det[\phi(h)].$
\end{cor}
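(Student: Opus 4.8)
The plan is to translate the statement into the matrix language already set up for the preceding theorem and then merely refine that theorem's three-orbit picture using the single new fact that, on passing from $GL(2)\times GL(2)$ to $SL(2)\times SL(2)$, the determinant is promoted from a relative invariant (previously it picked up a factor $\det g_1\det g_2$) to a genuine invariant. Under the identification $\phi\colon h\mapsto M=[\psi_{i_1i_2}]$ of the statement, the action $\psi\mapsto g_1\otimes g_2\,\psi$ becomes $M\mapsto g_1 M g_2^{t}$, so $\det M\mapsto\det(g_1)\det(g_2)\det M=\det M$ once $g_1,g_2\in SL(2)$; equivalently the concurrence $\mathcal C$ of (\ref{concurrence}), to which $\det M$ is proportional, is now constant on orbits. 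Thus $\det\phi(h)$ descends to the orbit space, and the content of the corollary is that, on the nonzero states, it is both a complete invariant and surjective onto $\mathbb C$.

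First I would show that every $M$ lies in the $SL(2)\times SL(2)$-orbit of the canonical matrix $D_\delta=\mathrm{diag}(1,\delta)$ with $\delta=\det M$. For $\det M\neq 0$ I would start from the preceding theorem, which puts $M=AZB$ with $Z$ the identity and $A,B\in GL(2)$, i.e.\ $M=AB$, and then collect the two determinants into one slot by writing $A=A_0\,\mathrm{diag}(1,\det A)$ and $B=\mathrm{diag}(1,\det B)\,B_0$ with $A_0,B_0\in SL(2)$, so that $M=A_0\,\mathrm{diag}(1,\det A\det B)\,B_0=A_0 D_\delta B_0$. For $\det M=0$ with $M\neq 0$ the matrix has rank one, $M=uv^{t}$ with $u,v$ nonzero; since $SL(2)$ acts transitively on $\mathbb C^2\setminus\{0\}$, the pair $(Au,Bv)$ ranges over all pairs of nonzero vectors, so the rank-one matrices form a single orbit, which is that of $D_0$. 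The null vector is fixed by the whole group and forms its own orbit.

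To finish I would observe that the canonical forms $D_\delta$ have pairwise distinct determinants, hence lie in pairwise distinct orbits, so the orbits of nonzero states are parametrized bijectively by the value of $\det\phi(h)$, which sweeps out all of $\mathbb C$; comparing with the $GL(2)\times GL(2)$ result, the single entangled orbit has fissioned into the one-parameter family $\{\det\neq0\}$ while the separable and null orbits are unchanged. The only real work is the determinant-collection step in the invertible case, together with transitivity of $SL(2)$ on nonzero vectors, and both are elementary; the one point I would flag explicitly is that at $\delta=0$ the determinant does not on its own separate the null orbit from the rank-one separable orbit, so the statement is to be read as classifying the nonzero states (or as supplemented by a rank datum on the zero-determinant locus).
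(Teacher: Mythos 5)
Your proof is correct and follows essentially the route the paper intends: the corollary is meant to be read off from the preceding theorem once one notes that $\det[\phi(h)]$, a relative invariant under $GL(2)\times GL(2)$, becomes an absolute invariant under $SL(2)\times SL(2)$, and your determinant-collection step and the $SL(2)$-transitivity argument on $\mathbb{C}^2\setminus\{0\}$ supply exactly the details the paper leaves implicit. Your closing caveat — that the determinant alone does not separate the null orbit from the rank-one orbit on the $\det=0$ locus — is a legitimate and worthwhile sharpening of the statement as written.
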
\noindent
For further discussion see \cite{bernevig2003,dur2000,linden1998}.
\subsection{Three qubits and the tangle}
It is known that there are six orbit classes of $\mathbb{C}^2\otimes \mathbb{C}^2\otimes \mathbb{C}^2$ under the action $GL(\mathbb{C}^2)\times GL(\mathbb{C}^2)\times GL(\mathbb{C}^2)$. These orbits classes can be distinguished by functions of the concurrence and another relative invariant known as the tangle \cite{dur2000,guhne2003}.\\
We begin by defining three partial concurrence operations as
\beqn\label{pconcurrence}
\mathcal{C}_1(\psi)=\sum \psi_{ijk}\psi_{lmn}\epsilon_{jm}\epsilon_{kn}e_i\otimes e_l,\\
\mathcal{C}_2(\psi)=\sum \psi_{ijk}\psi_{lmn}\epsilon_{il}\epsilon_{kn}e_j\otimes e_m,\\
\mathcal{C}_3(\psi)=\sum \psi_{ijk}\psi_{lmn}\epsilon_{il}\epsilon_{jm}e_k\otimes e_n.\\
\eqn
From these definitions it is easy to see that
\beqn
\mathcal{C}_1(\psi'):&=\mathcal{C}_1(g_1\otimes g_2 \otimes g_3 \psi)\nonumber\\
&=[\det(g_2)\det(g_3)]g_1\otimes g_1\mathcal{C}_1(\psi),
\eqn
with similar expressions for $\mathcal{C}_2$ and $\mathcal{C}_3$.\\
The tangle is an invariant satisfying
\beqn
\mathcal{T}(\psi)=[\det(g_1)\det(g_2)\det(g_3)]^2\mathcal{T}(\psi),\nonumber
\eqn
and from (\ref{tangle2})  can be written in the form
\beqn
\mathcal{T}(\psi)=\sum \psi_{a_1a_2a_3}\psi_{b_1b_2b_3}\psi_{c_1c_2c_3}\psi_{d_1d_2d_3}\epsilon_{a_1b_1}\epsilon_{a_2b_2}\epsilon_{c_1d_1}\epsilon_{c_2d_2}\epsilon_{b_3c_3}\epsilon_{a_3d_3}.\nonumber
\eqn
\\
The six orbit classes are described by the completely disentangled states 
\beqn
\psi=&\varphi^{(1)}\otimes \varphi^{(2)}\otimes\varphi^{(3)},\qquad \varphi^{(a)}\in\mathbb{C}^2;\nonumber
\eqn
the partially entangled states which form three orbit classes characterized by the separability of the canonical tensors
\beqn
\psi_p^{(1)}=\sum\varphi^{(1)}_i\varphi^{(23)}_{jk}e_{i}\otimes e_{j}\otimes e_k,\\\nonumber
\psi_p^{(2)}=\sum\varphi^{(13)}_{ik}\varphi^{(2)}_je_{i}\otimes e_{j}\otimes e_k,\\\nonumber
\psi_p^{(3)}=\sum\varphi^{(12)}_{ij}\varphi^{(3)}_ke_{i}\otimes e_{j}\otimes e_k;\nonumber
\eqn
the completely entangled states equivalent to the $GHZ$ state
\beqn
\psi_{ghz}=\fra{1}{\sqrt{2}}(e_0\otimes e_0\otimes e_0+e_1\otimes e_1\otimes e_1);\nonumber
\eqn
and the completely entangled states equivalent to the $W$ state
\beqn
\psi_w=\fra{1}{\sqrt{3}}(e_0\otimes e_0\otimes e_1+e_0\otimes e_1\otimes e_0+e_1\otimes e_0\otimes e_0).\nonumber
\eqn
The tangle and the concurrence and its partial counterparts can be used to fully distinguish these orbit classes.  For the completely disentangled tensors we have
\beqn
\mathcal{C}_a(\psi)=0,\quad \mathcal{T}(\psi)=0,\nonumber
\eqn 
for all $a=1,2,3$. Whereas for the first partially entangled state we have
\beqn
\mathcal{C}_1(\psi_p^{(1)})\neq 0,\quad\mathcal{C}_2(\psi_p^{(1)})=\mathcal{C}_3(\psi)=0,\quad \mathcal{T}(\psi_p^{(1)})=0,\nonumber
\eqn
and similar relations for the remaining two partially entangled states.\\
States in the $GHZ$ orbit satisfy
\beqn
\mathcal{C}_a(\psi_{ghz})\neq 0,\qquad\mathcal{T}(\psi_{ghz})\neq 0\nonumber
\eqn
for all $a=1,2,3$. Whereas states on the $W$ orbit satisfy
\beqn
\mathcal{C}_a(\psi_{w})\neq 0,\qquad\mathcal{T}(\psi_{w})= 0\nonumber
\eqn
for all $a=1,2,3$.\\
Notice that the $GHZ$ and $W$ orbits characterize different classes of three qubit entanglement. In the $GHZ$ orbit each qubit is entangled with the other two qubits \textit{and} the three qubits are entangled as a triplet. In the $W$ orbit the qubits are entangled as pairs but are \textit{not} entangled as a triplet.
\section{Stochastic evolution of biomolecular units}
It is standard to model sequence evolution as a stochastic process. A discrete set $\mathcal{K}$ is associated with biomolecular units which we refer to as  \textit{bases} and define $n:=|\mathcal{K}|$. For example, in the case of DNA sequences made up of the four nucleotides adenine, cytosine, guanine, thymine, we have $\mathcal{K}=\{A,G,C,T\}$ and $n=4$. The instance of a particular base in the sequence is equated with the time dependent random variable $X(t)\in\mathcal{K}$ and the stochastic time evolution is modelled as a continuous time Markov chain (CTMC) so that
\beqn\label{contmark}
\frac{d}{dt} \mathbb{P}(X(t)=i)=\sum_{j}q_{ij}(t)\mathbb{P}(X(t)=j),\qquad i,j\in\mathcal{K}.
\eqn
The $q_{ij}(t)$ are called \textit{rate parameters} and must satisfy the relations
\beqn\label{ratespars}
q_{ij}(t)\geq 0,\quad \forall i\neq j;\qquad q_{ii}(t)=-\sum_{j\neq i} q_{ji}(t).
\eqn
Define $Q(t)=\left[q_{ij}(t)\right]_{(i,j\in\mathcal{K})}$ as the \textit{rate matrix} associated with the Markov chain. The Markov chain is called \textit{homogeneous} if the rate matrix is time independent. The results presented in this thesis are equally valid for inhomogeneous models where the rate matrix is time dependent and so we allow for this generality throughout. It is also common to impose further symmetries upon the rate matrix such as the Jukes Cantor and  Kimura 3ST models \cite{nei2000}. However, the results presented here are again valid for any rate matrix satisfying (\ref{ratespars}), and hence no restriction upon the rate parameters is made. This model is referred to as the \textit{general Markov model} \cite{allman2003}.\\
For notational simplicity we will write $\pi_{i}(t):=\mathbb{P}(X(t)=i)$ and, given an initial distribution $\pi_i(0)$, write solutions of (\ref{contmark}) as
\beqn
\pi_i(t)=\sum_{j\in\mathcal{K}}m_{ij}(t,s)\pi_j(s),\qquad 0\leq s< t;\nonumber
\eqn
where $m_{ij}(t,s):=\mathbb{P}(X(t)=i|X(s)=j)$ are the \textit{transition probabilities} of the chain. We define the matrix $M(t,s)=\left[m_{ij}(t,s)\right]_{(i,j\in\mathcal{K})}$ such that in the homogeneous case the transition probabilities only depend on the difference $(t-s)$ and can be represented in terms of the rate matrix as
\beqn
M(t,s)=M(t-s,0)=e^{Q[(t-s)]}:=\sum_{n=0}^{\infty}\frac{Q^n[(t-s)]^n}{n!}.\nonumber
\eqn
In the inhomogeneous case there are several representations available for the matrix of transition probabilities (for details see \cite{isoifescu1980,rindos1995}). The representation that is of most use to us here is the time-ordered product:
\beqn\label{inhom}
M(t,s)=\mathbb{T}\exp{\int_s^tQ(u)du}
\eqn
(see for example \cite{itzykson1980} for the definition of the time-ordering operator $\mathbb{T}$.) For sufficiently small $\delta t$, we can write this in the approximate form
\beqn\nonumber
M(t,s)&\simeq M(t,t-\delta t)\ldots M(s+2\delta t,s+\delta t)M(s+\delta t,s)\\
&=e^{Q(t-\delta t)\delta t}e^{Q(t-2\delta t)\delta t}\ldots e^{Q(s+\delta t)\delta t}e^{Q(s)\delta t}.
\eqn
From these solutions it is clear that
\beqn\label{transdet}
\det[M(t,s)]=\exp{\int_s^ttr[Q(u)]du}.
\eqn 
A more fundamental way to define the transition matrices of a CTMC is to impose the \textit{backward} and \textit{forward} Kolmogorov equations \cite{isoifescu1980}: 
\beqn\label{kalamorov}
\frac{\partial M(t,s)}{\partial s}&=-M(t,s)Q(s),\\
\frac{\partial M(t,s)}{\partial t}&=Q(t)M(t,s).
\eqn
\section{Phylogenetic trees}
The remaining task is to model the case of phylogenetically related molecular sequences evolving under a stochastic process. Effectively the model consists of multiple copies of the random variable $X(t)$ taken as a generalization (via a tree structure) of a cartesian product and then modelled collectively as a CTMC. The reader is referred to \cite{semple2003} for a more extended discussion of the model. Here we keep the presentation to a minimum while allowing for the introduction of some essential notation and concepts.
\\ 
A tree, $T$, is a connected graph without cycles and consists of a set of vertices, $V$, and edges, $E$. Vertices of degree one are called \textit{leaves} and we partition the set of vertices as $V=L\cup N$ where $L$ is the set of leaves and $N$ is the set of internal vertices. We work with \textit{orientated} trees, which are defined by directing each edge of $T$ away from a distinguished vertex, $\pi$, known as the \textit{root} of the tree. Consequently, a given edge lying between vertices $u$ and $v$ is specified as an ordered pair $e=(u,v)$, where $u$ lies on the (unique) path between $v$ and $\pi$. The general Markov model of a phylogenetic tree is then made by assigning a set of random variables $\{X_s,\ s\in V\}$ to the vertices of the tree; these random variables are assumed to be conditionally independent and individually satisfy the properties of a CTMC. Taking a distribution at the root of the tree, $\{\mathbb{P}(X_\pi=i):=\pi_i, i\in\mathcal{K}\}$, completes the specification of the phylogenetic tree. The interpretation of a phylogenetic tree is that the probability distribution at each leaf is associated with the observed sequence of a single taxon and the joint probability distribution across a number of leaves is associated with the aligned sequences of the same number of molecular sequences.    
\\
For example in Figure \ref{pic:4leaf} we present the tree consisting of four leaves which has probability distribution
\beqn
p_{i_1i_2i_3i_4}=\sum_{j,k}m^{(1)}_{i_1j}m^{(2)}_{i_2j}m^{(3)}_{i_3k}m^{(4)}_{i_4k}m^{(5)}_{kj}\pi_j,\nonumber
\eqn
where 
\beqn
p_{i_1i_2i_3i_4}:=\mathbb{P}(X_1=i_1,X_2=i_2,X_3=i_3,X_4=i_4),\nonumber
\eqn
and we refer to these quantities as \textit{pattern probabilities}. 

\begin{figure}[t]
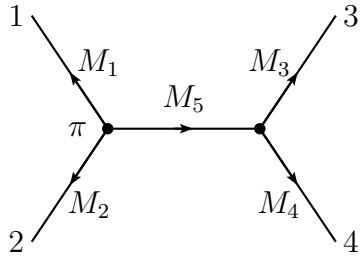

\centering
\pspicture[](0,0)(5,3)
\psset{linewidth=\pstlw,xunit=0.5,yunit=0.5,runit=0.5}
\psset{arrowsize=2pt 2,arrowinset=0.2}
\psline{-}(2,3)(0,0)
\rput(-0.4,0){2}
\psline{-}(2,3)(0,6)
\rput(-0.4,6){1}
\psline{-}(2,3)(6,3)
\psline{-}(6,3)(8,6)
\rput(8.4,6){3}
\psline{-}(6,3)(8,0)
\rput(8.4,0){4}
\psline{->}(2,3)(1,4.5)
\rput(1.75,4.75){$M_1$}
\psline{->}(2,3)(1,1.5)
\rput(1.5,1.0){$M_2$}
\psline{->}(6,3)(7,4.5)
\rput(6.25,4.75){$M_3$}
\psline{->}(6,3)(7,1.5)
\rput(6.5,1.0){$M_4$}
\psline{->}(3.75,3)(4.25,3)
\rput(4,3.8){$M_5$}
\pscircle[linewidth=0.8pt,fillstyle=solid,fillcolor=black](2,3){.15} 
\pscircle[linewidth=0.8pt,fillstyle=solid,fillcolor=black](6,3 ){.15} 
\rput(1.2,3){$\pi $}
\endpspicture 
\caption{Phylogenetic tree of four taxa}
\label{pic:4leaf}
\end{figure}


\section{Tensor presentation}
Setting $\mathbb{P}(X(t)=i)=p_i(t)$, we introduce the $n$-dimensional vector space $V$ with preferred basis \{$e_1,e_2,\ldots,e_n$\} and associate the probabilities uniquely with the vector
\beqn
p(t)=p_1(t)e_1+p_2(t)e_2+\ldots+p_n(t)e_n.\nonumber
\eqn  
The time evolution of this vector is then governed by equation (\ref{contmark}) written in operator form as
\beqn
\frac{d}{dt}p(t)=Q(t)p(t).\nonumber
\eqn
The solution of this equation is written as
\beqn
p(t)=M(t,s)p(s).\nonumber
\eqn
The probabilities can be recovered by taking the inner product
\beqn
p_i(t)=(e_i,p(t)),\nonumber
\eqn
and defining 
\beqn\label{def:theta}
\theta=\sum_{i=1}^n e_i,
\eqn
we have
\beqn
(\theta,p(t))=1,\quad\forall t.\nonumber
\eqn
In analogy we label the joint probabilities as 
\beqn
p_{i_1i_2\ldots i_m}(t):=\mathbb{P}(X_1=i_1,X_2=i_2,\ldots,X_m=i_m;t),\nonumber
\eqn
and by introducing the tensor product space $V^{\otimes m}$ we associate these probabilities with the unique tensor
\beqn
P(t):=\sum p_{i_1i_2\ldots i_m}(t)e_{i_1}\otimes e_{i_2}\otimes\ldots\otimes e_{i_m}.\nonumber
\eqn
Again the probabilities are recovered from the inner product:
\beqn
p_{i_1i_2\ldots i_m}(t)=(e_{i_1}\otimes e_{i_2}\otimes\ldots\otimes e_{i_m},P(t)),\nonumber
\eqn
and we define $\Omega=\sum e_{i_1}\otimes e_{i_2}\otimes\ldots\otimes e_{i_m}$ so that
\beqn
(\Omega,P(t))=1,\quad\forall t.\nonumber
\eqn
We now introduce the branching events into this formalism.\\
Consider a vertex on a phylogenetic tree where the stochastic evolution of a single random variable branches into that of two random variables. The corresponding mathematical operation is a mapping $V\rightarrow V\otimes V$. In order to formalize this we introduce the \textit{branching} operator $\delta:V\rightarrow V\otimes V$. The most general action of a (linear) operator $\delta$ upon the basis elements of $V$ can be expressed as
\beqn\label{splitting}
\delta e_i=\sum_{j,k}\Gamma _i^{jk}e_j\otimes e_k,
\eqn
where $\Gamma _i^{jk}$ are an arbitrary set of coefficients set by the assumption of conditional independence across branches of the tree.\\
To this end it is only necessary to consider initial probability distributions of the form 
\beqn 
\pi^{(\gamma)}&=\sum_i \delta_i^{\gamma}e_i,\nonumber\\\gamma &=1,2,\ldots,n.
\eqn 
Directly subsequent to the branching event the two leaf state is given by
\beqn
P^{(\gamma)}&=\delta \pi^{(\gamma)},\nonumber\\
&=\sum_{i,j,k}\delta^\gamma_i\Gamma_i^{jk}e_j\otimes e_k.
\eqn
We implement the conditional independence upon the branches by setting
\beqn\label{condindep}
\mathbb{P}&(X_1=i_1,X_2=i_2,t=t'|X_1=X_2=\gamma,t=0)\\
&=\mathbb{P}(X_1=i_1,t=t'|X_1=\gamma,t=0)\mathbb{P}(X_2=i_2,t=t'|X_2=\gamma,t=0).
\eqn
Using the tensor formalism the transition probabilities can be expressed as
\beqn
&\mathbb{P}(X_1=i_1,t=t'|X_1=\gamma,t=0)=\sum_{k_1}m^{(1)}_{i_1k_1}(t')\delta_{k_1}^{\gamma},\nonumber\\
&\mathbb{P}(X_2=i_2,t=t'|X_2=\gamma,t=0)=\sum_{k_2}m^{(2)}_{i_2k_2}(t')\delta_{k_2}^{\gamma},\nonumber\\
&\mathbb{P}(X_1=i_1,X_2=i_2,t=t'|X_1=X_2=\gamma,t=0)\nonumber\\
&\hspace{100pt}
=\sum_{k_1,k_2,k_3}m^{(1)}_{i_1k_1}(t')m^{(2)}_{i_2k_2}(t')\delta_{k_3}^\gamma\Gamma_{k_3}^{k_1k_2}.\nonumber
\eqn
Implementing (\ref{condindep}) leads to the requirement that 
\beqn
\Gamma_\gamma^{k_1k_2}=\delta_{k_1}^\gamma\delta_{k_2}^\gamma,\nonumber
\eqn
and the basis dependent definition of the branching operator
\beqn
\delta e_i=e_i\otimes e_i.\nonumber
\eqn
From this construction we can express the phylogenetic tree Figure \ref{pic:4leaf} as
\beqn
P=(1\otimes 1\otimes M_3\otimes M_4)1\otimes 1\otimes\delta(M_1\otimes M_2\otimes M_5)1\otimes\delta\cdot\delta\pi,\nonumber
\eqn
which can also be written in the more convenient form
\beqn
P=(M_1\otimes M_2\otimes M_3\otimes M_4)1\otimes 1\otimes \delta(1\otimes 1\otimes M_5)1\otimes \delta\cdot\delta\pi.\nonumber
\eqn
This form can be generalized so that any phylogenetic tree can be expressed in the form
\beqn\label{convform}
P:=M_1\otimes M_2\otimes\ldots\otimes M_m\tilde{P},
\eqn
with $M_a\in GL(n)$, $1\leq a\leq m$, and $\tilde{P}$ is found by taking $P$ and setting the Markov operators on the leaf edges, $M_1,M_2,\ldots,M_m$, all equal to the identity operator. This representation will be of importance to us as we consider invariant theory in terms of phylogenetics.
\section{Entanglement and phylogenetics}
In this final section we will study the properties of a phylogenetic tensor evaluated on invariant functions of the general linear group. Recalling (\ref{transdet}), we see that in all reasonable cases the determinant of the transition matrices of a phylogenetic tree is non-zero. This implies that the transition matrices are elements of $GL(n)$. Thus in the case of a phylogenetic tensor of the form (\ref{convform}), an invariant will take the form
\beqn
f(P)=\prod_{a=1}^m \det(M_{a})^kf(\tilde{P}).\nonumber
\eqn
Presently we study the case where $|\mathcal{K}|=2$ and the phylogenetic tensor occurs in the tensor product space relevant to two qubits and three qubits respectively.
\subsection{Two qubits}
For the case of two qubits the most general phylogenetic tensor is given by
\beqn\label{eq:twotaxa}
P=(M_{1}\otimes M_{2})\delta\pi,
\eqn
which corresponds to the tree of Figure \ref{pic:fixtree}.
\begin{figure}[b]
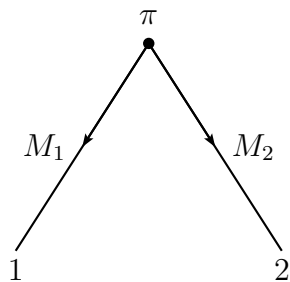

\centering
\pspicture[](0,0)(3,3)
\psset{linewidth=\pstlw,xunit=0.5,yunit=0.5,runit=0.5}
\psset{arrowsize=2pt 2,arrowinset=0.2}
\psline{->}(4,6)(2.25,3.25)
\psline{->}(4,6)(5.75,3.25)
\psline{-}(4,6)(0.5,0.5)
\psline{-}(4,6)(7.5,0.5)
\pscircle[linewidth=0.8pt,fillstyle=solid,fillcolor=black](4,6){0.15} 
\rput(4,6.7){$\pi$}
\rput(1.25,3.25){$M_1$}
\rput(6.75,3.25){$M_2$}
\rput(0.5,0){1}
\rput(7.5,0){2}
\endpspicture 
\caption{Phylogenetic tree with two leaves}
\label{pic:fixtree}
\end{figure}
Following (\ref{convform}) we have 
\beqn
\tilde{P}=\delta\pi.\nonumber
\eqn
As will be discussed in detail in Chapter \ref{chap4}, the concurrence can be used to establish the magnitude of divergence between a pair of sequences derived from a single branching event. The concurrence of the phylogenetic state (\ref{eq:twotaxa}) is given by
\beqn
\mathcal{C}(P)&=\det[M_{1}]\det[M_{2}]\mathcal{C}(\delta\pi).\nonumber
\eqn
Explicitly we have
\beqn
\mathcal{C}(\tilde{P})&=\sum \delta_{i_1i_2}\pi_{i_2}\delta_{i_3i_4}\pi_{i_4}\epsilon_{i_1i_3}\epsilon_{i_2i_4}\\\nonumber
&=\pi_1 \pi_2,
\eqn
and find that 
\beqn\label{concurrence1}
\mathcal{C}(P)&=\det[M_1]\det[M_2]\pi_1\pi_2.\nonumber
\eqn
Assuming that the determinants of the Markov operators are non-zero we see that the phylogenetic tensor is on the entangled orbit.\\
In comparison, if there is no stochastic dependence between the random variables the phylogenetic state can be expressed as
\beqn
P=p_1\otimes p_2,\nonumber
\eqn
which is a product state, such that the random variables $X_1$ and $X_2$ are stochastically independent, and the concurrence vanishes. Thus the non-vanishing of the concurrence can be used as a test of stochastic dependence between any two molecular sequences. In Chapter \ref{chap4} we will show that the determinants of the Markov operators tend to zero as $t$ tends to infinity and we conclude that the phylogenetic (\ref{eq:twotaxa}) state tends to a product state after an infinite amount of divergence. This is what one would expect as the case of infinite divergence should correspond exactly to the case of stochastic independence.
\subsection{Three qubits}
In this section we study the phylogenetic state
\beqn\label{eq:threetaxa}
P=(M_1\otimes M_2\otimes M_3)1\otimes\delta(1\otimes M_4)\delta\pi,
\eqn
which corresponds to the tree Figure \ref{pic:threetree}. Again following (\ref{convform}) we have 
\beqn
\tilde{P}=1\otimes\delta(1\otimes M_4)\delta\pi.\nonumber
\eqn
We now determine which orbit the phylogenetic state (\ref{eq:threetaxa}) lies in. By the general properties of the tangle we find that
\beqn
\mathcal{T}(P)=\prod_{i=1}^3(\det{M_i})^2\mathcal{T}(\tilde{P}),\nonumber
\eqn
and by explicit computation
\beqn
\mathcal{T}(\tilde{P})=(\det{M_4})^2(\pi_1\pi_2)^2,\nonumber
\eqn
to conclude that
\beqn
\mathcal{T}(P)=(\det{M_1}\det{M_2}\det{M_3}\det{M_4})^2(\pi_1\pi_2)^2.\nonumber
\eqn
From this we can conclude that the phylogenetic state (\ref{eq:threetaxa}) lies on the $GHZ$ orbit and the evaluation of the tangle upon three aligned sequence can be used as a test of triplet stochastic dependence.
\begin{figure}[t]
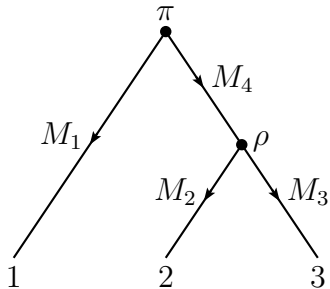

\centering
\pspicture[](0,0)(3,3)
\psset{linewidth=\pstlw,xunit=0.5,yunit=0.5,runit=0.5}
\psset{arrowsize=2pt 2,arrowinset=0.2}
\psline{->}(4,6)(2,3)
\rput(1.25,3.25){$M_1$}
\pscircle[linewidth=0.8pt,fillstyle=solid,fillcolor=black](4,6){0.15}
\rput(4,6.5){$\pi$} 
\psline{-}(2.2,3.3)(0,0)
\rput(0,-.5){1}
\psline{->}(4,6)(5,4.5)
\rput(5.75,4.75){$M_4$}
\psline{-}(4.8,4.8)(6,3)
\pscircle[linewidth=0.8pt,fillstyle=solid,fillcolor=black](6,3){0.15}
\rput(6.5,3.1){$\rho$}
\psline{->}(6,3)(5,1.5)
\rput(4.25,1.75){$M_2$}
\psline{-}(5.2,1.8)(4,0)
\rput(4,-.5){2}
\psline{->}(6,3)(7,1.5)
\rput(7.75,1.75){$M_3$}
\psline{-}(6.8,1.8)(8,0)
\rput(8,-.5){3}
\endpspicture  
\caption{Phylogenetic tree with three leaves}
\label{pic:threetree}
\end{figure}

\subsection{Phylogenetic relation}
Referring to (\ref{transdet}), we see that for continuous time Markov chains the determinants of the transition matrices satisfy:
\beqn\label{limits}
0<\det{M(s,t)}&\leq 1,\quad\forall 0\leq t< \infty,\\
\lim_{t\rightarrow\infty}\det M(t)&=0.
\eqn
Above we have seen that for phylogenetic data of three aligned sequences derived from a tree the tangle polynomial is non-zero, and for two aligned sequences derived from a tree the concurrence is also non-zero. But taking (\ref{limits}) into account we see that, if any one of the branches of a phylogenetic tree is extended to infinite length this will induce the vanishing of these invariant functions which implies that the corresponding part of the phylogenetic tensor decouples from the overall state to form a partial product state. Thus the case of no stochastic dependence directly corresponds to entanglement of the tensor state and stochastic dependence can be tested for using invariant functions.\\
Introducing independent time parameters for each external branch, we can express the phylogenetic tree (\ref{pic:threetree}) as
\beqn
P(t_1,t_2,t_3):=[M_1(0,t_1)\otimes M_2{(0,t_2)}\otimes M_3(0,t_3)]1\otimes\delta[1\otimes M_4]\delta\pi.\nonumber
\eqn
Now, as we have seen, the tangle polynomial will satisfy
\beqn
\lim_{t_a\rightarrow\infty}&\mathcal{T}(P(t_1,t_2,t_3))=0,\\
&\forall a=1,2,3.\nonumber
\eqn
For the concurrence we have
\beqn
\lim_{t_a\rightarrow\infty}&\mathcal{C}_b(P(t_1,t_2,t_3))=0,
\eqn
if and only if $a=b$. From these observations we can conclude that we have the limit:
\beqn
\lim_{t_1\rightarrow\infty}p_{i_1i_2i_3}(t_1,t_2,t_3)=p^{(1)}_{i_1}p^{(23)}_{i_2i_3}(t_2,t_3),\nonumber
\eqn 
and similar for $t_2,t_3$. The phylogenetic state decouples into a partial product state after an infinite amount of stochastic divergence. This is what one would expect, as the branch lengths of the tree become so large that it is impossible to observe the branching event which relates to leaves.\\
From these observations we define a \textit{phylogenetic relation} to exist whenever the relevant phylogenetic tensor cannot be written as a product state.
\section{Closing remarks}
In this chapter we have established the mathematical connection between the notion of entanglement and that of phylogenetic relation. We showed that simple group invariant functions used to quantify entanglement can be utilized in the phylogenetic case. We focused on the invariant function known as the tangle, but considered only the case of two character states. In the next chapter we will study the properties of the tangle in the case of three and four character states.

\chapter{Using the tangle}\label{chap4}
The distance based approach to phylogenetic reconstruction using the neighbor joining algorithm is a commonly used technique \cite{gascuel1997,lake1994,pearson1999,saitou1987}. Under the assumptions of a Markov model of sequence evolution, the phylogenetic relationship is uniquely reconstructible from (suitably defined) pairwise distances \cite{steel1998}. The approach relies crucially upon the calculation of distance matrices from aligned sequence data which give a measure of the pairwise evolutionary distance between the extant taxa under consideration. As far as tree building algorithms are concerned it is required that the distances are strictly \textit{linearly} related to the sum of the (theoretical) edge lengths of the phylogenetic tree, and that the parameters of the linear relation do not vary across the tree. It is essential to the analysis that the measure of distance chosen has both biological and statistical as well as mathematical significance. If one assumes the standard Markov model, the edge lengths of a phylogenetic tree can be taken mathematically to be a quantity that we refer to as the \textit{stochastic distance}. (For mathematical discussion of this quantity see Goodman \cite{goodman1973} who refers to the stochastic distance as \textit{intrinsic time}, and see also Barry and Hartigan \cite{barry1987} who gave a biological interpretation.) Under the assumptions of a general Markov model the $\log\det$ formula is commonly used to obtain pairwise distances. Further, if one may assume a stationary process then the $\log\det$ formula can be modified to give an estimate of the actual stochastic distance \cite{lockhart1994}. (That is, the constants of the linear relation are set by the stationarity assumption.) \\
Distance based methods and, consequently, the $\log\det$ formula are often used in favour of other methods (such as maximum likelihood) in cases where there has been significant compositional heterogeneity during the evolutionary history. The theoretical basis which motivates this usage was presented by Steel \cite{steel1994} and is discussed in Lockhart, Steel, Hendy and Penny \cite{lockhart1994} and Gu and Li \cite{gu1996}. More recently, Jermiin, Ho, Ababneh, Robinson and Larkum published a simulation study which confirms that the $\log\det$ outperforms other techniques in this case \cite{jermiin2004}. Lockhart \textit{et al.} showed that by using the assumption that the base composition remains close to constant, the $\log\det$ formula can be modified to give an estimate of the actual stochastic distance.  However, as will be shown, in both its original and modified form the $\log\det$ formula includes an approximation crucially dependent upon the compositional heterogeneity remaining minimal. The effectiveness of the $\log\det$ formula to correctly reconstruct the phylogenetic history when there has been significant compositional heterogeneity is thus brought into question. Hence there is a contradictory state of affairs between the theoretical basis of the $\log\det$ and the circumstances under which it is implemented. In this chapter we will generalize the $\log\det$ formula in such a way that this dependence upon base composition is truly absent.\\
A disadvantage of the $\log\det$ formula is that it uses only \textit{pairwise} sequence data and is blind to the fact that extra information regarding pairwise distances can be obtained from the sequence data of additional taxa. Felsenstein \cite{felsenstein2004} mentions that it is surprising that distance techniques work at all given that they ignore the extra information in higher order alignments. This chapter details exactly how the $\log\det$ formula can be improved upon by taking functions of aligned sequence data for \textit{three} taxa at a time. It may seem counter-intuitive that consideration of a third taxon can impart information regarding the evolutionary distance between two taxa, but it is the case that by considering a third taxon the $\log\det$ formula can be refined. This result depends crucially upon the fact that, as is somewhat trivially the case for two taxa, there is only one possible (unrooted) tree topology relating three taxa. (For discussion of what a tree topology is see \cite{nei2000}, Chapter 5.) It is possible to refine the $\log\det$ formula by considering the respective distance to an arbitrary third taxon. The reader should note that the use of triplet sequence data to the problem of reconstruction of the Markov model was also considered in \cite{chang1996} and \cite{pearl1986}. The approach discussed in the present chapter is original in the sense that triplets of the aligned sequences are being used explicitly in a distance method, and follows on from the theoretical discussions of \cite{sumner2005}.
\\
A complication arises regarding the total stochastic distance between leaves and the placement of the root of a phylogenetic tree. It turns out that if we define phylogenetic trees of identical topology to be equivalent if they give the identical probability distributions then we find that the total stochastic distance between leaves is not, in general, left unchanged as we move the root of the tree. The so defined equivalence class provides a generalization of Felsenstein's \textit{pulley principle} \cite{felsenstein1981} and was first presented in Steel, Szekely and Hendy \cite{steel1994b}. The fact that the stochastic distance is not left unchanged is a surprising result and has important implications regarding the interpretation of the edge lengths of phylogenetic trees defined under the Markov model. In particular this result implies that the $\log\det$ technique is an inconsistent estimator of pairwise distances on phylogenetic trees. It is the purpose of this chapter to present a new estimator that is consistent in the case of phylogenetic quartets. We are motivated to present this construction of quartet distance matrices by the interest in phylogenetic reconstruction of large trees from the correct determination of the set of $\binom{n}{4}$ quartets \cite{bryant2001,strimmer1996}. 
\\
This chapter will begin by formally defining the stochastic distance. We will then examine how the general linear group invariants, the $det$ (\ref{concurrence}) and the $tangle$ (\ref{tangle4}), can used to estimate the stochastic distance between any two taxa on a phylogenetic tree. As a consequence of this discussion we will examine a generalized pulley principle and finish by showing that by including the tangle in the analysis we can arrive at a consistent estimator. 

\noindent
\textbf{Note:} This chapter follows closely the text of \cite{sumner2006}.

\subsection{Stochastic distance}

In this chapter we will be interested in the assignment of edge lengths to phylogenetic trees. To this end we consider the rate of change of base changes at time $s$: \footnote{It is standard to include a factor of $n^{-1}$ in this definition. However, this factor clutters the consequent formulae and here we do not include it as it has no consequence to the forgoing discussion and can always be incorporated into the analysis later.}
\beqn
\lambda(s):=\sum_{i\in\mathcal{K}}\frac{\partial\mathbb{P}(X(t)=i|X(s)\neq i)}{\partial t}|_{t=s}.\nonumber
\eqn
By considering (\ref{ratespars}) and (\ref{kalamorov}) this quantity can be explicitly expressed using the rate parameters: 
\beqn
\lambda(s)&=-\sum_{i}q_{ii}(s),\\\nonumber
&=-trQ(s).
\eqn
From these considerations we define the \textit{stochastic distance} to be given by the expression
\beqn
\omega(s,t):=\int_s^t\lambda(u)du.\nonumber
\eqn
By considering the time-ordered product representation (\ref{inhom}) and the Jacobi identity $\det e^X=e^{trX}$, we find that the stochastic distance can be directly related to the transition probabilities of the Markov chain:
\beqn\label{omega}
\omega(s,t)=-\log\det M(s,t).
\eqn
Our assignment of edge lengths will take the Markov matrix associated with each edge and set the edge length equal to the stochastic distance. 
\\
The relation (\ref{omega}) is known in various guises in both the mathematical and phylogenetic literature \cite{barry1987,goodman1973} and, as will be confirmed in the next section, is the basis of the $\log\det$ formula. It should also be noted that (\ref{omega}) will remain positive and finite because $\omega(s,s)=0$, $\lambda(s)\geq 0$ and the integral $\int_0^T \lambda(t)dt$ is not expected to diverge.\footnote{There are two cases where the integral may diverge, but we can safely exclude these possibilities as follows. i. $\lambda(t)$ may be a badly behaved function. We can reject this possibility outright in phylogenetics as there is every reason to expect the rate parameters to change smoothly with time. ii. $T\rightarrow\infty$. We can safely ignore this possibility as we will be assuming that the divergence times of the Markov chain are sufficiently small such that the phylogenetic historical signal is still obtainable.}

\subsection{Observability of the stochastic distance}
An interesting consideration (which at first sight is at odds with our aims) is that given a single random process modelled as a CTMC there is simply no way of inferring the value of the stochastic distance from an observed distribution without making restrictive assumptions about the process and the initial distribution. This is best illustrated by considering a stationary CTMC for which the rate-parameters are time-independent and given an initial distribution $\pi_i(0)$ satisfy
\beqn
\sum_{j}q_{ij}\pi_{j}(0)=0,\qquad\forall i.\nonumber
\eqn
Now, although the consequent distribution is time-independent, $\pi_i(t)=\pi_i(0)$, and hence carries zero informative value in comparison to the initial distribution, the stochastic distance itself increases linearly with time
\beqn
\omega(0,t)=-\left(\sum_{i}q_{ii}\right) t.\nonumber
\eqn
From this observation it is clear that in the general case if all we have access to is the final distribution, there is no way we can estimate the stochastic distance unless we make some additional assumptions about the stochastic process.\\
The remarkable fact is that in the case of phylogenetics it is possible to estimate the stochastic distance from the observed distribution. (As we will show in Section \ref{pairwise}, this is true even for the case where the underlying chains are stationary!)

\section{Pairwise distance measures}\label{pairwise}

In this section we will derive and discuss a standard approach to the construction of distance matrices. (For an excellent perspective of the various measures of phylogenetic pairwise distance see \cite{baake1999}.) A distance matrix, $\phi=[\phi_{ab}]_{(a,b)\in L}$, is constructed from the aligned sequence data of multiple extant taxa such that each entry gives a suitable estimate of the distance between a given pair of taxa. The mathematical conditions on the $\phi_{ab}$ are the standard conditions of a distance function as well as the four point condition \cite{steel1998} (which is required for the distance measure to be consistent with the tree structure):
\beqn\label{distance}
\phi_{ab}&\geq 0,\\
\phi_{ab}&=0\text{ iff }a=b,\\
\phi_{ab}&=\phi_{ba},\\
\phi_{ab}+\phi_{cd}&\leq\textit{max}\{\phi_{ac}+\phi_{bd},\phi_{ad}+\phi_{bc}\};\qquad\forall\ a,b,c,d\in L.
\eqn
There are no further conditions required upon $\phi$ for it to give a unique tree reconstruction \cite{steel1998}. However it is of course desirable for the distance measure to have a well defined biological interpretation. To this end, for a given edge $e$, we define the \textit{edge length}, $\omega_e$, which we set to be the stochastic distance (\ref{omega}) taken from the Markov model:
\beqn
\omega_e=-\log\det M_e.\nonumber
\eqn
It is then apparent that any significant estimate of pairwise distance must statistically be expected to converge to a value which is linearly related to the sum of the stochastic distances lying on the (unique) path between the two taxa under consideration. It should be clear that such a measure will satisfy the relations (\ref{distance}). It is crucial to the performance of the distance measure under a tree building algorithm that the parameters of the linear relation are expected to be \textit{constant} for all pairs of taxa. That is, given the unique path between leaf $a$ and $b$, $P(T;a,b)$, we are demanding that statistically we have the following convergence:
\beqn
\phi_{ab}\rightarrow\alpha\omega{(a,b)}+\beta,\nonumber
\eqn
where
\beqn
\omega{(a,b)}:=\sum_{e\in P(T;a,b)}\omega_e,\nonumber
\eqn
and $\alpha$ and $\beta$ are expected to be independent of $a$ and $b$. As we will see, the $\log\det$ formula does not satisfy this property for the most general models.
\\
\subsection{The $\log\det$ formula}
In Figure \ref{pic:twotaxa} we consider the two taxa phylogenetic tree, with pattern probabilities given by
\beqn\label{twotaxa}
p_{i_1i_2}=\sum_{j}m^{(1)}_{i_1j}m^{(2)}_{i_2j}\pi_j.
\eqn
\begin{figure}[t]
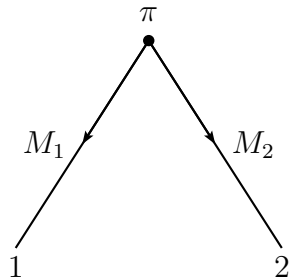

\centering
\pspicture[](0,0)(3,3)
\psset{linewidth=\pstlw,xunit=0.5,yunit=0.5,runit=0.5}
\psset{arrowsize=2pt 2,arrowinset=0.2}
\psline{->}(4,6)(2.25,3.25)
\psline{->}(4,6)(5.75,3.25)
\psline{-}(4,6)(0.5,0.5)
\psline{-}(4,6)(7.5,0.5)
\pscircle[linewidth=0.8pt,fillstyle=solid,fillcolor=black](4,6){0.15} 
\rput(4,6.7){$\pi$}
\rput(1.25,3.25){$M_1$}
\rput(6.75,3.25){$M_2$}
\rput(0.5,0){1}
\rput(7.5,0){2}
\endpspicture 
\caption{Phylogenetic tree of two taxa}
\label{pic:twotaxa}
\end{figure}
By considering the matrices defined as 
\beqn
P^{(1,2)}:&=\left[p_{ij}\right]_{(i,j)\in\mathcal{K}},\\
D_\pi:&=\left[diag(\pi_i)\right]_{i\in\mathcal{K}};\nonumber
\eqn
 it is easy to show that (\ref{twotaxa}) is equivalent to
\beqn
P^{(1,2)}=M_1D_\pi M_2^t.\nonumber
\eqn
Taking the determinant of this expression and considering (\ref{omega}) yields
\beqn\label{conc}
\det P^{(1,2)} &=\det M_1\det M_2\det D_\pi\\
&=e^{-(\omega_1+\omega_2)}\prod_i\pi_i.
\eqn
This expression can be generalized to the case of any two taxa from a given phylogenetic tree:
\beqn\label{det}
\det P^{(a,b)}=e^{-\omega(a,b)}\prod_i\pi_i^{(a,b)},
\eqn
where $\pi_i^{(a,b)}$ is the distribution at the most recent ancestral vertex between taxa $a$ and $b$ determined by the meeting point of the two paths traced backwards along the phylogenetic tree from leaf $a$ and $b$.\\
Now $\omega(a,b)$ is theoretically equal to the total stochastic distance between each of $a$ and $b$ and their most recent ancestral vertex and hence it is clear that $-\log\det P^{(a,b)}$ will be linearly related to this quantity. In the original formulation of the $\log\det$, a distance measure between two taxa was defined as
\beqn\label{logdet}
d_{ab}:&=-\log\det P^{(a,b)}\\
&=\omega(a,b)-\sum_i\log[\pi_i^{(a,b)}],
\eqn
and shown to satisfy the conditions (\ref{distance}) \cite{steel1994}. From this relation it seems that one can take $\alpha=1$ and $\beta=-\sum_i\log[\pi_i^{(a,b)}]$  and evaluate (\ref{logdet}) on the observed pattern frequencies for each pair of taxa to calculate a well defined distance matrix from a set of aligned sequence data (as was presented in \cite{lockhart1994}). This procedure depends crucially upon the shifting term $\beta=\sum_i\log[\pi_i^{(a,b)}]$ being independent of $a$ and $b$. However, this is only true in special circumstances such as star phylogeny or if the base composition is constant (the stationary model). In the general case, one is led to a different shifting term depending on the topology of the tree (this was noted in Sumner and Jarvis \cite{sumner2005} and we reproduce the result here). Consider the phylogenetic tree of three taxa given in Figure \ref{pic:threetaxaroot}
with pattern probabilities given by
\beqn\label{threetaxa}
p_{i_1i_2i_3}=\sum_{j,k}m^{(1)}_{i_1j}m^{(2)}_{i_2k}m^{(3)}_{i_3k}m^{(4)}_{kj}\pi_j.\nonumber
\eqn
By calculating (\ref{logdet}) for the three possible pairs of taxa we find that
\beqn
d_{12}&=(\omega_1+\omega_4+\omega_2)-\sum_i\log\pi_i,\\\nonumber
d_{13}&=(\omega_1+\omega_4+\omega_3)-\sum_i\log\pi_i,\\
d_{23}&=(\omega_1+\omega_3)-\sum_i\log\rho_i,
\eqn
from which it is explicitly clear that the shifting term is \textit{not} constant across this phylogenetic tree. The shifting term is dependent on the base composition at the most recent ancestral node of the two taxa and from the above example it is clear that this depends on the topology of the tree and is not always simply the root of the tree. This means that (\ref{logdet}) does not produce distance matrices whose entries are linearly related to the edge length of the tree because the entries of the matrix will depend essentially upon the topology of the tree. 
\begin{figure}[t]
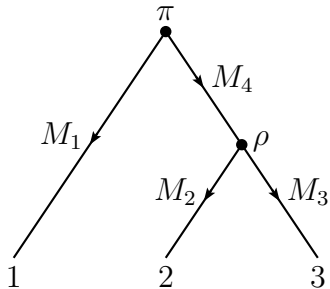

\centering
\pspicture[](0,0)(3,3)
\psset{linewidth=\pstlw,xunit=0.5,yunit=0.5,runit=0.5}
\psset{arrowsize=2pt 2,arrowinset=0.2}
\psline{->}(4,6)(2,3)
\rput(1.25,3.25){$M_1$}
\pscircle[linewidth=0.8pt,fillstyle=solid,fillcolor=black](4,6){0.15}
\rput(4,6.5){$\pi$} 
\psline{-}(2.2,3.3)(0,0)
\rput(0,-.5){1}
\psline{->}(4,6)(5,4.5)
\rput(5.75,4.75){$M_4$}
\psline{-}(4.8,4.8)(6,3)
\pscircle[linewidth=0.8pt,fillstyle=solid,fillcolor=black](6,3){0.15}
\rput(6.5,3.1){$\rho$}
\psline{->}(6,3)(5,1.5)
\rput(4.25,1.75){$M_2$}
\psline{-}(5.2,1.8)(4,0)
\rput(4,-.5){2}
\psline{->}(6,3)(7,1.5)
\rput(7.75,1.75){$M_3$}
\psline{-}(6.8,1.8)(8,0)
\rput(8,-.5){3}
\endpspicture  
\caption{Phylogenetic tree of three taxa}
\label{pic:threetaxaroot}
\end{figure}
\\
It is, however, possible to obtain an estimate of the total stochastic distance between any two taxa by modifying the $\log\det$ formula. The ancestral base composition is approximated by using the harmonic mean
\beqn\label{harmonicmean}
\prod_i\pi^{(a,b)}_i\approx[\prod_{i_1,i_2}\pi^{(a)}_{i_1}\pi^{(b)}_{i_2}]^{\frac{1}{2}},
\eqn
where $\pi^{(a,b)}_k$ is the closest common ancestral base composition between taxa $a$ and $b$ and $\pi^{(a)}_i:=\mathbb{P}(X_a(\tau_a)=i)$ (and similarly for $b$). One is then led to the formula
\beqn\label{logdetharm}
d'_{ab}:=-\log\det P^{(a,b)}+\fra{1}{2}\sum_{i_1,i_2}(\log\pi^{(a)}_{i_1}+\log\pi^{(b)}_{i_2}),\qquad\forall\ a,b\in L.
\eqn
where $d'_{ab}$ is then an estimator of the total stochastic distance between taxa $a$ and $b$. (This form of the $\log\det$ formula was presented in \cite{lockhart1994} and \cite{steel1998}).
\\
In the case of a stationary base composition model the additional assumption is made that
\beqn
\sum_{j}m^{(e)}_{ij}\pi_j=\pi_i;\qquad\forall\ e\in E.\nonumber
\eqn
In this case we have
\beqn
\pi_i^{(a,b)}=\pi^{(a)}_i=\pi^{(b)}_i,\qquad\forall\ a,b\in L,\nonumber
\eqn
and it is clear that the harmonic mean approximation becomes an exact relation and the $\log\det$ formula is expected to converge exactly to the total stochastic distance between the two taxa.

\subsection{The tangle}

In this section we will show how the $\log\det$ formula can be generalized to obtain, for the most general Markov models,  an unbiased estimate of the distance matrix. The basis of the technique is the existence of a measure analogous to (\ref{conc}) which is valid for \textit{triplets}. \\
Sumner and Jarvis \cite{sumner2005} presented a polynomial function $\mathcal{T}$ which is known in quantum physics as the tangle and can be evaluated on phylogenetic data sets of three aligned sequences in the case of $n=2$. Evaluated on the pattern probabilities of any phylogenetic tree of three taxa, $\{a,b,c\}$, the tangle takes on the theoretical value 
\beqn\label{tangle}
\mathcal{T}(a,b,c)=e^{-2\omega(a,b,c)}\left(\prod_{i\in\mathcal{K}}\pi_i\right)^2,
\eqn
where 
\beqn
\omega{(a,b,c)}:=\sum_{e\in T}\omega_e,\nonumber
\eqn
$\pi$ is the common ancestral root of the three taxa and this relation holds independently of the particular tree topology which relates $\{a,b,c\}$. This independence upon the topology is a very nice property and is crucial to the practical use of the tangle as a distance measure. The similarity between (\ref{tangle}) and (\ref{det}) should be noted.\\
In this chapter we report generalized tangles, which are polynomials which satisfy (\ref{tangle}) for the cases of $n=3,4$ in addition to the $n=2$ case which was presented in \cite{sumner2005}. It is possible to infer the existence of the tangles and derive their polynomial form from group theoretical considerations. Here we give forms using the completely antisymmetric (Levi-Civita) tensor, $\epsilon$, which has components $\epsilon_{i_1i_2...i_n}$ and satisfies $\epsilon_{12...n}=1$. For the cases of $n=2,3,4$ the tangles are given by\footnote{This expression for $\mathcal{T}_2$ corrects for the erroneous expression presented in \cite{sumner2005}.}

\noindent
$
\mathcal{T}_2=\fra{1}{2!}\sum_1^2 p_{i_1i_2i_3}p_{j_1j_2j_3}p_{k_1k_2k_3}p_{l_1l_2l_3}\epsilon_{i_1j_1}\epsilon_{i_2j_2}\epsilon_{k_1l_1}\epsilon_{k_2l_2}\epsilon_{i_3l_3}\epsilon_{j_3k_3},$

\noindent
$
\mathcal{T}_3=\fra{1}{3!}\sum_1^3p_{i_1i_2i_3}p_{j_1j_2j_3}p_{k_1k_2k_3}p_{l_1l_2l_3}p_{m_1m_2m_3}p_{n_1n_2n_3}$\\ 
\hspace*{140pt}$\cdot\epsilon_{i_1j_1k_1}\epsilon_{j_2k_2l_2}\epsilon_{k_3l_3m_3}\epsilon_{l_1m_1n_1}\epsilon_{m_2n_2i_2}\epsilon_{n_3i_3j_3},$

\noindent
$
\mathcal{T}_4=\fra{1}{4!}\sum_1^4p_{i_1j_1k_1}p_{i_2j_2k_2}p_{i_3j_3k_3}p_{i_4j_4k_4}p_{i_5j_5k_5}p_{i_6j_6k_6}p_{i_7j_7k_7}p_{i_8j_8k_8}$\\ 
\hspace*{140pt}$\cdot\epsilon_{i_1i_2i_3i_4}\epsilon_{i_5i_6i_7i_8}\epsilon_{j_1j_5j_4j_8}\epsilon_{j_2j_6j_3j_7}\epsilon_{k_1k_5k_2k_6}\epsilon_{k_3k_7k_4k_8};
$

\noindent
respectively, (where the summation is over every index). The expression (\ref{tangle}) can be proved by studying the group theoretical properties of the tangle (see \cite{sumner2005}) and by explicitly expanding the above forms. For the tangle on two characters we find
\beqn
\mathcal{T}_2=&-p_{122}^2p_{211}^2 + 2p_{121}p_{122}p_{211}p_{212} - p_{121}^2p_{212}^2 + 2p_{112}p_{122}p_{211}p_{221} +\nonumber
\\& 2p_{112}p_{121}p_{212}p_{221} - 4p_{111}p_{122}p_{212}p_{221} - p_{112}^2p_{221}^2 - 4p_{112}p_{121}p_{211}p_{222} +\\& 
2p_{111}p_{122}p_{211}p_{222} + 2p_{111}p_{121}p_{212}p_{222} + 
2p_{111}p_{112}p_{221}p_{222} - p_{111}^2p_{222}^2.
\eqn
Substantial computer power is required to explicitly compute $\mathcal{T}_3$ and $\mathcal{T}_4$. These polynomials have 1152 and 431424 terms, respectively. 
\subsection{Star topology}
Consider the phylogenetic tree relating three taxa with a star topology:
\beqn
\\
\\
\pspicture[](5,0)(1,2.5)
\psset{linewidth=\pstlw,xunit=0.5,yunit=0.5,runit=0.5}
\psset{arrowsize=2pt 2,arrowinset=0.2}
\psline{-}(4,3)(0,3)
\rput(-.25,3){1}
\psline{-}(4,3)(8,0)
\rput(8.3,-0.15){3}
\psline{-}(4,3)(8,6)
\rput(8.3,6.15){2}
\psline{->}(4,3)(2,3)
\rput(2,3.6){$M_1$}
\psline{->}(4,3)(6,4.5)
\rput(5.8,5.3){$M_2$}
\psline{->}(4,3)(6,1.5)
\rput(5.75,.75){$M_3$}
\pscircle[linewidth=0.8pt,fillstyle=solid,fillcolor=black](4,3){.2} 
\rput(4,3.5){$\pi$}
\endpspicture  \nonumber
\\
\\
\eqn
with pattern probabilities given by the formula
\beqn
p_{i_1i_2i_3}=\sum_{j}m^{(1)}_{i_1j}m^{(2)}_{i_2j}m^{(3)}_{i_3j}\pi_j.\nonumber
\eqn
 Here we will use the fact that the root of this tree is also the common ancestral root of any pair of the three taxa. (This is not the case in general if we allow for a general rooting of the tree and/or more than three taxa. The complications arising in these cases will be dealt with in the next section.)
 \\
Considering the formulae (\ref{tangle}) and (\ref{conc}) we are led to introduce the novel distance matrix, $\Delta$, with the pairwise distance between $\{a,b\}$ given by
\beqn\label{dist}
\Delta^{(c)}_{ab}:=-\log\mathcal{T}(a,b,c)+\log{\det{P^{(a,c)}}+\log\det{P}^{(b,c)}},\qquad a,b,c\in L.
\eqn
From (\ref{conc}) and (\ref{tangle}) it follows that 
\beqn
\Delta^{(c)}_{ab}=\omega(a,b),\nonumber
\eqn
such that our new formula will directly give the stochastic distance between the two taxa. There is no need to make the harmonic mean approximation and this distance measure is mathematically and biologically meaningful. This is the main result of this chapter: given a set of aligned sequence data, the tangle formula (\ref{dist}) can be used to compute the \textit{exact} pairwise edge lengths for any triplet. As mentioned above, the explicit polynomial form of the tangle has been computed for the cases of two, three and four bases and it is our intent that (\ref{dist}) will provide a significant improvement over the $\log\det$ formula in the calculation of pairwise distance matrices for these cases.

\subsection{Summary}
Considering the stochastic distance to be the correct way to assign edge lengths to branches of a phylogenetic tree, we have reviewed three different ways of obtaining a distance measure between any two taxa $a$ and $b$:
\begin{enumerate}
\item {$d_{ab}=-\log \det P^{(a,b)}$}
\item {$d'_{ab}=-\log\det P^{(a,b)}+\frac{1}{2}\sum_{i_1,i_2}(\log\pi^{(a)}_{i_1}+\log\pi^{(b)}_{i_2})$}
\item {$\Delta^{(c)}_{ab}=-\log\mathcal{T}(a,b,c)+\log{\det{P^{(a,c)}}+\log\det{P}^{(b,c)}}$}
\end{enumerate}
where one substitutes the observed pattern frequencies into these expressions. From the previous considerations we found that these three distance measures have the following properties:
\begin{enumerate}
\item When $d_{ab}$ is evaluated on a set of observed pattern frequencies, this estimator satisfies the requirements of a distance function (\ref{distance}), but is inconsistent with the general Markov model as the estimate is \textit{not} expected to converge to a value that is linearly related to $\omega(a,b)$.
\item When $d'_{ab}$ is evaluated on a set of observed pattern frequencies, this estimator satisfies the requirements (\ref{distance}) and is expected to converge to a value that is linearly related to $\omega(a,b)$ whenever the compositional heterogeneity is absent. In the heterogeneous case this quantity approximates $\omega(a,b)$ by using (\ref{harmonicmean}).
\item When $\Delta^{(c)}_{ab}$ is evaluated on a set of observed pattern frequencies, this estimator satisfies the requirements of (\ref{distance}) and is expected to converge \text{exactly} to $\omega(a,b)$ in all cases.
\end{enumerate}\noindent
Thus we see that the tangle formula (\ref{dist}) should be a significant improvement as an empirical estimator of $\omega(a,b)$ upon both forms of the $\log\det$ formula. However, the formula (\ref{dist}) depends on taking an arbitrary third taxon, $c$. The question remains as to what to do in the case of constructing pairwise distances for sets of greater than three taxa. The surprising answer to this question will be addressed in the next section where we will bring into question the uniqueness of the theoretical quantity $\omega(a,b)$. The discussion has consequences for the interpretation of each of the estimators of pairwise distances that we have discussed.
\\

\section{Generalized pulley principle}

In this section we generalize the Felsenstein's pulley principle \cite{felsenstein1981}. In its original formulation the pulley principle describes the unrootedness of phylogenetic trees where the underlying Markov model is assumed to be reversible and stationary. Here we show how the pulley principle may be generalized to remain valid under the most general Markov models. Our immediate motivation is to show that (\ref{dist}) remains a valid distance measure under the circumstance of a general phylogenetic tree of multiple taxa. Unfortunately this generalization introduces surprising mathematical complications which have consequences not only for our formula (\ref{dist}), but also for the $\log\det$ technique and any other estimate of the stochastic distance upon a phylogenetic tree. The discussion will lead to the consequence that, for a given tree topology, there are multiple -- actually, infinitely many -- phylogenetic trees with identical probability distributions. (These phylogenetic trees differ by arbitrary rerootings and consequential redirection of edges.)  We will see that the generalized pulley principle shows that as far as inference from the observed pattern frequencies is concerned, there is no theoretical justification behind specifying the root of a phylogenetic tree if the most general Markov model is allowed. Also, we will see that the theoretical value of the stochastic distance is not constant for arbitrary rerootings of a phylogenetic tree. Clearly, if the stochastic distance is not uniquely defined theoretically, then one must be careful in interpreting any formula that gives an estimate thereof from the observed data.\\
Considering a phylogenetic tree as a directed graph shows that a rerooting involves redirecting an edge (or part thereof). The property required is that the Markov chain on the involved edge is taken to progress as if time has been reversed, and we refer to the new chain as the \textit{time-reversed} chain. This should be compared to the requirement of \textit{reversibility} as defined in the mathematical literature, (for example see \cite{isoifescu1980}). In the case of a stationary and reversible Markov chain the time-reversed chain (as we will define) is identical to the original chain.
\\ 
By way of example, we take the rooted tree of three taxa (\ref{threetaxa}) and redirect the relevant internal edge to give the following rerooting:
\beqn
\\
\\
\pspicture[](8.75,0)(1,2.5)
\psset{linewidth=\pstlw,xunit=0.5,yunit=0.5,runit=0.5}
\psset{arrowsize=2pt 2,arrowinset=0.2}
\psline{->}(4,6)(2.25,3.25)
\psline{->}(4,6)(4.875,4.675)
\psline{-}(4,6)(5.75,3.25)
\psline{-}(4,6)(0.50,0.50)
\psline{->}(5.75,3.25)(6.625,1.875)
\psline{-}(5.75,3.25)(7.50,0.50)
\psline{->}(5.75,3.25)(4.875,1.875)
\psline{-}(5.75,3.25)(4,0.5)
\pscircle[linewidth=0.8pt,fillstyle=solid,fillcolor=black](4,6){.15}
\pscircle[linewidth=0.8pt,fillstyle=solid,fillcolor=black](5.75,3.25){.15}
\rput(4,6.5){$\pi$}
\rput(1.25,3.25){$M_1$}
\rput(5.875,4.675){$M$}
\rput(4,1.875){$M_2$}
\rput(6.3,3.25){$\rho$}
\rput(7.625,1.875){$M_3$}
\rput(0.5,0){1}
\rput(4,0){2}
\rput(7.5,0){3}
\rput(4,-1){\text{rooted at $\pi$}}
\rput(16,-1){\text{rooted at $\rho$}}
\rput(10,3.25){$\Rightarrow$}
\psline{->}(16,6)(14.25,3.25)
\psline{-<}(16,6)(16.875,4.675)
\psline{-}(16,6)(17.75,3.25)
\psline{-}(16,6)(12.50,0.50)
\psline{->}(17.75,3.25)(18.625,1.875)
\psline{-}(17.75,3.25)(19.50,0.50)
\psline{->}(17.75,3.25)(16.875,1.875)
\psline{-}(17.75,3.25)(16,0.5)
\pscircle[linewidth=0.8pt,fillstyle=solid,fillcolor=black](16,6){.15}
\pscircle[linewidth=0.8pt,fillstyle=solid,fillcolor=black](17.75,3.25){.15}
\rput(16,6.5){$\pi$}
\rput(13.25,3.25){$M_1$}
\rput(17.875,4.675){$N$}
\rput(16,1.875){$M_2$}
\rput(18.3,3.25){$\rho$}
\rput(19.625,1.875){$M_3$}
\rput(12.5,0){1}
\rput(16,0){2}
\rput(19.5,0){3}
\endpspicture  
\label{pic:reroot}
\\
\\
\eqn
Our immediate task is to infer the existence of an appropriate time-reversed Markov chain, $N$, such that these two phylogenetic trees give identical probability distributions. If we equate the pattern probabilities of (\ref{pic:reroot}) and contract all edges except the one we are reversing, we are led to the simple algebraic solution
\beqn\label{reversed}
n_{ij}=\frac{m_{ji}\pi_i}{\rho_j}.
\eqn 
(This solution was presented in \cite{steel1994b}.) Presently we use this result to give an explicit form in the general case.
\\
Given a CTMC $X(t)$ with transition probabilities 
\beqn
m_{ij}(t,s):=\mathbb{P}(X(t)=i|X(s)=j),\nonumber
\eqn
we wish to find a second CTMC, $Y(t)$, such that, given any $T\geq 0$, we have
\beqn
\mathbb{P}(Y(t)=i)=\pi_i(T-t),\qquad\forall\quad 0\leq t\leq T.\nonumber
\eqn
That is, if the direction of time is reversed, the second CTMC $Y(t)$ has identical distribution to $X(t)$. The uniqueness of $Y(t)$ is a technical matter which we do not consider, because in the phylogenetic case there are extra restrictions which led to the unique solution (\ref{reversed}).\\ 
Considering again the general case, we write
\beqn
\mathbb{P}(Y(t)=i|Y(s)=j):=n_{ij}(t,s)\nonumber
\eqn
and use (\ref{reversed}) to infer the general solution
\beqn\label{reverse}
n_{ij}(t,s)=\frac{m_{ji}(T-s,T-t)\pi_{i}(T-t)}{\pi_{j}(T-s)}.
\eqn
It is trivial to show that these transition probabilities satisfy the requirements of a CTMC:
\beqn
\sum_jn_{ji}(t,s)&=1,\qquad\forall\ i,\nonumber\\
N(t,s)N(s,u)&=N(t,u),
\eqn
where $N(t,s)=[n_{ij}(t,s)]_{(i,j\in\mathcal{K})}$.\\
Furthermore, by using (\ref{kalamorov}) we find that the rate parameters of the time-reversed chain can be expressed as
\beqn
f_{ij}(s):&=\frac{\partial n_{ij}(t,s)}{\partial t}|_{t=s}\\
&=\frac{q_{ji}(T-s)\pi_{i}(T-s)}{\pi_{j}(T-s)}-\sum_k\frac{\delta_{ij}q_{ik}(T-s)\pi_{k}(T-s)}{\pi_{j}(T-s)}\nonumber
\eqn
From which it follows that
\beqn
f_{ij}(s)\geq 0,\quad \forall i\neq j;\qquad f_{ii}(s)=-\sum_{j\neq i} f_{ji}(s)\nonumber
\eqn
which confirms that the $f_{ij}(s)$ are a valid set of rate parameters for a CTMC (as expected). It should be noted that even in the case where $X(t)$ is a homogeneous chain it is certainly not the case in general that $Y(t)$ is also homogeneous. Consider, however, the stationary and reversible case, with the respective conditions:
\beqn
\sum_jq_{ij}\pi_j(0)&=0,\\
q_{ij}\pi_j(0)&=q_{ji}\pi_i(0),\nonumber
\eqn
where the stationarity condition ensures that 
\beqn
\pi_i(t)=\pi_i(0),\qquad\forall t.\nonumber
\eqn
 In this circumstance it follows that
\beqn
f_{ij}=q_{ij},\nonumber
\eqn
such that $Y(t)\equiv X(t)$ and is hence also stationary and reversible. This was the basis of Felsenstein's initial formulation of the pulley principle -- if one considers only stationary and reversible Markov chains on a phylogenetic tree, any time-reversed chain is identical to the original Markov chain and hence a phylogenetic tree can be arbitrarily rerooted. We have given a continuous time generalization of Felsenstein's result which removes the stationary and reversible restriction.
\\
Equipped with the solution (\ref{reverse}) it is possible to take any phylogenetic tree and find an alternative tree of identical topology, but rooted in a different place, such that the alternative tree generates an identical probability distribution to that of the original. This is the basis of our generalized pulley principle. 
\\
The reader should note that we have proven, under the assumptions of the most general Markov model, that it is not possible to determine the orientation of a phylogenetic tree by only considering the joint probability distribution it generates at the leaves. Thus, any procedure that attempts to determine the root from the observed pattern frequencies must be justified by making additional assumptions about the underlying stochastic process. Chang \cite{chang1996} showed that the tree topology and (up to permutations of rows) the set of transition matrices, are reconstructible from the set of \textit{triples} of the joint distribution at the leaves. This is consistent with our result as Chang explicitly prohibited internal nodes with two incident edges and worked with unrooted/unorientated trees. Baake \cite{baake1998} showed that (up to similarity transformation) the \textit{return-trip} matrices (in our notation $M(s,t)N(t,s)$) are identifiable from the set of \textit{pairwise} joint distributions at the leaves. Again this is consistent with our result.
\\
The curious aspect of the generalized pulley principle is that the stochastic distance is \textit{not} conserved along the edge of the tree where the directedness was reversed. This is easy to show by considering the determinant of (\ref{reverse})
\beqn\label{consistency}
\det N(t,s)=\det M(T-s,T-t)\prod_i\frac{\pi_i(T-t)}{\pi_i(T-s)}
\eqn
Thus the stochastic distance in the reversed time chain is equal to that of the original chain if and only if
\beqn\label{iff}
\prod_i\frac{\pi_i(T-t)}{\pi_i(T-s)}=1.
\eqn
This property of CTMCs and their time-reversed counterparts was observed by Barry and Hartigan \cite{barry1987}. It can be seen that in the stationary case (\ref{iff}) will certainly be true. There are other cases where (\ref{iff}) may hold but there does not seem to any biologically sound way to interpret the required condition. In the proceeding discussion we will consider the consequences of the generalized pulley principle upon the interpretation of distance matrices. We see that for a given observed distribution we can use the generalized pulley principle to show that there are multiple edge length assignments using the stochastic distance which are consistent with the Markov model on a phylogenetic tree. These edge length assignments differ from one another as a consequence of (\ref{consistency}).

\subsection{Interpretation}

For illustrative purposes we consider the consequence to the stochastic distance of the rerooting of a phylogenetic tree of two taxa. We consider the phylogenetic trees illustrated in Figure \ref{pic:twotaxapulley}, and by using the generalized pulley principle define their respective transition matrices so that their probability distributions are identical: 
\beqn
n_{ij}&=\frac{m_{ji}\pi_{i}}{\rho_j},\nonumber\\
\rho_i&=\sum_jm_{ij}\pi_j.
\eqn
We find in the first case that we have
\beqn
\omega_\pi(1,2)=-\log\det M_1-\log\det M-\log\det M_2,\nonumber
\eqn
and in the second case
\beqn
\omega_\rho(1,2)=-\log\det M_1-\log\det N-\log\det M_2.\nonumber
\eqn
Now in general $\det M\neq \det N$ and we see that the two possible pairwise distances are not expected to be equal. However, from an empirical perspective it is impossible to distinguish these two possible theoretical scenarios because the probability distributions are identical. Now, because any estimator of the pairwise distance must be inferred from the observed distribution, we conclude that one must be careful to consider exactly what theoretical quantity one is obtaining an estimate of. For the case of the $\log\det$ formula we find that the quantity it is estimating depends essentially upon the base composition of the observed sequences as follows: 
\\
Considering the pairwise distance $d'_{ab}$ given by (\ref{logdetharm}), from the generalized pulley principle we see that this formula will give an estimate of the stochastic distance between $a$ and $b$, where the common ancestral node is placed such that the quantity
\beqn
\chi(a,b):=\prod_i\pi^{(a,b)}_i-\left[\prod_{i_1,i_2}\pi^{(a)}_{i_1}\pi^{(b)}_{i_2}\right]^{\frac{1}{2}}\nonumber
\eqn 
is minimized. Thus the $\log\det$ method will be inconsistent in the sense that, if there has been compositional heterogeneity, the pairwise distance it produces will be an estimate for the edge length assignment where $\chi(a,b)$ is minimized. This may have nothing to do with true placement of the common ancestral vertex and it may even be the case that $\chi(a,b)$ has multiple minimum points. The situation amounts to the fact that, for a given phylogenetic tree, one is (potentially) using the $\log\det$ to estimate pairwise distances with a different edge length assignment for each and every pair of taxa. Clearly for the analysis of multiple taxa this could be become a significant problem and any alternative approach which removes this inconsistency would be beneficial to the analysis. 
\\
We see that the consequences of the generalized pulley principle and (\ref{consistency}) to the interpretation of the Markov model of phylogenetics are quite subtle. The generalized pulley principle is telling us that there is no direct way to distinguish the rootedness (and equivalently the directedness of internal edges) of phylogenetic trees. This is due to the fact that there are (infinitely) many phylogenetic trees of identical topology which generate identical probability distributions, differing only by the assignment of stochastic distance and the associated redirection of internal edges.
\begin{figure}[t]
\pspicture[](-1.5,-.4)(3,3)
\psset{linewidth=\pstlw,xunit=0.5,yunit=0.5,runit=0.5}
\psset{arrowsize=2pt 2,arrowinset=0.2}
\psline{->}(4,6)(2.25,3.25)
\psline{->}(4,6)(4.875,4.675)
\psline{-}(4.875,4.675)(5.75,3.25)
\psline{-}(4,6)(0.50,0.50)
\psline{->}(5.75,3.25)(6.625,1.875)
\psline{-}(4,6)(7.50,0.50)
\pscircle[linewidth=0.8pt,fillstyle=solid,fillcolor=black](4,6){.15}
\pscircle[linewidth=0.8pt,fillstyle=solid,fillcolor=black](5.75,3.25){.15}
\rput(4,6.5){$\pi$}
\rput(1.25,3.25){$M_1$}
\rput(5.875,4.675){$M$}
\rput(6.3,3.25){$\rho$}
\rput(7.625,1.875){$M_2$}\rput(0.5,0){1}
\rput(7.5,0){2}
\rput(4,-1){\text{rooted at $\pi$}}
\rput(10.25,3){$\cong$}
\psline{->}(16,6)(14.25,3.25)
\psline{-<}(16,6)(16.875,4.675)
\psline{-}(16,6)(17.75,3.25)
\psline{-}(16,6)(12.50,0.50)
\psline{->}(17.75,3.25)(18.625,1.875)
\psline{-}(17.75,3.25)(19.50,0.50)
\pscircle[linewidth=0.8pt,fillstyle=solid,fillcolor=black](16,6){.15}
\pscircle[linewidth=0.8pt,fillstyle=solid,fillcolor=black](17.75,3.25){.15}
\rput(16,6.5){$\pi$}
\rput(13.25,3.25){$M_1$}
\rput(17.875,4.675){$N$}
\rput(18.3,3.25){$\rho$}
\rput(19.625,1.875){$M_2$}\rput(12.5,0){1}
\rput(19.5,0){2}
\rput(16,-1){\text{rooted at $\rho$}}
\endpspicture
\caption{Using the generalized pulley principle}
\label{pic:twotaxapulley}
\end{figure}

\section{The quartet case}

In this section we will show that in the case of a phylogenetic tree of four taxa, the tangle can be used to construct consistent quartet distance matrices. These distance matrices will be consistent in the sense that theoretically they are constructed from one topology with \textit{one} edge length assignment. This should be compared to the $\log\det$ formula which in the general case can be estimating a different edge length assignment for each and every pairwise distance. 
\\
For analytic purposes we use the generalized pulley principle to root the four taxon tree in two ways, as illustrated in Figure \ref{pic:fourtaxapulley}. The difference between the two cases is simply in the directedness of the internal edge and the generalized pulley principle allows us to calculate the required transition probabilities so that the two trees generate identical probability distributions. The pattern probabilities for the two cases are given by
\beqn\label{fourtaxapulley}
p_{i_1i_2i_3i_4}&=\sum_{j,k}m^{(1)}_{i_1j}m^{(2)}_{i_2j}m^{(3)}_{i_3k}m^{(4)}_{i_4k}m^{(5)}_{kj}\pi_j\\
&=\sum_{j,k}m^{(1)}_{i_1k}m^{(2)}_{i_2k}m^{(3)}_{i_3j}m^{(4)}_{i_4j}n^{(5)}_{kj}\rho_j
\eqn
where to ensure the equality of the two expressions we have
\beqn
n^{(5)}_{ij}=\frac{m^{(5)}_{ji}\pi_{i}}{\rho_j},\nonumber
\eqn
and $\rho_i=\sum_jm^{(5)}_{ij}\pi_j$.
\begin{figure}[t]
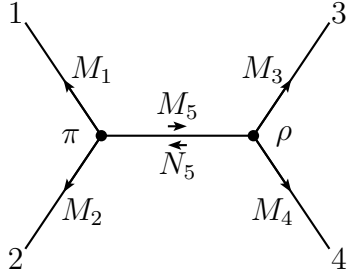

\label{fig:fourtaxa1}
\pspicture[](-5,0)(3,3)
\psset{linewidth=\pstlw,xunit=0.5,yunit=0.5,runit=0.5}
\psset{arrowsize=2pt 2,arrowinset=0.2}
\psline{-}(2,3)(0,0)
\rput(-.25,-.25){2}
\psline{-}(2,3)(0,6)
\rput(-.25,6.25){1}
\psline{-}(2,3)(6,3)
\psline{-}(6,3)(8,6)
\rput(8.25,6.25){3}
\psline{-}(6,3)(8,0)
\rput(8.25,-0.25){4}
\psline{->}(2,3)(1,4.5)
\rput(1.75,4.75){$M_1$}
\psline{->}(2,3)(1,1.5)
\rput(1.5,1.0){$M_2$}
\psline{->}(6,3)(7,4.5)
\rput(6.25,4.75){$M_3$}
\psline{->}(6,3)(7,1.5)
\rput(6.5,1.0){$M_4$}
\psline{->}(3.75,3.25)(4.25,3.25)
\rput(4,3.8){$M_5$}
\psline{<-}(3.75,2.75)(4.25,2.75)
\rput(4,2.2){$N_5$}
\pscircle[linewidth=0.8pt,fillstyle=solid,fillcolor=black](2,3){.15} 
\pscircle[linewidth=0.8pt,fillstyle=solid,fillcolor=black](6,3 ){.15} 
\rput(1.2,3){$\pi $}
\rput(6.8,3){$\rho$}
\endpspicture 
\caption{Four taxa tree with alternative roots}
\label{pic:fourtaxapulley}
\end{figure}
\\\noindent
From these expressions we wish to calculate the theoretical values of the formula (\ref{dist}) for each possible group of three taxa. To obtain these values one simply chooses the form of the tree such that after the deletion of a fourth taxon one is left with a three taxon tree of star topology. By sequentially deleting one taxon at a time we are led to the four star topology subtrees illustrated in Figure \ref{pic:subtrees} and the corresponding pattern probabilities are given by the expressions
\beqn
p^{(123)}_{ijk}&=\sum_{l_1,l_2}m^{(1)}_{il_1}m^{(2)}_{jl_1}m^{(3)}_{kl_2}m^{(5)}_{l_2l_1}\pi_{l_1},\\\nonumber
p^{(124)}_{ijk}&=\sum_{l_1,l_2}m^{(1)}_{il_1}m^{(2)}_{jl_1}m^{(4)}_{kl_2}m^{(5)}_{l_2l_1}\pi_{l_1},\\
p^{(134)}_{ijk}&=\sum_{l_1,l_2}m^{(1)}_{il_2}n^{(5)}_{l_2l_1}m^{(2)}_{jl_1}m^{(4)}_{kl_1}\rho_{l_1},\\
p^{(234)}_{ijk}&=\sum_{l_1,l_2}m^{(2)}_{il_2}n^{(5)}_{l_2l_1}m^{(3)}_{jl_1}m^{(4)}_{kl_1}\rho_{l_1}.\\
\eqn
From this it is easy to calculate the values simply by considering the results of the previous section:
\beqn\label{fourdelta}
\Delta_{12}^{(3)}&=\omega(1,2),\qquad 
&\Delta_{12}^{(4)}&=\omega(1,2),\\
\Delta_{13}^{(2)}&=\omega_\pi(1,3),\qquad 
&\Delta_{13}^{(4)}&=\omega_\rho(1,3),\\
\Delta_{14}^{(2)}&=\omega_\pi(1,4),\qquad 
&\Delta_{14}^{(3)}&=\omega_\rho(1,4),\\
\Delta_{23}^{(1)}&=\omega_\pi(2,3),\qquad 
&\Delta_{23}^{(4)}&=\omega_\rho(2,3),\\
\Delta_{24}^{(1)}&=\omega_\pi(2,4),\qquad 
&\Delta_{24}^{(3)}&=\omega_\rho(2,4),\\
\Delta_{34}^{(1)}&=\omega(3,4),\qquad 
&\Delta_{34}^{(2)}&=\omega(3,4),
\eqn
where
\beqn
\omega(a,b)&=\omega_a+\omega_b,\\
\omega_\pi(a,b)&=\omega_a+\omega_m+\omega_b,\\
\omega_\rho(a,b)&=\omega_a+\omega_n+\omega_b,\\
\omega_m&=-\log\det M,\\
\omega_n&=-\log\det N;
\nonumber
\eqn
and we have made use of (\ref{consistency}) in the form
\beqn
\omega_n=\omega_m-\sum_i(\log\pi_i-\log\rho_i).\nonumber
\eqn
\begin{figure}[t]
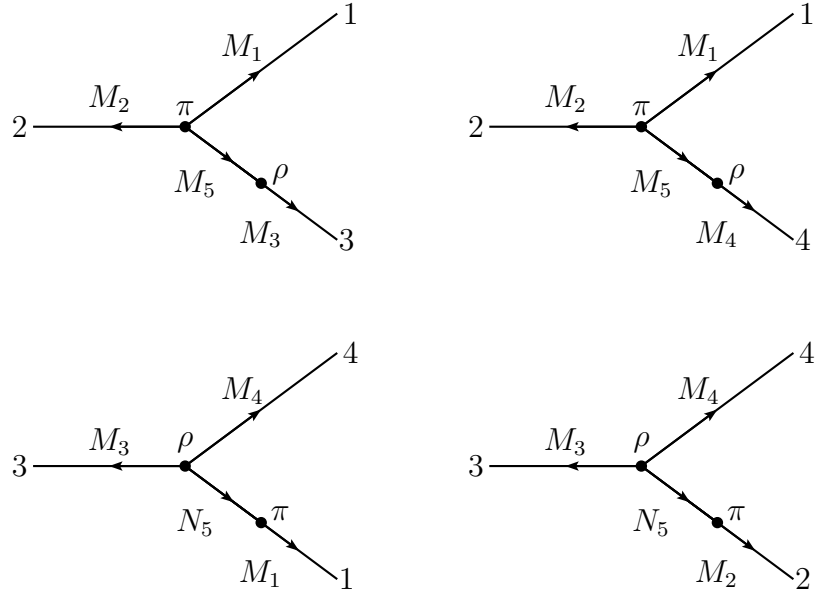

\pspicture[](-2,-5)(3,3)
\psset{linewidth=\pstlw,xunit=0.5,yunit=0.5,runit=0.5}
\psset{arrowsize=2pt 2,arrowinset=0.2}
\rput(4,3.5){$\pi$}
\psline{-}(4,3)(0,3)
\rput(-.35,3){2}
\psline{-}(4,3)(8,0)
\rput(8.25,0){3}
\psline{-}(4,3)(8,6)
\rput(8.35,6){1}
\psline{->}(4,3)(2,3)
\rput(2,3.75){$M_2$}
\psline{->}(4,3)(6,4.5)
\rput(5.5,5.1){$M_1$}
\psline{-}(4,3)(6,1.5)
\psline{>-}(5,2.25)(6,1.5)
\rput(4.25,1.5){$M_5$}
\rput(6.5,1.75){$\rho$}
\psline{->}(4,3)(7,.75)
\rput(5.969,0.15){$M_3$}
\pscircle[linewidth=0.8pt,fillstyle=solid,fillcolor=black](6,1.5){.15} 
\pscircle[linewidth=0.8pt,fillstyle=solid,fillcolor=black](4,3){.15} 
\rput(16,3.5){$\pi$}
\psline{-}(16,3)(12,3)
\rput(11.65,3){2}
\psline{-}(16,3)(20,0)
\rput(20.25,0){4}
\psline{-}(16,3)(20,6)
\rput(20.35,6){1}
\psline{->}(16,3)(14,3)
\rput(14,3.75){$M_2$}
\psline{->}(16,3)(18,4.5)
\rput(17.5,5.1){$M_1$}
\psline{-}(16,3)(18,1.5)
\psline{>-}(17,2.25)(18,1.5)
\rput(16.25,1.5){$M_5$}
\rput(18.5,1.75){$\rho$}
\psline{->}(16,3)(19,.75)
\rput(17.969,0.15){$M_4$}
\pscircle[linewidth=0.8pt,fillstyle=solid,fillcolor=black](18,1.5){.15} 
\pscircle[linewidth=0.8pt,fillstyle=solid,fillcolor=black](16,3){.15} 
\rput(4,-5.375){$\rho$}
\psline{-}(4,-6)(0,-6)
\rput(-.35,-6){3}
\psline{-}(4,-6)(8,-9)
\rput(8.25,-9){1}
\psline{-}(4,-6)(8,-3)
\rput(8.35,-3){4}
\psline{->}(4,-6)(2,-6)
\rput(2,-5.35){$M_3$}
\psline{->}(4,-6)(6,-4.5)
\rput(5.5,-4){$M_4$}
\psline{-}(4,-6)(6,-7.5)
\psline{>-}(5,-6.75)(6,-7.5)
\rput(4.25,-7.5){$N_5$}
\rput(6.5,-7.25){$\pi$}
\psline{->}(4,-6)(7,-8.25)
\rput(5.969,-8.85){$M_1$}
\pscircle[linewidth=0.8pt,fillstyle=solid,fillcolor=black](6,-7.5){.15} 
\pscircle[linewidth=0.8pt,fillstyle=solid,fillcolor=black](4,-6){.15}  
\rput(16,-5.375){$\rho$}
\psline{-}(16,-6)(12,-6)
\rput(11.65,-6){3}
\psline{-}(16,-6)(20,-9)
\rput(20.25,-9){2}
\psline{-}(16,-6)(20,-3)
\rput(20.35,-3){4}
\psline{->}(16,-6)(14,-6)
\rput(14,-5.35){$M_3$}
\psline{->}(16,-6)(18,-4.5)
\rput(17.5,-4){$M_4$}
\psline{-}(16,-6)(18,-7.5)
\psline{>-}(17,-6.75)(18,-7.5)
\rput(16.25,-7.5){$N_5$}
\rput(18.5,-7.25){$\pi$}
\psline{->}(16,-6)(19,-8.25)
\rput(17.969,-8.85){$M_2$}
\pscircle[linewidth=0.8pt,fillstyle=solid,fillcolor=black](18,-7.5){.15} 
\pscircle[linewidth=0.8pt,fillstyle=solid,fillcolor=black](16,-6){.15}  
\endpspicture
\caption{Three taxon subtrees}
\label{pic:subtrees} 
\end{figure}
\\\noindent
We see that for any two taxa we have two options for assigning a pairwise distance. In the cases of the pairs $(12)$ and $(34)$ we see that either choice is consistent with the other, whereas in the case of the pair $(13)$, $(14)$, $(24)$ and $(34)$ the two choices lead to an inconsistent assignment of the internal edge length upon the tree. Effectively what is happening here is that for a four taxa tree there are two possible edge length assignments for the internal edge and for a given pair of taxa $(ab)$ and third taxa $c$, the tangle formula (\ref{dist}) is estimating the distance between $a$ and $b$ by assigning one of the two possible edge lengths to the internal edge depending on the topology of the tree.
\\ 
It is possible to eliminate this inconsistency by using either a \textit{max} or \textit{min} criterion in the construction of the distance matrix:
\beqn
\phi^{max}_{ab}:=max\{\Delta_{ab}^{(c)},\Delta_{ab}^{(c')}\}\nonumber
\eqn
or
\beqn
\phi^{min}_{ab}:=min\{\Delta_{ab}^{(c)},\Delta_{ab}^{(c')}\}.\nonumber
\eqn
By making one of these choices to construct a distance matrix we choose the directedness of the internal edge of the phylogenetic tree (\ref{fig:fourtaxa1}) consistently. This procedure leads to an improvement of consistency upon the $\log\det$ technique for the construction of quartet phylogenetic distance matrices. It is hoped that this technique can be used fruitfully to improve the reconstruction of phylogenetic quartets, which can be used as a first step in the reconstruction of large phylogenetic trees \cite{bryant2001,strimmer1996}.
\\

\section{Closing remarks}
In this chapter we have given a review of the standard assignment of branch weights to phylogenetic trees, reviewed the use of the $\log\det$ formula as an estimator of pairwise distances and shown how a previously unknown polynomial, the tangle, can be used to construct an improved estimator. We have generalized Felsenstein's \textit{pulley principle} and used this result to show exactly how the distance matrix estimates become inconsistent when applied to the reconstruction problem of multiple taxa. We have shown that the tangle formula along with a \textit{max/min} criterion can be used to remove this inconsistency and construct consistent quartet distance matrices.

\chapter{Markov invariants}\label{chap5}
In this chapter we will refine the use of invariant theory on phylogenetic trees by defining \textit{Markov invariants} to be invariant functions specific to the general Markov model of sequence evolution. To achieve this we return to the representation theory introduced in Chapter \ref{chap2} and show how the Schur functions can be used to give a count of the existence of the Markov invariants. A procedure which constructs the explicit polynomial form of these invariants will be developed and we examine, as prompted from Chapter \ref{chap3}, the structure of these invariants once placed on a phylogenetic tree. For the triplet and quartet case we show that there exist Markov invariants which have the additional property of being \textit{phylogenetic invariants} \cite{allman2003,evans1993,steel1993}. These previously unobserved invariants can be used to achieve quartet reconstruction under the assumptions of the general Markov model.
\section{The Markov semigroup}
In Chapter \ref{chap3} we considered the transition matrices of a continuous time Markov chain as a subset of the general linear group, and used this property to study the structure of invariant polynomials (used as measures of entanglement in quantum physics) when evaluated on a phylogenetic tree. In this section we will close the gap between the general linear group and the subset consisting of the transition matrices of a CTMC by formally defining the \textit{Markov semigroup}. (For a detailed discussion of the Lie group properties of the Markov semigroup and its relation to the Affine group see \cite{johnson1985}.)\\
Recalling the vector $\theta=\sum e_{i}$ (\ref{def:theta}), the Markov semigroup on $n$ elements, $\mathcal{M}(n)$, with parameters $s\leq t< \infty$ is defined relative to $\theta$ as the subset of $GL(n)$ which satisfies: 
\begin{enumerate}\label{markovsemi}
\item $M(s,s)=1,$
\item $M(t',t)M(t,s)=M(t',s)\quad\forall s<t<t',$
\item $(\theta,M(t,s)v)=(\theta,v)\quad\forall\ v\in V.$
\end{enumerate}
In general this set does not form a group. Consider the time evolution of a probability vector $p(t)$, defined by
\beqn
p(t)=M(t,s)p(s),\quad s\leq t.\nonumber
\eqn
This time evolution will conserve the total probability
\beqn
\sum p_{i}(t)=(\theta,p(t))=(\theta,M(t,s)p(s))=(\theta,p(s))=\sum p_i(s)=1.\nonumber
\eqn
Defining
\beqn
Q(s):=\frac{\partial M(t,s)}{\partial t}|_{t=s},\nonumber
\eqn
it follows that in the $\{e_i\}$ basis, the matrix elements of $Q(t)=[q_{ij}(t)]$ satisfy
\beqn
q_{ij}(t)\geq 0,\quad \forall i\neq j;\qquad q_{ii}(t)=-\sum_{j\neq i}q_{ji}(t),\nonumber
\eqn
and hence each $M(s,t)$ is a valid transition matrix for a CTMC.\\
In Chapter \ref{chap3} we saw that the Markov model of phylogenetics can be considered in terms of the action $\times^mGL(n)$ on $V^{\otimes m}$ (\ref{convform}). We refine this to the action of $\times^m\mathcal{M}(n)$ on $V^{\otimes m}$ so that any phylogenetic tensor can be written as
\beqn
P=M_1\otimes M_2\otimes\ldots\otimes M_m\widetilde{P},\nonumber
\eqn
with $M_a\in\mathcal{M}(n)$, $1\leq a\leq m$. Our present task will be to define and derive invariant functions, $w:V^{\otimes m}\rightarrow \mathbb{C}$, which satisfy
\beqn
w(P)=\prod_{a=1}^m\det(M_{a})^kw(\widetilde{P}),\nonumber
\eqn 
for all $M_a\in\mathcal{M}(n)$, $1\leq a\leq m$, and analyse their relevance to the problem of phylogenetic tree reconstruction. (It should be noted that an invariant of the general linear group is certainly an invariant of the Markov semigroup, but the converse is not necessarily true.)
\subsection{Invariant functions of the Markov semigroup}
Before considering the more general case of the action $\times^m\mathcal{M}(n)$ on $V^{\otimes m}$ given by
\beqn
\psi\rightarrow M_1\otimes M_2\otimes\ldots\otimes M_m\psi,\nonumber
\eqn
we will first define invariant functions of the action $\mathcal{M}(n)$ on $V^{\otimes m}$ given by
\beqn
\psi\rightarrow \otimes^mM\psi.\nonumber
\eqn
Given that $\mathcal{M}(n)$ does not form a group we have to be careful in our definitions of representations and invariant functions. To this end we define the set of functions $\mathbb{C}[V^{\otimes m}]^{\mathcal{M}(n)}_d$ as the subset $w\in\mathbb{C}[V^{\otimes m}]_d$ which satisfy
\beqn\label{markinv}
w\circ \otimes^mM&=\det(M)^kw,\quad\forall M\in\mathcal{M}(n),\\
md&\!=\!kn,
\eqn
(where we have carefully not invoked the inverse element $M^{-1}$). Presently we will derive a sufficient condition for the existence of such invariant functions.\\
Consider $w\in\mathbb{C}[V]_d$ satisfying (\ref{markinv}). Under the canonical isomorphism $\omega:\mathbb{C}[V^{\otimes m}]_d\rightarrow (V^{\otimes m})^{\{d\}}$ (\ref{omegax}) we have
\beqn
w=\omega(\chi),\nonumber
\eqn
for some $\chi\in (V^{\otimes m})^{\{d\}}$. Carefully taking note of the relations (\ref{omegamap1}) and (\ref{omegamap2}) it follows that
\beqn
\omega(\chi)\circ M^{\otimes m}=\omega(\otimes^{md}M^t\chi).\nonumber
\eqn
Hence $w:=\omega(\chi)$ will satisfy (\ref{markinv}) if and only if
\beqn\label{markonerep}
\otimes^{md}M^t\chi=\det(M)^k\chi,\quad\forall M\in\mathcal{M}(n).
\eqn
Consider the tensor $\phi\in V^{\{k^n\}}\otimes V^{\otimes s}$ expressed as
\beqn
\phi=\eta\otimes (\theta^{\otimes s}),\nonumber 
\eqn
with $\eta\in V^{\{k^n\}}$. Recalling (\ref{onedrep}) and the definition of the Markov semigroup it follows that $\phi$ satisfies (\ref{markonerep}):
\beqn
\otimes^{kn+s}M^t\phi=\det(M)^k\phi,\quad\forall M\in\mathcal{M}(n).\nonumber
\eqn 
Consider the decomposition of $V^{\{k^n\}}\otimes V^{\otimes s}$ into irreducible representation spaces of $GL(n)$:
\beqn
V^{\{k^n\}}\otimes V^{\otimes s}=\sum_{|\lambda|=kn+s}h_\lambda V^\lambda,\nonumber
\eqn
for some unknown multiplicities $h_\lambda$. Our present task is to identify the irreducible representation space in which the tensor $\phi$ is contained. Assume $\phi\in V^{\mu}$ with $|\mu|=kn+s$ and recall that
\beqn
V^\mu=Y_{\mu}V^{\otimes kn+s},\nonumber
\eqn
where $Y_{\mu}$ is the projection operator satisfying
\beqn
Y_{\mu}^2&=Y_{\mu},\nonumber\\
Y_{\mu}Y_{\mu'}&=0,\quad |\mu'|=|\mu|,\\
\eqn
so that $Y_{\mu}$ is the unique Young operator satisfying
\beqn
Y_{\mu}\phi&=\phi.\nonumber
\eqn
Considering the inherent permutation symmetry of $\phi$, it is clear that 
\beqn
\mu=\{k+s,k^{n-1}\}.\nonumber
\eqn
From this we conclude that $\phi\in V^{\{k+s,k^{n-1}\}}$, and there exists $\chi\in (V^{\otimes m})^{\{d\}}$ satisfying $(\ref{markonerep})$ whenever
\beqn
(V^{\otimes m})^{\{d\}}\ni V^{\{k+s,k^{n-1}\}},\nonumber
\eqn
as an irreducible subspace under $GL(n)$.

\begin{prop}\label{thm:markouter}
A sufficient condition for the existence of a Markov invariant $w\in\mathbb{C}[V^{\otimes m}]^{\mathcal{M}(n)}_d$ is that $(\times^m\{1\})^{ \otimes{\{d\}}}\ni \{k+s,k^{n-1}\}$ for some $md\!=\!nk+s$.
\end{prop}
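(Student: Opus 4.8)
The plan is to transport the defining relation (\ref{markinv}) through the canonical isomorphism $\omega:\mathbb{C}[V^{\otimes m}]_d\to(V^{\otimes m})^{\{d\}}$, reduce the existence of $w$ to the existence of a single nonzero tensor transforming one-dimensionally, exhibit such a tensor explicitly, and then count it with Schur functions via Theorem \ref{thm:pleth}. Concretely, writing $w=\omega(\chi)$ for $\chi\in(V^{\otimes m})^{\{d\}}$ and applying (\ref{omegamap1})--(\ref{omegamap2}) gives $\omega(\chi)\circ\otimes^m M=\omega(\otimes^{md}M^t\chi)$, so $w$ is a weight-$k$ Markov invariant precisely when $\chi$ obeys (\ref{markonerep}), namely $\otimes^{md}M^t\chi=\det(M)^k\chi$ for all $M\in\mathcal{M}(n)$. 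Hence it suffices to produce one nonzero $\chi\in(V^{\otimes m})^{\{d\}}$ of this type.

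Second, I would write down such a $\chi$ inside $V^{\otimes md}=V^{\otimes(kn+s)}$, using the hypothesis $md=nk+s$. Take $\phi=\eta\otimes\theta^{\otimes s}$ with $0\ne\eta\in V^{\{k^n\}}$ (possible since $\dim V=n$ forces $\dim V^{\{k^n\}}=1$) and $\theta=\sum_i e_i$. The third defining property of $\mathcal{M}(n)$, that $(\theta,Mv)=(\theta,v)$, says $M^t\theta=\theta$, so $\otimes^s M^t(\theta^{\otimes s})=\theta^{\otimes s}$; and by (\ref{onedrep}) the defining representation acts on $\eta$ by $\otimes^{kn}M^t\eta=\det(M^t)^k\eta=\det(M)^k\eta$. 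Multiplying the two factors yields $\otimes^{kn+s}M^t\phi=\det(M)^k\phi$, so $\phi$ satisfies (\ref{markonerep}). Thus a suitable $\chi$ exists as soon as $(V^{\otimes m})^{\{d\}}$, decomposed into $GL(n)$-irreducibles, contains the irreducible piece in which $\phi$ lies.

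Third, I would identify that piece. Since $\eta$ carries the symmetry type $\{k^n\}$ and $\theta^{\otimes s}$ is totally symmetric, $\phi$ is fixed by a unique Young symmetrizer $Y_\mu$, and hence lies in a single irreducible component $V^\mu$; inspecting its permutation behaviour shows the $s$ symmetric boxes attach to the first row of the $k\times n$ rectangle, so $\mu=\{k+s,k^{n-1}\}$. Finally, the character of $GL(n)$ acting via $\otimes^m g$ on $(V^{\otimes m})^{\{d\}}$ is the plethysm $(\times^m\{1\})^{\otimes\{d\}}$, so by (the argument of) Theorem \ref{thm:pleth} the multiplicity of $V^{\{k+s,k^{n-1}\}}$ in that decomposition equals the number of occurrences of $\{k+s,k^{n-1}\}$ in $(\times^m\{1\})^{\otimes\{d\}}$; non-vanishing of this number is exactly the stated condition, and non-vanishing produces a valid $\chi$, hence $w$. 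The main obstacle is the symmetry-type computation in the third step: one must verify carefully that the totally symmetric factor genuinely merges into the single partition $\{k+s,k^{n-1}\}$ rather than spreading across several components, and that the resulting $\phi$ is nonzero — both of which follow from the uniqueness of the Young operator fixing $\phi$ together with $\dim V=n$.
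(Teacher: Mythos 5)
Your proposal is correct and follows essentially the same route as the paper: pass through the canonical isomorphism $\omega$ to reduce (\ref{markinv}) to the tensor condition (\ref{markonerep}), exhibit $\phi=\eta\otimes\theta^{\otimes s}$ with $\eta\in V^{\{k^n\}}$ using $M^{t}\theta=\theta$ and the determinant action on $V^{\{k^n\}}$, identify its symmetry type as $\{k+s,k^{n-1}\}$, and count occurrences via the plethysm as in Theorem \ref{thm:pleth}. The one step you flag as delicate — that $\phi$ sits in the single component $\mu=\{k+s,k^{n-1}\}$ — is exactly the step the paper also treats informally, so no gap relative to the source.
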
\noindent
In direct analogy to the development of Theorem (\ref{thm:inner}) we generalize this to the action of $\times^m\mathcal{M}(n)$ on $V^{\otimes m}$:
\begin{prop}\label{thm:markinner}
A sufficient condition for the existence of a Markov invariant $w\in\mathbb{C}[V^{\otimes m}]^{\times^m\mathcal{M}(n)}_d$ is that $\ast^m\{k+s,k^{n-1}\}\ni \{d\}$ for some $d\!=\!nk+s$.
\end{prop}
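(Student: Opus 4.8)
The plan is to follow, almost verbatim, the route from Theorem~\ref{thm:pleth} to Theorem~\ref{thm:inner}, but with the one-dimensional building block $\{k^n\}$ replaced throughout by the partition $\{k+s,k^{n-1}\}$ that surfaced in the proof of Proposition~\ref{thm:markouter}. First I would translate the defining condition on a weight-$k$ Markov invariant $w\in\mathbb{C}[V^{\otimes m}]^{\times^m\mathcal{M}(n)}_d$ into a tensor condition: by the canonical isomorphism $\omega$ of (\ref{omegax}) we have $w=\omega(\chi)$ for a unique $\chi\in(V^{\otimes m})^{\{d\}}\subset V^{\otimes md}$, and applying the transpose rule (\ref{omegamap1}) with $g=M_1\otimes M_2\otimes\ldots\otimes M_m$, then regrouping the $md$ tensor slots so that for each $a$ the $a$-th slot of every one of the $d$ blocks is gathered together, one finds that $w$ transforms by $\prod_a\det(M_a)^k$ exactly when
\[
\bigl(\otimes^{d}M_1^{t}\bigr)\otimes\bigl(\otimes^{d}M_2^{t}\bigr)\otimes\ldots\otimes\bigl(\otimes^{d}M_m^{t}\bigr)\chi=\prod_{a=1}^{m}\det(M_a)^k\,\chi,\qquad\forall\,M_a\in\mathcal{M}(n).
\]
So it suffices to exhibit one nonzero such $\chi$ inside $(V^{\otimes m})^{\{d\}}$.

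Next I would build a model solution, block by block. Set $s:=d-nk\ge0$ and, exactly as in Proposition~\ref{thm:markouter}, take $\phi:=\eta\otimes\theta^{\otimes s}$ with $\eta\in V^{\{k^n\}}$ and $\theta=\sum_i e_i$ the vector of (\ref{def:theta}); by (\ref{onedrep}) and the defining property $(\theta,Mv)=(\theta,v)$ of $\mathcal{M}(n)$ one has $\otimes^{d}M^{t}\phi=\det(M)^k\phi$, while the permutation symmetry of $\phi$ places it in the $GL(n)$-irreducible subspace $V^{\{k+s,k^{n-1}\}}\subset V^{\otimes d}$ (note $|\{k+s,k^{n-1}\}|=nk+s=d$). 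Hence the tensor $\Phi:=\phi\otimes\phi\otimes\ldots\otimes\phi$ lies in the $\times^m GL(n)$-subrepresentation $V^{\{k+s,k^{n-1}\}}\otimes\ldots\otimes V^{\{k+s,k^{n-1}\}}$ of $V^{\otimes md}$ and satisfies the displayed identity. Since $\mathcal{M}(n)\subset GL(n)$, any $GL(n)$-equivariant isomorphism from this representation onto a copy of $V^{\{k+s,k^{n-1}\}}\otimes\ldots\otimes V^{\{k+s,k^{n-1}\}}$ lying inside $(V^{\otimes m})^{\{d\}}$ is automatically $\mathcal{M}(n)$-equivariant, and therefore carries $\Phi$ to a nonzero tensor obeying the same transformation law. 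Thus a Markov invariant exists whenever $(V^{\otimes m})^{\{d\}}$ contains $V^{\{k+s,k^{n-1}\}}\otimes\ldots\otimes V^{\{k+s,k^{n-1}\}}$ ($m$ factors, one per tensor slot) as a $\times^m GL(n)$-subrepresentation.

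Finally I would read off that multiplicity. Feeding $\lambda=\{d\}$ into the direct-product character formula (\ref{directchar}) and using the inner-product identity $s_\lambda(x)s_{\lambda'}(y)=\sum_\rho\gamma^\rho_{\lambda\lambda'}s_\rho(xy)$ — which is exactly the computation that turns Theorem~\ref{thm:pleth} into Theorem~\ref{thm:inner} — the multiplicity of $V^{\{k+s,k^{n-1}\}}\otimes\ldots\otimes V^{\{k+s,k^{n-1}\}}$ in $(V^{\otimes m})^{\{d\}}$ equals the multiplicity of $\{d\}$ in the $m$-fold inner product $\ast^m\{k+s,k^{n-1}\}$, and positivity of this number is precisely the stated condition $\ast^m\{k+s,k^{n-1}\}\ni\{d\}$ with $d=nk+s$. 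The step demanding real care is the transfer in the middle: one must check that projecting $\Phi$ into the symmetric part $(V^{\otimes m})^{\{d\}}$ does not annihilate it, and that the $\mathcal{M}(n)$-transformation law genuinely survives the (only $GL(n)$-canonical, not $\mathcal{M}(n)$-canonical) choice of isomorphism. This is the $m$-fold analogue of the ``inherent permutation symmetry of $\phi$'' step used for Proposition~\ref{thm:markouter}, and it is where the parallel with the $GL$ case is least automatic.
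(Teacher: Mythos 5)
Your proposal is correct and follows essentially the same route the paper intends: the paper gives no separate proof of Proposition~\ref{thm:markinner}, stating only that it follows ``in direct analogy'' to Theorem~\ref{thm:inner}, and your argument is exactly that analogy made explicit — per-slot copies of the distinguished vector $\phi=\eta\otimes\theta^{\otimes s}\in V^{\{k+s,k^{n-1}\}}$ combined with the inner-product character decomposition (\ref{directchar}) to count occurrences of $\{d\}$ in $\ast^m\{k+s,k^{n-1}\}$. The transfer step you flag is handled correctly by your observation that a $\times^m GL(n)$-equivariant isomorphism onto a copy inside $(V^{\otimes m})^{\{d\}}$ is automatically $\times^m\mathcal{M}(n)$-equivariant and injective, so it carries $\Phi$ to a nonzero tensor with the required transformation law.
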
\noindent
Using the representation theoretical tools we have developed it does not seem trivial to show that these conditions are also necessary. However we now have at our disposal a tool for inferring the existence of Markov invariants in various cases.\\
In the next section we will return to the construction of invariants for the general linear group in order to derive a technique allowing us to compute these Markov invariants.
\section{Alternative computation of invariants of the general linear group}
The construction of invariants of the general linear group was presented in Chapter \ref{chap2} using the properties of the Levi-Civita tensor. Unfortunately this construction does not generalize to the case of the Markov semigroup. In this section we show how Young tableaux can be used to construct the invariant functions of $GL(n)$ directly. In the next section we show how this technique can be generalized to allow for the construction of the Markov invariants. 
\subsection{Action of $GL(n)$ on $V^{\otimes m}$}
Recall that the number of invariants of weight $k$ in $\mathbb{C}[V^{\otimes m}]^{GL(n)}_d$ is equal to the number of occurrences of the partition $\{k^n\}$ in $(\times^m\{1\})^{\otimes{\{d\}}}$ with ${kn\!=\!md}$. This gives us a technique for the proof of existence of invariant polynomials, but leaves us with the problem of their explicit construction. Recall Theorem \ref{thm:pleth} and we see that our task is to identify the one-dimensional representations of the general linear group in the decomposition of $(V^{\otimes m})^{\{d\}}$.\\
Suppose we consider $U=V^{\otimes m}$ as a ($nm$-dimensional) vector space with basis $u_1,u_2,\ldots,u_{nm}$ . As we saw in Chapter \ref{chap2}, if $U$ has a basis $u_1,u_2,\ldots,u_{nm}$ then any $\chi\in U^{\{d\}}$ can be constructed from an arbitrary $\chi\in U^{\otimes d}$ by taking
\beqn
\varphi=Y_{\{d\}}\chi,\nonumber
\eqn
where the Young operator acts on the $\{u_{\alpha_1}\otimes u_{\alpha_2}\otimes\ldots\otimes u_{\alpha_d}\}$ basis of $U^{\otimes d}$, $1\leq\alpha_1,\ldots,\alpha_d\leq nm$. Now we define
\beqn
\phi=Y_{\{k^n\}}\varphi,\nonumber
\eqn
where the Young operator now acts on the $\{e_{i_1}\otimes e_{i_2}\otimes\ldots\otimes e_{i_{dm}}\}$ basis of $(V^{\otimes m})^{\{d\}}\cong U^{\{d\}}$, $1\leq i_1,\ldots,i_{dm}\leq n$. The final step is to construct the single independent component of $\phi$ using the semi-standard tableau:
\begin{figure}[h]
\centering
\pspicture[](0,0)(2.5,2.5)
\psline(0,0)(2.5,0)
\psline(0,.5)(2.5,.5)
\psline(0,1.5)(2.5,1.5)
\psline(0,2)(2.5,2)
\psline(0,2.5)(2.5,2.5)

\psline(0,0)(0,2.5)
\psline(.5,0)(.5,2.5)
\psline(1,0)(1,2.5)
\psline(2,0)(2,2.5)
\psline(2.5,0)(2.5,2.5)

\rput(0.25,.25){n}
\rput(0.75,.25){n}
\rput(2.25,.25){n}
\rput(0.25,1.75){2}
\rput(0.75,1.75){2}
\rput(2.25,1.75){2}
\rput(0.25,2.25){1}
\rput(0.75,2.25){1}
\rput(2.25,2.25){1}

\rput(1.5,2.2){...}
\rput(.2,1){.}
\rput(.2,1.1){.}
\rput(.2,.9){.}
\endpspicture 
\end{figure}
\\
and then map over the invariant ring using $\omega:(V^{\otimes m})^{\{d\}}\rightarrow \mathbb{C}[V^{\otimes m}]_d$. The invariant is then
\beqn
f:=\omega(\phi)=\omega(Y_{\{k_n\}}\varphi)=\omega(Y_{\{k^n\}} Y_{\{d\}}\chi),\nonumber
\eqn
which will satisfy
\beqn
f\circ g=\det(g)^kf,\nonumber
\eqn
for all $g\in GL(n)$.
\\
There is no problem with choosing the operator $Y_{\{d\}}$ as there is only one possible standard tableau:
\begin{figure}[h]
\centering
\pspicture[](0,0)(5,.5)
\psline(0,0)(2.5,0)
\psline(0,.5)(2.5,.5)
\psline(0,0)(0,.5)
\psline(.5,0)(.5,.5)
\psline(1,0)(1,.5)
\psline(2,0)(2,.5)
\psline(2.5,0)(2.5,.5)
\rput(0.25,.25){1}
\rput(0.75,.25){2}
\rput(1.5,.2){...}
\rput(2.25,.25){d}
\rput(3,0){.}
\endpspicture
\end{figure}\\
However there does not seem to be any \textit{a priori} way of deciding which standard tableau to use for the symmetrization $Y_{\{k^n\}}$. In general there are more standard tableaux than one-dimensional representations. This is not a serious issue since the Young symmetrization procedure needs to be implemented in an algebraic computation computer package. Our procedure was to make judicious choices of standard tableaux and check for algebraic independence of the resulting invariants until the correct count was achieved. In what follows we will present the results of these computations.\\
The above outlines the formal procedure. In practice we implement the algorithm as follows. The above is equivalent to computing
\beqn\label{ticktick}
\Psi_{i_1\ldots i_{md}}:=Y_{\{k^n\}}\psi_{i_1\ldots i_m}\psi_{i_{m+1}\ldots i_{2m}}\ldots \psi_{\ldots i_{md}},
\eqn
where
\beqn
(ab)\psi_{i_1\ldots i_{m}}\ldots\psi_{\ldots i_a\ldots }\ldots \psi_{\ldots i_b\ldots }\ldots \psi_{\ldots i_{md}}:=\psi_{i_1\ldots i_{m}}\ldots \psi_{\ldots i_b\ldots }\ldots \psi_{\ldots i_a\ldots }\ldots \psi_{\ldots i_{md}},\nonumber
\eqn
for any $1\leq a,b\leq md$, defines the meaning of (\ref{ticktick}) and there is no need to symmetrize with $Y_{\{d\}}$. (In practice the symmetries inherent in this procedure give us some clue as to how to choose the appropriate standard tableaux for $\{k^n\}$.) We then set the indices of $\Psi_{i_1\ldots i_{md}}$ using the single semi-standard tableaux to get
\beqn\label{invalgo}
w(\psi)=\Psi_{12\ldots n12\ldots n\ldots 12\ldots n}.
\eqn
Now this expression only depends on the choice of standard tableau for $\{k^n\}$. In practice we compute (\ref{invalgo}) for different standard tableaux until we have the correct number of independent invariants.
\subsection{Examples}
We consider the case $m\!=\!2$. We have $(\{1\}\times\{1\})^{\otimes \{1\}}=\{2\}+\{1^2\}\ni\{1^2\}$, and hence there is one invariant of degree $d\!=1$. Of course this invariant can simply be found by symmetrizing $V^{\otimes 2}$ with the only standard tableau of shape $\{1^2\}$:
\begin{figure}[h]
\centering
\pspicture[](0,0)(.5,1)
\psline(0,0)(.5,0)
\psline(0,.5)(.5,.5)
\psline(0,1)(.5,1)
\psline(0,0)(0,1)
\psline(.5,0)(.5,1)
\rput(.25,.25){2}
\rput(.25,.73){1}
\endpspicture
\end{figure}\\
with corresponding Young operator
\beqn
Y_{\{1^2\}}=(e-(12)).\nonumber
\eqn
The symmetrized tensor is 
\beqn
\Psi_{i_1i_2}=Y_{\{1^2\}}\psi_{i_1i_2}=\psi_{i_1i_2}-\psi_{i_2i_1}.\nonumber
\eqn
The invariant is found by inserting index labels from the relevant\\ semi-standard tableau, so that
\beqn
w(\psi)=\Psi_{12}=\psi_{12}-\psi_{21}.\nonumber
\eqn
For $d\!=2$ the output of \textbf{Schur} shows that $(\{1\}\times \{1\})^{\otimes \{2\}}\ni2\{2^2\}$.\newpage\noindent There are two Young operators with shape $\{2^2\}$:
\begin{figure}[h]
\centering
\pspicture[](0,.25)(4,1)
\psline(0,0)(1,0)
\psline(0,.5)(1,.5)
\psline(0,1)(1,1)
\psline(0,0)(0,1)
\psline(.5,0)(.5,1)
\psline(1,0)(1,1)
\rput(.25,.25){3}
\rput(.25,.75){1}
\rput(.75,.25){4}
\rput(.75,.75){2}

\psline(1.5,0)(2.5,0)
\psline(1.5,.5)(2.5,.5)
\psline(1.5,1)(2.5,1)
\psline(1.5,0)(1.5,1)
\psline(2,0)(2,1)
\psline(2.5,0)(2.5,1)
\rput(1.75,.25){3}
\rput(1.75,.75){1}
\rput(2.25,.25){4}
\rput(2.25,.75){2}
\endpspicture .
\end{figure}\\
The invariants are then given by
\beqn
\Psi_{i_1i_2i_3i_4}=Y_{\{2^2\}}\psi_{i_1i_2}\psi_{i_3i_4}.\nonumber
\eqn
For the first tableau we have
\beqn
Y_{\{2^2\}}=(e-(13)-(24)+(13)(24))(e+(12)+(34)+(12)(34)),\nonumber
\eqn
and find explicitly for the semi-standard tableau corresponding to component $\Psi_{1212}$:
\beqn
h_1(\psi)=\psi_{12}^2+2\psi_{12}\psi_{21}+\psi_{21}^2-4\psi_{11}\psi_{22},\nonumber
\eqn
and for the second tableau
\beqn
h_{2}(\psi)=\psi_{12}^2-\psi_{12}\psi_{21}+\psi_{21}^2-\psi_{11}\psi_{22}.\nonumber
\eqn
It is a simple exercise to show that these invariants are linear combinations of the two invariants produced in Chapter \ref{chap2} (\ref{eq:inv1}):
\beqn
h_1&=f_1^2-4f_2,\nonumber\\
h_2&=f_1^2-f_2.
\eqn
For the case of $GL(3)$ on $V^{\otimes 2}$ \textbf{Schur} shows that $(\{1\}\times \{1\})^{\otimes \{3\}}\ni 2\{2^3\}$. The invariants are constructed from arbitrary $\psi\in (V^{\otimes 2})^{\otimes 3}$ as
\beqn
f=\omega(Y_{\{2^3\}}\circ Y_{\{3\}}\psi),\nonumber
\eqn
with the standard tableaux
\begin{figure}[h]
\centering
\pspicture[](0,0)(2.5,1.5)
\psline(0,0)(1,0)
\psline(1.5,0)(2.5,0)
\psline(0,.5)(1,.5)
\psline(1.5,.5)(2.5,.5)
\psline(0,1)(1,1)
\psline(1.5,1)(2.5,1)
\psline(0,1.5)(1,1.5)
\psline(1.5,1.5)(2.5,1.5)

\psline(0,0)(0,1.5)
\psline(1.5,0)(1.5,1.5)
\psline(0.5,0)(0.5,1.5)
\psline(2,0)(2,1.5)
\psline(1,0)(1,1.5)
\psline(2.5,0)(2.5,1.5)

\rput(.25,.25){5}
\rput(1.75,.25){5}
\rput(.25,.75){3}
\rput(1.75,.75){2}
\rput(.25,1.25){1}
\rput(1.75,1.25){1}

\rput(.75,.25){6}
\rput(2.25,.25){6}
\rput(.75,.75){4}
\rput(2.25,.75){4}
\rput(.75,1.25){2}
\rput(2.25,1.25){3}

\endpspicture
\end{figure}\\
generating two independent elements:
\beqn
h_{1}(\psi)&=-\psi_{1 3}^2 \psi_{2 2} + \psi_{1 2} \psi_{1 3} \psi_{2 3} + 
    \psi_{1 3} \psi_{2 1} \psi_{2 3} - \psi_{1 1} \psi_{2 3}^2 - 
    2 \psi_{1 3} \psi_{2 2} \psi_{3 1}\\& + \psi_{1 2} \psi_{2 3} \psi_{3 1} + 
    \psi_{2 1} \psi_{2 3} \psi_{3 1} - \psi_{2 2} \psi_{3 1}^2 + 
    \psi_{1 2} \psi_{1 3} \psi_{3 2}\\& + \psi_{1 3} \psi_{2 1} \psi_{3 2} - 
    2 \psi_{1 1} \psi_{2 3} \psi_{3 2} + \psi_{1 2} \psi_{3 1} \psi_{3 2} + 
    \psi_{2 1} \psi_{3 1} \psi_{3 2} - \psi_{1 1} \psi_{3 2}^2\\& - \psi_{1 2}^2 \psi_{3 3} - 
    2 \psi_{1 2} \psi_{2 1} \psi_{3 3} - \psi_{2 1}^2 \psi_{3 3} + 
    4 \psi_{1 1} \psi_{2 2} \psi_{3 3},\nonumber
\eqn
and
\beqn
h_2(\psi)&=\psi_{1 3}^2 \psi_{2 2} - \psi_{1 2} \psi_{1 3} \psi_{2 3} - 
    \psi_{1 3} \psi_{2 1} \psi_{2 3} + \psi_{1 1} \psi_{2 3}^2 + 
    \psi_{1 2} \psi_{2 3} \psi_{3 1}\\& - \psi_{2 1} \psi_{2 3} \psi_{3 1} + 
    \psi_{2 2} \psi_{3 1}^2 - \psi_{1 2} \psi_{1 3} \psi_{3 2} + 
    \psi_{1 3} \psi_{2 1} \psi_{3 2} - \psi_{1 2} \psi_{3 1} \psi_{3 2}\\& - 
    \psi_{2 1} \psi_{3 1} \psi_{3 2} + \psi_{1 1} \psi_{3 2}^2 + \psi_{1 2}^2 \psi_{3 3} + 
    \psi_{2 1}^2 \psi_{3 3} - 2 \psi_{1 1} \psi_{2 2} \psi_{3 3}.\nonumber
\eqn
Again it is possible to show that these invariants are linear combinations of the corresponding invariants produced in Chapter \ref{chap2} (\ref{eq:inv2}).
\subsection{Action of $\times^m GL(n)$ on $V^{\otimes m}$}
Recalling Theorem \ref{thm:inner}, we note that the number of weight $k$ invariants in $\mathbb{C}[V^{\otimes m}]^{\times^m GL(n)}_d$ is equal to the number of occurrences of $\{d\}$ in the decomposition of $\ast^m\{k^n\}$. For even $m$ we have the identity 
\beqn
\ast^m \{1^n\}=\{n\},\nonumber
\eqn
and for odd $m$
\beqn
\ast^m \{1^n\}=\{1^n\}.\nonumber
\eqn
Thus we see that for even $m$ there is a single invariant function of degree $d\!=\!n$ and for odd $m$ there are none. For even $m$ the invariant is generated from
\beqn
\Psi_{i_1\ldots i_{nm}}=Y^{(1)}_{\{1^n\}} Y^{(2)}_{\{1^n\}}\ldots Y^{(m)}_{\{1^n\}}\psi_{i_1\ldots i_m}\psi_{i_{m+1}\ldots i_{2m}}\ldots\psi_{i_{(n-1)m}..i_{nm}}\nonumber
\eqn
where each standard tableau $Y^{(a)},1\leq a\leq m$, is
\begin{figure}[h]
\centering
\pspicture[](0,0)(2,3)
\psline(0,0)(0,2.5)
\psline(3,0)(3,2.5)
\psline(0,0)(3,0)
\psline(0,.5)(3,.5)
\psline(0,1.5)(3,1.5)
\psline(0,2)(3,2)
\psline(0,2.5)(3,2.5)
\rput(1.5,.25){$(n-1)m+a$}
\rput(1.5,1.75){$m+a$}
\rput(1.5,2.25){$a$}
\endpspicture
\hspace{15mm}. 
\end{figure}\\
We then set the indices of $\Psi_{i_1\ldots i_{nm}}$ using the single semi-standard tableau for each Young operator to obtain
\beqn
w(\psi)=\Psi_{11\ldots 122\ldots 2\ldots nn\ldots n}.\nonumber
\eqn
It should be clear that this procedure is completely equivalent to the invariants obtained using the Levi-Civita tensor Chapter \ref{chap2} (\ref{eq:quangles}). In the case $n=2$, this procedure generates the determinant invariants (\ref{concurrence}) and for $n=4$ the quangles (\ref{eq:quangles}).\\  
However, as we will now see, we need to use the tableaux technique in order to do the same job for the Markov semigroup.
\section{Computation of the Markov invariants}
Here we will generalize the above technique for computing invariants of the general linear group to the case of the Markov semigroup. It should be noted that in the case of the general linear group, the basis in which the calculations are performed is of no consequence as the invariants take on the identical form (up to scaling) in \textit{any} basis. (This is by definition!)  However, in the case of the Markov invariants all calculations with Young operators must be performed in the basis $\{z_0,z_a\}$, see Chapter \ref{chap2} (\ref{zbasis}). This is due to the very definition of the Markov semigroup which depends on a particular choice of the vector $\theta=\sqrt{n}z_0$. Thus, in the subsequent discussion, it should be remembered that all Markov invariants are presented in the form they take in the $\{z_0,z_a\}$ basis.
\subsection{Markov invariants of $\mathcal{M}(n)$ on $V^{\otimes m}$}
In this section we consider the action of $\mathcal{M}(n)$ on $V^{\otimes m}$ given by
\beqn
\psi\rightarrow \otimes^m\psi.\nonumber
\eqn
Recalling Conjecture \ref{thm:markouter}, it follows that if 
\beqn
(\times^m\{1\})^{\otimes \{d\}}\ni \{k+s,k^{n-1}\}\nonumber
\eqn
for some $md\!=nk+s$ there exists a Markov invariant $w\in\mathbb{C}[V^{\otimes m}]^{\mathcal{M}(n)}_d$. (In all that follows it should be noted that the case $s\!=\!0$ reproduces an invariant of the general linear group.) Computing
\beqn
\Psi_{i_1\ldots i_{dm}}:=Y_{\{k+s,k^{n-1}\}}\psi_{i_1\ldots i_m}\psi_{i_{m+1}\ldots i_{2m}}\ldots\psi_{\ldots i_{md}}\nonumber
\eqn 
where the standard tableau of shape ${\{k+s,k^{n-1}\}}$ used to define $Y_{\{k+s,k^{n-1}\}}$ is not fixed, but is chosen judiciously. The final step is to compute $w(\psi)$ by inserting indices into $\Psi$ using the semi-standard tableau:
\begin{figure}[h]
\centering
\pspicture[](0,0)(4,2.5)
\psline(-.25,0)(2.5,0)
\psline(-.25,.5)(2.5,.5)
\psline(-.25,1.5)(2.5,1.5)
\psline(-.25,2)(5,2)
\psline(-.25,2.5)(5,2.5)

\psline(5,2.5)(5,2)
\psline(4.5,2.5)(4.5,2)
\psline(3.5,2.5)(3.5,2)
\psline(3,2.5)(3,2)

\psline(-.25,0)(-.25,2.5)
\psline(.5,0)(.5,2.5)
\psline(1.25,0)(1.25,2.5)
\psline(1.75,0)(1.75,2.5)
\psline(2.5,0)(2.5,2.5)

\rput(0.1275,.25){n-1}
\rput(0.8775,.25){n-1}
\rput(2.1275,.25){n-1}
\rput(0.1275,1.75){1}
\rput(0.8775,1.75){1}
\rput(2.1275,1.75){1}
\rput(0.1275,2.25){0}
\rput(0.8775,2.25){0}
\rput(2.1275,2.25){0}
\rput(2.75,2.25){0}
\rput(3.25,2.25){0}
\rput(4.75,2.25){0}

\rput(1.5,2.2){...}
\rput(.2,1){.}
\rput(.2,1.1){.}
\rput(.2,.9){.}
\endpspicture .
\end{figure}
\subsection{Examples}
We will consider Markov invariants of degree $d\!=\!1$ only. For the case of $n\!=\!2,m\!=\!3$, \textbf{Schur} shows that 
\beqn
(\times^3\{1\})^{\otimes \{1\}}=\times^3\{1\}\ni 2\{21\},\nonumber
\eqn
which implies that there are two Markov invariants corresponding to $\{21\}$ with $k\!=\!s\!=\!1$.\\
There are two standard tableaux of shape $\{2,1\}$:
\begin{figure}[h]
\centering
\pspicture[](0,0)(2,1)
\psline(-.5,0)(-.5,1)
\psline(-.5,1)(.5,1)
\psline(0,0)(0,1)
\psline(-.5,0)(0,0)
\psline(.5,.5)(.5,1)
\psline(-.5,.5)(.5,.5)
\rput(-.25,.75){1}
\rput(.25,.75){2}
\rput(-.25,.25){3}
\endpspicture 
\pspicture[](-1.5,0)(0,1)
\psline(-.5,0)(-.5,1)
\psline(-.5,1)(.5,1)
\psline(0,0)(0,1)
\psline(-.5,0)(0,0)
\psline(.5,.5)(.5,1)
\psline(-.5,.5)(.5,.5)
\rput(-.25,.75){1}
\rput(.25,.75){3}
\rput(-.25,.25){2}
\rput(1,0){.} 
\endpspicture 
\end{figure}\\
The corresponding $d\!=\!1$ Markov invariant follows from computing
\beqn
\Psi_{i_1i_2i_3}:=Y_{\{21\}}\psi_{i_1i_2i_3}\nonumber
\eqn
\newpage\noindent
and then inserting indices according to the single semi-standard tableau:
\begin{figure}[h]
\centering
\pspicture[](0,0)(2,1)
\psline(-.5,0)(-.5,1)
\psline(-.5,1)(.5,1)
\psline(0,0)(0,1)
\psline(-.5,0)(0,0)
\psline(.5,.5)(.5,1)
\psline(-.5,.5)(.5,.5)
\rput(-.25,.75){0}
\rput(.25,.75){0}
\rput(-.25,.25){1}
\endpspicture .
\end{figure}\\
For the first tableau we compute the symmetrized tensor
\beqn
\Psi_{a_1a_2a_3}^{(1)}=\psi_{a_1a_2a_3}+\psi_{a_2a_1a_3}-\psi_{a_3a_2a_1}-\psi_{a_2a_3a_1}.\nonumber
\eqn
The single independent component gives the Markov invariant
\beqn
\Psi_{001}^{(1)}=2\psi_{001}-\psi_{100}-\psi_{010}.\nonumber
\eqn
The second tableau gives the symmetrized tensor
\beqn
\Psi_{a_1a_2a_3}^{(2)}=\psi_{a_1a_2a_3}+\psi_{a_3a_2a_1}-\psi_{a_2a_1a_3}-\psi_{a_3a_1a_2}.\nonumber
\eqn
The single independent component gives the second Markov invariant
\beqn
\Psi_{010}^{(2)}=2\psi_{010}-\psi_{100}-\psi_{001}.\nonumber
\eqn
As a second example, consider the case $n\!=\!2,m\!=\!4$, with \textbf{Schur} giving
\beqn
(\times^4\{1\})^{\otimes \{1\}}=\times^4{\{1\}}\ni 3\{31\},\nonumber
\eqn
so there are three Markov invariants with $k\!=\!s\!=\!1$. There are three standard tableaux and hence three candidate Young operators:
\begin{figure}[h]
\pspicture[](-2,0)(.5,1)
\psline(-.5,0)(-.5,1)
\psline(-.5,1)(1,1)
\psline(1,1)(1,.5)
\psline(0,0)(0,1)
\psline(-.5,0)(0,0)
\psline(.5,.5)(.5,1)
\psline(-.5,.5)(1,.5)
\rput(-.25,.75){1}
\rput(.25,.75){2}
\rput(.75,.75){3}
\rput(-.25,.25){4}
\endpspicture
\pspicture[](-2,0)(.5,1)
\psline(-.5,0)(-.5,1)
\psline(-.5,1)(1,1)
\psline(1,1)(1,.5)
\psline(0,0)(0,1)
\psline(-.5,0)(0,0)
\psline(.5,.5)(.5,1)
\psline(-.5,.5)(1,.5)
\rput(-.25,.75){1}
\rput(.25,.75){2}
\rput(.75,.75){4}
\rput(-.25,.25){3}
\endpspicture
\pspicture[](-2,0)(.5,1)
\psline(-.5,0)(-.5,1)
\psline(-.5,1)(1,1)
\psline(1,1)(1,.5)
\psline(0,0)(0,1)
\psline(-.5,0)(0,0)
\psline(.5,.5)(.5,1)
\psline(-.5,.5)(1,.5)
\rput(-.25,.75){1}
\rput(.25,.75){3}
\rput(.75,.75){4}
\rput(-.25,.25){2}
\rput(1.5,0){.}
\endpspicture 
\end{figure}\\
The associated semi-standard tableau is
\begin{figure}[h]
\centering
\pspicture[](0,0)(.5,1)
\psline(-.5,0)(-.5,1)
\psline(-.5,1)(1,1)
\psline(1,1)(1,.5)
\psline(0,0)(0,1)
\psline(-.5,0)(0,0)
\psline(.5,.5)(.5,1)
\psline(-.5,.5)(1,.5)
\rput(-.25,.75){0}
\rput(.25,.75){0}
\rput(.75,.75){0}
\rput(-.25,.25){1}
\rput(1,0){.}
\endpspicture
\end{figure}\\
For the first tableau we have the symmetrized tensor:
\beqn
\Psi_{a_1a_2a_3a_4}&=\psi_{a_1a_2a_3a_4}+\psi_{a_2a_1a_3a_4}+\psi_{a_3a_2a_1a_4}+\psi_{a_1a_3a_2a_4}\\&+\psi_{a_3a_1a_2a_4}+\psi_{a_2a_3a_1a_4}-\psi_{a_4a_2a_3a_1}-\psi_{a_2a_4a_3a_1}\\&-\psi_{a_3a_2a_4a_1}-\psi_{a_4a_3a_2a_1}-\psi_{a_3a_4a_2a_1}-\psi_{a_2a_3a_4a_1}.\nonumber
\eqn
By inserting the indices we get the Markov invariant
\beqn
6\psi_{0001}-2\psi_{1000}-2\psi_{0100}-2\psi_{0010}.\nonumber
\eqn
And by analogy for the remaining two Young operators (with the same semi-standard tableau) we have the Markov invariants
\beqn
6\psi_{0010}-2\psi_{1000}-2\psi_{0100}-2\psi_{0001},\nonumber
\eqn
and
\beqn
6\psi_{0100}-2\psi_{1000}-2\psi_{0010}-2\psi_{0001}.\nonumber
\eqn
Our final example is the case $n\!=\!3,m\!=\!4$, with \textbf{Schur} giving
\beqn
(\times^4\{1\})^{\otimes 1}=\times^4\{1\}\ni 3\{21^2\},\nonumber
\eqn
so there are two Markov invariants with $k\!=\!s\!=\!1$. Again, there are three standard tableaux
\begin{figure}[h]
\pspicture[](-2,0)(.5,2)
\psline(-.5,0)(-.5,1.5)
\psline(0,0)(0,1.5)
\psline(.5,1)(.5,1.5)
\psline(-.5,0)(0,0)
\psline(-.5,.5)(0,.5)
\psline(-.5,1)(.5,1)
\psline(-.5,1.5)(.5,1.5)
\rput(-.25,1.25){1}
\rput(.25,1.25){2}
\rput(-.25,.75){3}
\rput(-.25,.25){4}
\endpspicture
\pspicture[](-2,0)(.5,2)
\psline(-.5,0)(-.5,1.5)
\psline(0,0)(0,1.5)
\psline(.5,1)(.5,1.5)
\psline(-.5,0)(0,0)
\psline(-.5,.5)(0,.5)
\psline(-.5,1)(.5,1)
\psline(-.5,1.5)(.5,1.5)
\rput(-.25,1.25){1}
\rput(.25,1.25){3}
\rput(-.25,.75){2}
\rput(-.25,.25){4}
\endpspicture
\pspicture[](-2,0)(.5,2)
\psline(-.5,0)(-.5,1.5)
\psline(0,0)(0,1.5)
\psline(.5,1)(.5,1.5)
\psline(-.5,0)(0,0)
\psline(-.5,.5)(0,.5)
\psline(-.5,1)(.5,1)
\psline(-.5,1.5)(.5,1.5)
\rput(-.25,1.25){1}
\rput(.25,1.25){4}
\rput(-.25,.75){2}
\rput(-.25,.25){3}
\endpspicture
\end{figure}\\
with associated semi-standard tableau
\begin{figure}[h]
\pspicture[](-2,0)(.5,2)
\psline(-.5,0)(-.5,1.5)
\psline(0,0)(0,1.5)
\psline(.5,1)(.5,1.5)
\psline(-.5,0)(0,0)
\psline(-.5,.5)(0,.5)
\psline(-.5,1)(.5,1)
\psline(-.5,1.5)(.5,1.5)
\rput(-.25,1.25){0}
\rput(.25,1.25){0}
\rput(-.25,.75){1}
\rput(-.25,.25){2}
\endpspicture .
\end{figure}\\
From the first standard tableau we compute the symmetrized tensor:
\beqn
\Psi_{a_1a_2a_3a_4}^{(1)}=&\psi_{a_1a_2a_3a_4}+\psi_{a_2a_1a_3a_4}-\psi_{a_3a_2a_1a_4}-\psi_{a_2a_3a_1a_4}-\psi_{a_4a_2a_3a_1}\\&-\psi_{a_2a_4a_3a_1}-\psi_{a_1a_2a_4a_3}-\psi_{a_2a_1a_4a_3}+\psi_{a_4a_2a_1a_3}\\&+\psi_{a_2a_4a_1a_3}+\psi_{a_3a_2a_4a_1}+\psi_{a_2a_3a_4a_1}.\nonumber
\eqn
Again by filling the indices according to the semi-standard tableau we get the Markov invariant
\beqn
\Psi_{0012}^{(1)}=2\psi_{0012}&-\psi_{1002}-\psi_{0102}-\psi_{2010}-\psi_{0210}-2\psi_{0021}\\&+\psi_{2001}+\psi_{0201}+\psi_{1020}+\psi_{0120}.\nonumber
\eqn
Similarly we find for the remaining two standard tableaux:
\beqn
\Psi_{0102}^{(2)}=-\psi_{0 0 1 2}& + \psi_{0 0 2 1} + 2 \psi_{0 1 0 2} - \psi_{0 
    1 2 0} - 2 \psi_{0 2 0 1} \\&+ \psi_{0 2 1 0} -
   \psi_{1 0 0 2} + \psi_{1 2 0 0} + \psi_{2 0 0 1} - \psi_{2 1 0 0}\nonumber
\eqn
and
\beqn
\Psi_{0120}^{(3)}=\psi_{0 0 1 2}& - \psi_{0 0 2 1} - 
  \psi_{0 1 0 2} + 2 \psi_{0 1 2 0} + \psi_{0 2 0 1} - 2 \psi_{0 2 1
   0}\\& - \psi_{1 0 2 0} + \psi_{1 2 0 0} + \psi_{2 0 1 0} - \psi_{2 1 0 0}.\nonumber
\eqn
These three invariants are linearly independent, as required.
\section{Markov invariants of $\times^m\mathcal{M}(n)$ on $V^{\otimes m}$}
We now consider invariants of the group action $\times^m\mathcal{M}(n)$ on $V^{\times m}$ given by
\beqn
\psi\rightarrow M_1\otimes M_2\otimes\ldots\otimes M_m\psi;\quad M_a\in\mathcal{M}(n),\ 1\leq a\leq m.\nonumber
\eqn
According to Conjecture \ref{thm:markinner} there exists a Markov invariant, $w$, of degree $d$ of this group action if
\beqn
\ast^m\{k+s,k^{n-1}\}\ni \{d\},\nonumber
\eqn
for some $nk+s\!=\!d$. These Markov invariants will satisfy
\beqn
w(M_1\otimes M_2\otimes\ldots\otimes M_m\psi)=(\det(M_1)\det(M_2)\ldots\det(M_m))^kw(\psi)\nonumber
\eqn
for all $\psi\in V^{\otimes m},\forall M_a\in\mathcal{M}(n)\ 1\leq a\leq m$. The inner product multiplications computed for various cases by \textbf{Schur} are given in Table \ref{tab:inner}.
\begin{table}[b]
\centering
\begin{tabular}[h]{|c||c|c||c|c||c|c||}
\hline
n & 2 & 2 & 3 & 3 & 4 & 4 \\
\hline
m & $\{21\}$ & $\{31\}$ & $\{21^2\}$ & $\{31^2\}$ & $\{21^3\}$ & $\{31^3\}$ \\
\hline
2 & 1 & 1 & 1 & 1 & 1 & 1\\
3 & 1 & 1 & 1 & 1 & 0 & 1\\
4 & 3 & 4 & 4 & 13 & 4 & 16\\
5 & 5 & 10 & 10 & 61 & 6 & 137\\
6 & 11 & 31 & 31 & 397 & 40 & 1396\\
\hline
\end{tabular}
\caption{Occurrences of $\{d\}$ in $\ast^m\{k+s,k^{n-1}\}$ with $nk\!+\!s\!=\!d$}
\label{tab:inner}
\end{table}
\\
The Markov invariants can then be computed from
\beqn
\Psi_{i_1\ldots i_{dm}}:=Y^{(1)}_{\{k+s,k^{n-1}\}}Y^{(2)}_{\{k+s,k^{n-1}\}}\ldots Y^{(m)}_{\{k+s,k^{n-1}\}}\psi_{i_1\ldots i_m}\psi_{i_{m+1}\ldots i_{2m}}\ldots \psi_{i_{(n-1)m}\ldots i_{dm}},\nonumber
\eqn
where each Young operator $Y^{(a)}_{\{k+s,k^{n-1}\}}$, $1\leq a\leq m$, is generated from a standard tableau of shape $\{k+s,k^{n-1}\}$ with integers chosen from the set $\{a,m+a,\ldots,(d-1)m+a\}$. The final step is to insert indices into $\Psi$ using the semi-standard tableau:
\vspace{-0mm}
\begin{figure}[h]
\centering
\pspicture[](0,0)(4,3)
\psline(-.25,0)(2.5,0)
\psline(-.25,.5)(2.5,.5)
\psline(-.25,1.5)(2.5,1.5)
\psline(-.25,2)(5,2)
\psline(-.25,2.5)(5,2.5)

\psline(5,2.5)(5,2)
\psline(4.5,2.5)(4.5,2)
\psline(3.5,2.5)(3.5,2)
\psline(3,2.5)(3,2)

\psline(-.25,0)(-.25,2.5)
\psline(.5,0)(.5,2.5)
\psline(1.25,0)(1.25,2.5)
\psline(1.75,0)(1.75,2.5)
\psline(2.5,0)(2.5,2.5)

\rput(0.1275,.25){n-1}
\rput(0.8775,.25){n-1}
\rput(2.1275,.25){n-1}
\rput(0.1275,1.75){1}
\rput(0.8775,1.75){1}
\rput(2.1275,1.75){1}
\rput(0.1275,2.25){0}
\rput(0.8775,2.25){0}
\rput(2.1275,2.25){0}
\rput(2.75,2.25){0}
\rput(3.25,2.25){0}
\rput(4.75,2.25){0}

\rput(1.5,2.2){...}
\rput(.2,1){.}
\rput(.2,1.1){.}
\rput(.2,.9){.}
\endpspicture .
\end{figure}\\
Again, the correct set of standard tableaux needed to generate a particular invariant is not certain, and we proceed by computing for different cases and checking for algebraic dependence until we get the correct number of algebraically independent invariants. 
\\
In what follows, we will adopt a notation where a Young operator corresponding to a certain tableau is written as $Y_{a_1,a_2,\ldots;b_1,b_2,\ldots;c_1,\ldots}$, where the commas separate column entries in the tableau and semi-colons separate the rows.
\subsection{The stochastic invariant}
For the group action of $\times^m\mathcal{M}(n)$ there is always what is known as the degree $d\!=\!1$ \textit{stochastic invariant}, $\Phi$, for all $m,n$ given by:
\beqn
\Phi:=\omega(\otimes^m\theta).\nonumber
\eqn
This corresponds to the trivial inner product multiplication
\beqn
\ast^m\{1\}=\{1\},\nonumber
\eqn
with $k=0,s=1$. Evaluated on any tensor $\psi\in V^{\otimes m}$ the stochastic invariant is simply the sum of the tensor components:
\beqn
\Phi(\psi)=\sum_{i_1,i_2,\ldots,i_m}\psi_{i_1i_2\ldots i_m}.\nonumber
\eqn
In particular, evaluated on a phylogenetic tensor $P$:
\beqn
\Phi(P)=\sum_{i_1,i_2,\ldots,i_m}p_{i_1i_2\ldots i_m}=1,\nonumber
\eqn
which motivates the terminology.
\subsection{The $n\!=\!2$ case}
From Table \ref{tab:inner} we see that for $m\!=2$ there is a single Markov invariant for each of $d\!=\!3$ and $d\!=\!4$. These can be generated by simply taking pointwise products of the stochastic invariant with the general linear group invariant ${D}_2$ (\ref{concurrence}):
\beqn
{\Phi\cdot{D}_2},\qquad \Phi^2\cdot{D}_2.\nonumber
\eqn
For $m\!=\!3$ there is a Markov invariant generated from $\{21\}$. We coin this invariant the \textit{stangle} (stochastic tangle). By directed trial and error with various tableaux, this invariant was found by taking the composition of the three Young tableaux:
\begin{figure}[h]
\centering
\pspicture[](0,0)(2,.75)
\psline(0,0)(.5,0)
\psline(0,.5)(1,.5)
\psline(0,1)(1,1)
\psline(0,0)(0,1)
\psline(.5,0)(.5,1)
\psline(1,.5)(1,1)
\rput(.25,.75){1}
\rput(.75,.75){7}
\rput(.25,.25){4}
\endpspicture
\pspicture[](0,0)(2,.75)
\psline(0,0)(.5,0)
\psline(0,.5)(1,.5)
\psline(0,1)(1,1)
\psline(0,0)(0,1)
\psline(.5,0)(.5,1)
\psline(1,.5)(1,1)
\rput(.25,.75){2}
\rput(.75,.75){8}
\rput(.25,.25){5}
\endpspicture
\pspicture[](0,0)(2,.75)
\psline(0,0)(.5,0)
\psline(0,.5)(1,.5)
\psline(0,1)(1,1)
\psline(0,0)(0,1)
\psline(.5,0)(.5,1)
\psline(1,.5)(1,1)
\rput(.25,.75){3}
\rput(.75,.75){9}
\rput(.25,.25){6}
\endpspicture .
\end{figure}\\
This is written in our new notation as
\beqn\label{stochtang2}
\Psi_{i_1i_2i_3i_4i_5i_6i_7i_8i_9}:=Y_{1,7;4}Y_{2,8;5}Y_{3,9;6}\psi_{i_1i_2i_3}\psi_{i_4i_5i_6}\psi_{i_7i_8i_9} 
\eqn
and we find that the stangle is
\beqn
\mathcal{T}_2^{s}=\Psi_{000111000}=-2\psi_{001}\psi_{010}\psi_{100}&+\psi_{000}\psi_{011}\psi_{100}+\psi_{000}\psi_{010}\psi_{101}\\
&+\psi_{000}\psi_{001}\psi_{110}-\psi_{000}^2\psi_{111}.\nonumber
\eqn
For $m\!=\!4$ there are three Markov invariants which we call the \textit{squangles} (stochastic quangles). One of these Markov invariants can be generated simply by taking the pointwise product of the quangle multiplied by the stochastic invariant:
\beqn
\Phi\cdot Q_2.\nonumber
\eqn
By directed trial and error the other two squangles have been found to be generated from
\beqn\label{squangle2}
Y_{1,5;9}Y_{2,6;10}Y_{3,11;7}Y_{4,12;8}\psi_{i_1i_2i_3i_4}\psi_{i_5i_6i_7i_8}\psi_{i_9i_{10}i_{11}i_{12}}
\eqn
and
\beqn
Y_{1,5;9}Y_{2,10;6}Y_{3,11;7}Y_{4,12;8}\psi_{i_1i_2i_3i_4}\psi_{i_5i_6i_7i_8}\psi_{i_9i_{10}i_{11}i_{12}}.\nonumber
\eqn
Explicitly the first squangle is
\beqn
Q^{s_1}_2&=\psi_{0 0 1 1} \psi_{0 1 0 0} \psi_{1 0 0 0} + \psi_{0 0 1 0} \psi_{
        0 1 0 1} \psi_{1 0 0 0} + \psi_{0 0 0 
    1} \psi_{0 1 1 0} \psi_{1 0 0 0} - \psi_{0 0 0 0} \psi_{0 1 1 1} \psi_{1 
        0 0 0}\\& + \psi_{0 0 1 
    0} \psi_{0 1 0 0} \psi_{1 0 0 1} + \psi_{0 0 0 1} \psi_{0 1 0 0} \psi_{1 
        0 1 0} - \psi_{0 0 0 
    0} \psi_{0 1 0 0} \psi_{1 0 1 1} - 2\ \psi_{0 0 0 1} \psi_{0 0 1 0} \psi_{
        1 1 0 0}\\& + 3\ \psi_{0 
    0 0 0} \psi_{0 0 1 1} \psi_{1 1 0 0} - \psi_{0 0 0 0} \psi_{0 0 1 0} 
        \psi_{1 1 0 1} - \psi_{0 
    0 0 0} \psi_{0 0 0 1} \psi_{1 1 1 0} + \psi_{0 0 0 0}^2\ \psi_{1 1 1 1},\nonumber
\eqn
and the second
\beqn
Q^{s_2}_2&=\psi_{0 0 1 1} \psi_{0 1 0 0} \psi_{1 0 0 0} - 2 \psi_{0 0 1 0} \psi_{0 1 0
         1} \psi_{1 0 0 0} + 
    \psi_{0 0 0 1} \psi_{0 1 1 0} \psi_{1 0 0 0} - \psi_{0 0 0 0} \psi_{0 1 1
         1} \psi_{1 0 0 0}\\& + \psi_{0 0
     1 0} \psi_{0 1 0 0} \psi_{1 0 0 1} - 2 \psi_{0 0 0 1} \psi_{0 1
         0 0} \psi_{1 0 1 0} + 
    3 \psi_{0 0 0 0} \psi_{0 1 0 1} \psi_{1 0 1 0} - \psi_{0 0 0 0} \psi_{0 1 0
         0} \psi_{1 0 1 1}\\& + 
    \psi_{0 0 0 1} \psi_{0 0 1 0} \psi_{1 1 0 0} - \psi_{0 0 0 0} \psi_{0 0 1
         0} \psi_{1 1 0 1} - 
    \psi_{0 0 0 0} \psi_{0 0 0 1} \psi_{1 1 1 0} + \psi_{0 0 0 0}^2 \psi_{1 1
         1 1}.\nonumber
\eqn
The three degree $d=3$ Markov invariants $\{\Phi\cdot Q_2,Q^{s_1}_2,Q^{s_2}_2\}$ have been shown by explicit computation to be linearly independent, as required.
\subsection{The $n\!=\!3$ case}
From Table \ref{tab:inner}, there are two Markov invariants for $n\!=\!3,m\!=\!2$ of degree $d\!=\!4,5$. Again these invariants can be easily produced by taking products of the stochastic invariant with the determinant invariant (\ref{concurrence}):
\beqn
\Phi\cdot {D}_3,\qquad\Phi^2\cdot{D}_3.\nonumber
\eqn
In the case $m\!=\!3$ there is a single Markov invariant, which we also refer to as the \textit{stangle}:
\beqn\label{stochtang3}
\Psi_{i_1i_2i_3i_4i_5i_6i_7i_8i_9i_{10}i_{11}i_{12}}&\\:=Y_{1,4;7;10}&Y_{2,8;5;11}Y_{3,12;6,9}\psi_{i_1i_2i_3}\psi{i_4i_5i_6}\psi_{i_7i_8i_9}\psi_{i_{10}i_{11}i_{12}},
\eqn
so that
\beqn
\mathcal{T}_3^{(s)}&=\Psi_{000011102220}\\
&=\psi_{0 1 2} \psi_{0 2 0} \psi_{1 0 1} \psi_{2 0 0} - \psi_{0 1 0} \psi_{0 2 2} \psi_{1 0 
      1} \psi_{2 0 0} - 
  \psi_{0 1 1} \psi_{0 2 0} \psi_{1 0 2} \psi_{2 0 0}\\& + \psi_{0 1 0} \psi_{0 2 1} \psi_{1 0 
      2} \psi_{2 0 0} - 
  \psi_{0 0 2} \psi_{0 2 1} \psi_{1 1 0} \psi_{2 0 0} + \psi_{0 0 1} \psi_{0 2 2} \psi_{1 1 
      0} \psi_{2 0 0}\\& + 
  \psi_{0 0 2} \psi_{0 1 1} \psi_{1 2 0} \psi_{2 0 0} - \psi_{0 0 1} \psi_{0 1 2} \psi_{1 2 
      0} \psi_{2 0 0} - 
  \psi_{0 1 2} \psi_{0 2 0} \psi_{1 0 0} \psi_{2 0 1}\\& + \psi_{0 1 0} \psi_{0 2 2} \psi_{1 0 
      0} \psi_{2 0 1} + 
  \psi_{0 0 2} \psi_{0 2 0} \psi_{1 1 0} \psi_{2 0 1} - \psi_{0 0 0} \psi_{0 2 2} \psi_{1 1 
      0} \psi_{2 0 1}\\& - 
  \psi_{0 0 2} \psi_{0 1 0} \psi_{1 2 0} \psi_{2 0 1} + \psi_{0 0 0} \psi_{0 1 2} \psi_{1 2 
      0} \psi_{2 0 1} + 
  \psi_{0 1 1} \psi_{0 2 0} \psi_{1 0 0} \psi_{2 0 2}\\& - \psi_{0 1 0} \psi_{0 2 1} \psi_{1 0 
      0} \psi_{2 0 2} - 
  \psi_{0 0 1} \psi_{0 2 0} \psi_{1 1 0} \psi_{2 0 2} + \psi_{0 0 0} \psi_{0 2 1} \psi_{1 1 
      0} \psi_{2 0 2}\\& + 
  \psi_{0 0 1} \psi_{0 1 0} \psi_{1 2 0} \psi_{2 0 2} - \psi_{0 0 0} \psi_{0 1 1} \psi_{1 2 
      0} \psi_{2 0 2} + 
  \psi_{0 0 2} \psi_{0 2 1} \psi_{1 0 0} \psi_{2 1 0}\\& - \psi_{0 0 1} \psi_{0 2 2} \psi_{1 0 
      0} \psi_{2 1 0} - 
  \psi_{0 0 2} \psi_{0 2 0} \psi_{1 0 1} \psi_{2 1 0} + \psi_{0 0 0} \psi_{0 2 2} \psi_{
      1 0 1} \psi_{2 1 0}\\& + \psi_{0 0 1} \psi_{0 2 0} \psi_{1 0 2} \psi_{2 1 0} - \psi_{
      0 0 0} \psi_{0 2 1} \psi_{1 0 2}
   \psi_{2 1 0} - \psi_{0 0 2} \psi_{0 1 1} \psi_{1 0 0} \psi_{2 2 0}\\& + \psi_{0 0 1} \psi_{0 
      1 2} \psi_{1 0 0}
   \psi_{2 2 0} + \psi_{0 0 2} \psi_{0 1 0} \psi_{1 0 1} \psi_{2 2 0} - \psi_{0 0 0} \psi_{0 
      1 2} \psi_{1 0 1}
   \psi_{2 2 0}\\& - \psi_{0 0 1} \psi_{0 1 0} \psi_{1 0 2} \psi_{2 2 0} + \psi_{0 0 0} \psi_{0 
      1 1} \psi_{1 0 2} \psi_{2 2 0}.\nonumber
\eqn
In the case of $m\!=\!4$, Table \ref{tab:inner} predicts four Markov invariants, which we again refer to as squangles. One of the squangles can be inferred directly as the pointwise product:
\beqn
\Phi\cdot Q_3,\nonumber
\eqn
and by directed trial and error we have shown that the other three can be generated from the Young operators:
\beqn\label{squangle3}
Q^{s_1}_3&\leftarrow Y_{1,5;9;13}Y_{2,6;10;14}Y_{3,7;11;15}Y_{4,8;12;16},\\
Q^{s_2}_3&\leftarrow Y_{1,9;5;13}Y_{2,14;6;10}Y_{3,7;11;15}Y_{4,8;12;16},\\
Q^{s_3}_3&\leftarrow Y_{1,9;5;13}Y_{2,10;6;14}Y_{3,7;11;15}Y_{4,8;12;16},
\eqn
where $\leftarrow$ indicates the implementation of our procedure with the indices of $\Psi$ filled out to create the only semi-standard tableau of shape $\{21^2\}$ using the integers $\{0,1,2\}$. The four invariants $\{\Phi\cdot Q_3,Q^{s_1}_3,Q^{s_2}_3,Q^{s_3}_3\}$ have been shown by explicit computation to be linearly independent.
\subsection{The $n\!=\!4$ case}
In the case of $n\!=\!4,m\!=\!2$, Table \ref{tab:inner} predicts a Markov invariant of degree $d=5,6$. Again, these invariants can be generated easily as the pointwise products:
\beqn
\Phi\cdot D_4,\qquad\Phi^2\cdot D_4.\nonumber
\eqn
In the case of $m\!=\!3$ Table \ref{tab:inner} predicts a degree $d\!=\!6$ Markov invariant which we again refer to as the \textit{stangle}. It is generated from the Young operator
\beqn\label{eq:stochtang4}
\mathcal{T}_4^{s}\leftarrow Y_{1,4,13;7,10,16}Y_{2,8,17;5,11,14}Y_{3,12,18;6,9,15}.
\eqn
Explicitly this polynomial has 1404 terms.
\\
In the case of $m\!=\!4$ there are four degree $d=5$ Markov invariants which we again refer to as squangles. One of these is generated easily as
\beqn
\Phi\cdot Q_4,\nonumber
\eqn
and by directed trial and error the other three have been found to be given by the Young operators:
\beqn\label{squangle4}
Q^{s_1}_4&\leftarrow Y_{1,5;9;13;17} Y_{2,6;10;14;18} Y_{3,7;11;15;19} Y_{4,8;12;16;20},\\
Q^{s_2}_4&\leftarrow Y_{1,9;5;13;17} Y_{2,14;6;10;18} Y_{3,7;11;15;19} Y_{4,8;12;16;20},\\
Q^{s_3}_4&\leftarrow Y_{1,9;5;13;17} Y_{2,14;6;10;18} Y_{3,19;7;11;15} Y_{4,8;12;16;20}.
\eqn
The four degree $d\!=\!5$ Markov invariants $\{\Phi\cdot Q_4,Q^{s_1}_4,Q^{s_2}_4,Q^{s_3}_4\}$ have been shown by explicit computation to be linearly independent, as required.
\section{What happens on a phylogenetic tree?}
In this section we will examine the structure of the invariant functions we have discovered on phylogenetic trees. We will focus on the case of four characters $n=4$ and three and four leaves $m=3,4$.\\
We have discovered invariant functions which satisfy
\beqn
w(g\psi)=\det(g)^kw(\psi),\nonumber
\eqn
for all $g\in \times^m \mathcal{M}(n)$ and $\psi\in V^{\otimes m}$. If we consider the case where these invariants are evaluated on the phylogenetic tensor $P$, the invariant takes the form
\beqn
w(P)=\prod_{a=1}^m \det(M_{a})^kw(\widetilde{P}).\nonumber
\eqn
Our task is to examine the structure of the Markov invariants when evaluated on the phylogenetic tensor $\widetilde{P}$ corresponding to the various possible trees. 

\subsection{The stangle}
As we saw in Chapter \ref{chap4}, we need only consider \textit{unrooted} phylogenetic trees. For the case of three taxa the most general phylogenetic tree is:
\beqn
\\
\\
\pspicture[](5,0)(1,2.5)
\psset{linewidth=\pstlw,xunit=0.5,yunit=0.5,runit=0.5}
\psset{arrowsize=2pt 2,arrowinset=0.2}
\psline{-}(4,3)(0,3)
\rput(-.25,3){1}
\psline{-}(4,3)(8,0)
\rput(8.3,-0.15){3}
\psline{-}(4,3)(8,6)
\rput(8.3,6.15){2}
\psline{->}(4,3)(2,3)
\rput(2,3.6){$M_1$}
\psline{->}(4,3)(6,4.5)
\rput(5.8,5.3){$M_2$}
\psline{->}(4,3)(6,1.5)
\rput(5.75,.75){$M_3$}
\pscircle[linewidth=0.8pt,fillstyle=solid,fillcolor=black](4,3){.2} 
\rput(4,3.5){$\pi$}
\endpspicture . \nonumber
\\
\\
\eqn
The corresponding phylogenetic tensor can be expressed as
\beqn
P=(M_1\otimes M_2\otimes M_3)1\otimes \delta\cdot\delta\cdot\pi,\nonumber
\eqn
where
\beqn
\delta^2:=1\otimes\delta\cdot\delta=\delta\otimes 1\cdot\delta.\nonumber
\eqn
From the general properties of the Markov invariants we find that
\beqn
\mathcal{T}^{(s)}(P)=\det(M_1)\det(M_2)\det(M_3)\mathcal{T}^{(s)}(\widetilde{P}),\nonumber
\eqn
and by direct computation 
\beqn
\mathcal{T}^{(s)}(\widetilde{P})=0.\nonumber
\eqn
It follows that evaluating the stangle on the general phylogenetic tensor of four leaves satisfies
\beqn
\mathcal{T}^{(s)}(P)=0.\nonumber
\eqn
This equation is independent of all the model parameters contained in the phylogenetic tree. This observation implies that this Markov invariant also satisfies the properties of a \textit{phylogenetic invariant} for the general Markov model \cite{allman2003}.
\subsection{The squangles}
For the case of four taxa there are three inequivalent unrooted phylogenetic trees as presented in Figure \ref{pic:threequartets}.
\begin{figure}[t]
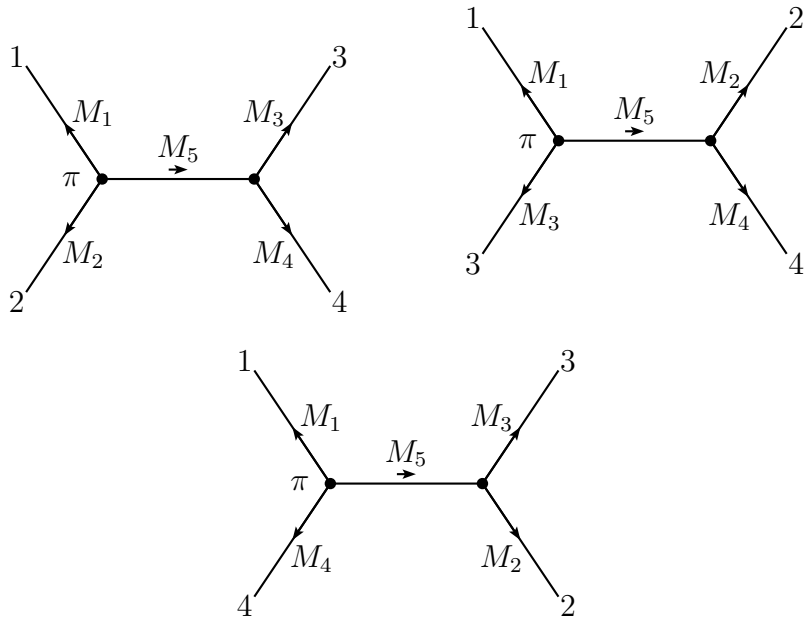

\label{fig:fourtaxa}
\pspicture[](-7,0)(0,3)
\psset{linewidth=\pstlw,xunit=0.5,yunit=0.5,runit=0.5}
\psset{arrowsize=2pt 2,arrowinset=0.2}
\psline{-}(2,3)(0,0)
\rput(-.25,-.25){3}
\psline{-}(2,3)(0,6)
\rput(-.25,6.25){1}
\psline{-}(2,3)(6,3)
\psline{-}(6,3)(8,6)
\rput(8.25,6.25){2}
\psline{-}(6,3)(8,0)
\rput(8.25,-0.25){4}
\psline{->}(2,3)(1,4.5)
\rput(1.75,4.75){$M_1$}
\psline{->}(2,3)(1,1.5)
\rput(1.5,1.0){$M_3$}
\psline{->}(6,3)(7,4.5)
\rput(6.25,4.75){$M_2$}
\psline{->}(6,3)(7,1.5)
\rput(6.5,1.0){$M_4$}
\psline{->}(3.75,3.25)(4.25,3.25)
\rput(4,3.8){$M_5$}
\pscircle[linewidth=0.8pt,fillstyle=solid,fillcolor=black](2,3){.15} 
\pscircle[linewidth=0.8pt,fillstyle=solid,fillcolor=black](6,3 ){.15} 
\rput(1.2,3){$\pi $}
\endpspicture

\pspicture[](-1,-3)(3,0)
\psset{linewidth=\pstlw,xunit=0.5,yunit=0.5,runit=0.5}
\psset{arrowsize=2pt 2,arrowinset=0.2}
\psline{-}(2,3)(0,0)
\rput(-.25,-.25){2}
\psline{-}(2,3)(0,6)
\rput(-.25,6.25){1}
\psline{-}(2,3)(6,3)
\psline{-}(6,3)(8,6)
\rput(8.25,6.25){3}
\psline{-}(6,3)(8,0)
\rput(8.25,-0.25){4}
\psline{->}(2,3)(1,4.5)
\rput(1.75,4.75){$M_1$}
\psline{->}(2,3)(1,1.5)
\rput(1.5,1.0){$M_2$}
\psline{->}(6,3)(7,4.5)
\rput(6.25,4.75){$M_3$}
\psline{->}(6,3)(7,1.5)
\rput(6.5,1.0){$M_4$}
\psline{->}(3.75,3.25)(4.25,3.25)
\rput(4,3.8){$M_5$}
\pscircle[linewidth=0.8pt,fillstyle=solid,fillcolor=black](2,3){.15} 
\pscircle[linewidth=0.8pt,fillstyle=solid,fillcolor=black](6,3 ){.15} 
\rput(1.2,3){$\pi $}
\endpspicture

\pspicture[](-4,0)(0,1)
\psset{linewidth=\pstlw,xunit=0.5,yunit=0.5,runit=0.5}
\psset{arrowsize=2pt 2,arrowinset=0.2}
\psline{-}(2,3)(0,0)
\rput(-.25,-.25){4}
\psline{-}(2,3)(0,6)
\rput(-.25,6.25){1}
\psline{-}(2,3)(6,3)
\psline{-}(6,3)(8,6)
\rput(8.25,6.25){3}
\psline{-}(6,3)(8,0)
\rput(8.25,-0.25){2}
\psline{->}(2,3)(1,4.5)
\rput(1.75,4.75){$M_1$}
\psline{->}(2,3)(1,1.5)
\rput(1.5,1.0){$M_4$}
\psline{->}(6,3)(7,4.5)
\rput(6.25,4.75){$M_3$}
\psline{->}(6,3)(7,1.5)
\rput(6.5,1.0){$M_2$}
\psline{->}(3.75,3.25)(4.25,3.25)
\rput(4,3.8){$M_5$}
\pscircle[linewidth=0.8pt,fillstyle=solid,fillcolor=black](2,3){.15} 
\pscircle[linewidth=0.8pt,fillstyle=solid,fillcolor=black](6,3 ){.15} 
\rput(1.2,3){$\pi $}
\endpspicture
 
\caption{Three alternative quartet trees}
\label{pic:threequartets}
\end{figure}
 The corresponding phylogenetic tensors are
\begin{itemize}
\item $P^{(1)}=M_1\otimes M_2\otimes M_3\otimes M_4(1\otimes 1\otimes \delta M_5)\delta^2\pi$
\item $P^{(2)}=M_1\otimes M_2\otimes M_3\otimes M_4(1\otimes \delta M_5 \otimes 1)\delta^2\pi$
\item $P^{(3)}=M_1\otimes M_2\otimes M_3\otimes M_4(\delta M_5\otimes 1\otimes 1)\delta^2\pi$.
\end{itemize}
For any linear combination of the Markov invariants: 
\beqn
w=c\Phi\cdot Q_4 +c_1Q^{s_1}_4+c_2Q^{s_2}_4+c_3Q^{s_3}_4\,\nonumber
\eqn
we have
\beqn
w(P^{(1)})=\det(M_1)\det(M_2)\det(M_3)\det(M_4)w(\widetilde{P}^{(a)}),\quad a=1,2,3.\nonumber
\eqn
Defining the linearly independent combinations
\beqn
L_1&=-\fra{3}{2}Q^{s_1}_4+Q^{s_2}_4+2Q^{s_3}_4,\nonumber\\
L_2&=-\fra{3}{2}Q^{s_1}_4+2Q^{s_2}_4+Q^{s_3}_4,\\
L_3&=-Q^{s_2}_4+Q^{s_3}_4.
\eqn
it is possible to show by direct computation that the following relations hold:
\begin{itemize}
\item $L_1(\widetilde{P}^{(1)})=0,\quad L_2(\widetilde{P}^{(1)})=-L_3(\widetilde{P}^{(1)})>0;$
\item $L_2(\widetilde{P}^{(2)})=0,\quad L_1(\widetilde{P}^{(1)})=L_3(\widetilde{P}^{(1)})>0;$
\item $L_3(\widetilde{P}^{(3)})=0,\quad L_1(\widetilde{P}^{(1)})=L_2(\widetilde{P}^{(1)})<0$.
\end{itemize}
This implies that these linear combinations of the squangles are not only Markov invariants, but also phylogenetic  invariants \cite{allman2003}. They are actually phylogenetically \textit{informative} invariants because they can be used to distinguish between the three quartet topologies. Studying the statistical properties of this technique is a topic of ongoing work (see Appendix \ref{app1}).
\section{Review of important invariants}
We tabulate the invariant functions that have been of interest in this thesis in Table \ref{fig:dong}. It should be noted that in the case of the squangles the invariants of the general linear group are included with the invariants of the Markov semigroup.
\begin{table}[ht]
\begin{tabular}[h]{|c|c|c|c|c|c|}
\hline
Name & Symbol & Schur multi. & Group & $(d,k)$ & Ref. \\
\hline
$det$& $\det_2$ & $\ast^2\{1^2\}=\{2\}$ & $\times^2GL(2)$ & (2,1) & (\ref{concurrence})\\
&$\det_3$ & $\ast^2\{1^3\}=\{3\}$ & $\times^2GL(3)$ & (3,1) & (\ref{concurrence})\\
&$\det_4$ & $\ast^2\{1^4\}=\{4\}$ & $\times^2GL(4)$ & (4,1) & (\ref{concurrence})\\
$tangle$ &$\mathcal{T}_2$ & $\ast^3\{2^2\}\ni\{4\}$ & $\times^3GL(2)$ & (4,2) & (\ref{tangle2})\\
&$\mathcal{T}_3$ & $\ast^3\{2^3\}\ni\{6\}$ & $\times^3GL(3)$ & (6,2) & (\ref{tangle3})\\
&$\mathcal{T}_4$ & $\ast^3\{2^4\}\ni\{8\}$ & $\times^3GL(4)$ & (8,2) & (\ref{tangle4})\\
$stangle$&$\mathcal{T}_2^{s}$ & $\ast^3\{21\}\ni\{3\}$ & $\times^3\mathcal{M}(2)$ & (3,1) & (\ref{stochtang2})\\
$$&$\mathcal{T}_3^{s}$ & $\ast^3\{21^2\}\ni\{4\}$ & $\times^3\mathcal{M}(3)$ & (4,1) & (\ref{stochtang3})\\
&$\mathcal{T}_4^{s}$ & $\ast^3\{31^3\}\ni\{6\}$ & $\times^3\mathcal{M}(4)$ & (6,1) & (\ref{eq:stochtang4})\\
$quangle$&$Q_2$ & $\ast^4\{1^2\}\ni\{2\}$ & $\times^4GL(2)$ & (2,1) & (\ref{eq:quangles})\\
&$Q_3$ & $\ast^4\{1^3\}\ni\{3\}$ & $\times^4GL(3)$ & (3,1) & (\ref{eq:quangles})\\
&$Q_4$ & $\ast^4\{1^4\}\ni\{4\}$ & $\times^4GL(4)$ & (4,1) & (\ref{eq:quangles})\\
$squangle$&$(Q_2,Q_2^{s_1},Q_2^{s_2})$ & $\ast^4\{21\}\ni 3\{3\}$ & $\times^4\mathcal{M}(2)$ & (3,1) & (\ref{squangle2})\\
$$&$(Q_3,Q_3^{s_1},Q_3^{s_2},Q_3^{s_3})$ & $\ast^4\{21^2\}\ni 4\{4\}$ & $\times^4\mathcal{M}(3)$ & (3,1) & (\ref{squangle3})\\
&$(Q_4,Q_4^{s_1},Q_4^{s_2},Q_4^{s_3})$ & $\ast^4\{21^3\}\ni 4\{5\}$ & $\times^4\mathcal{M}(4)$ & (5,1) & (\ref{squangle4})\\
\hline
\end{tabular}
\caption{Invariant functions satisfying $f\circ g=\det(g)^kf$}
\label{fig:dong}
\end{table}
\section{Closing remarks}
In this chapter we have defined and proved the existence of Markov invariants. We have shown how to derive their explicit polynomial form in interesting cases. We examined the structure of several invariants in the context of phylogenetic trees. Finally, we derived a novel technique of quartet tree reconstruction which is valid under the assumptions of the general Markov model of sequence evolution.


\chapter[CONCLUSION]{Conclusion} \label{concl}

In this thesis we have examined the mathematical analogy between quantum physics and the Markov model of a phylogenetic tree.\\
In Chapter \ref{chap2} we gave a review of group representation theory, established the Schur/Weyl duality and went on to show how one-dimensional representations and invariant functions of the general linear group can be put into coincidence. We also presented several examples of the explicit polynomial form of these invariants.\\
In Chapter \ref{chap3} we concretely established the mathematical analogy between entanglement and that of phylogenetic relation. We showed that group invariant functions can be used to quantify a measure of phylogenetic relation.\\
In Chapter \ref{chap4} we gave a review of pairwise phylogenetic distance measures and examined the use of the tangle in improving the calculation of pairwise distance measures from observed sequence data.\\
In Chapter \ref{chap5} we defined and showed how to derive Markov invariant functions. We studied their properties in cases relevant to the problem of phylogenetic tree reconstruction. We derived a new technique for reconstruction of quartets which is valid under the assumptions of a general Markov model.\\
\subsubsection{Future investigations}
There are several clear paths for continuing the work that has been presented in this thesis.\\
Rather than use the tangle to give improved pairwise distances it seems judicious to examine how the tangle could be used in more direct ways. The Neighbour-Joining (NJ) algorithm for tree reconstruction has at its core the concept of pairwise distances and in opposition to this the tangle polynomial actually gives a measure of the sum of the branch lengths for a triplet. Hence it seems that one possibility is to generalize the NJ algorithm in such a way that the tangle is incorporated explicitly into the procedure. Additionally, biologists are interested in the evolutionary distance between taxa and another possibility would be to use the tangle as a measure of the evolutionary distance between triplets of taxa without decomposing this distance into pairs. Given a set of multiple taxa one could construct interesting questions comparing different triplets using the value of the tangle as a quantifier. \\
The stochastic tangle is a very interesting mathematical object as it simultaneously satisfies the properties of a Markov invariant and that of a phylogenetic invariant. In this thesis we have not investigated the potential of finding a practical role for the stochastic tangle in the problem of phylogenetic reconstruction. The possibilities of practical roles are similar to that of the tangle and we leave this as an open problem.
\\
The squangles have been shown to give a new tree reconstruction algorithm for the case of quartets. The main path for future investigation is to study the statistical properties of such an algorithm. It is theoretically clear how to calculate unbiased forms of the squangles (see Appendix \ref{app1}) and this would be a desirable practical outcome as it will improve the performance of the quartet reconstruction in the case where the sequence data is of relatively short length. Unfortunately this calculation of an unbiased form is computationally difficult and has not been achieved. To further the complete statistical understanding it is necessary to calculate the variance of the squangles. Again this is theoretically clear but computationally difficult as one is required to square the polynomials.\\
In this thesis we have used the concept of a tree in a rather ad hoc way. Our procedure was to compute the explicit polynomial form of the invariant functions and then to impose a given tree structure onto the polynomial by choosing coordinates for the tensors selected to be consistent with the tree. Given that the existence of the invariant functions was proved using the Schur functions series, a natural corollary would be to ask if it is possible to identify the relationships between the invariant functions that occur on particular trees by simply studying the properties of the Schur functions in more detail. The branching operator $\delta$ is technically an invertible linear operator on the expanded linear space known as a $Fock$ space and it follows that the character theory of this action together with that of the Markov semigroup should introduce the possibility of ``seeing'' the tree structure within the Schur functions. Hence it seems feasible to identify the relationships between the invariant functions that occur on particular trees by simply studying the properties of the Schur functions in more detail. \\
The other clear course for theoretical investigation is to completely classify the ring of invariants for the Markov semigroup. This is not an easy problem as the Hilbert basis theorem states that the ring of invariants is guaranteed to be finitely generated if the group action is completely reducible \cite{kraft2000}. However, the Markov group has an invariant subspace with no complementary invariant subspace and is hence not completely reducible. Further study is required to fully characterize the ring of Markov invariants. Additionally, the exact connection between the ring of Markov invariants and the ideal of phylogenetic invariants should be established concretely. In this thesis this connection was only made for the particular cases that were of interest. A well defined and complete description of the connection is required before one can speak with confidence on this matter. 

\appendix


\chapter[Bias correction of invariant functions]{Bias correction of invariant functions}\label{app1}
\section{Multinomial distribution}
Let $X_a,\ 1\leq a\leq n$, be the random variable which counts the occurrences of character $a$ in a finite subset of an infinite sequence consisting of the characters $\{1,2,...,n\}$. If each character occurs with probability $p_a$, then for a subset of length $N$ we have the standard multinomial distribution
\beqn\label{multinom}
\mathbb{P}(X_1=k_1,X_2=k_2,...,X_n=k_n)=\frac{N!}{k_1!k_2!...k_n!}p_1^{k_1}p_2^{k_2}...p_n^{k_n}.
\eqn
Defining the vector valued random variable $X=(X_1,X_2,...,X_n)\in\mathbb{N}^n$, we can express (\ref{multinom}) as
\beqn
\mathbb{P}(X=k)=\frac{N!}{\prod_{a=1}^nk_a!}\prod_{b=1}^np_{b}^{k_b},\nonumber
\eqn
with $k=(k_1,...,k_n)\in\mathbb{N}^n$ and $k_1+k_2+...+k_n=N$. Consider any function
\beqn
\phi:\mathbb{C}^n\rightarrow\mathbb{C}^q,\ q\in\mathbb{N}.\nonumber
\eqn
The expectation value of $\phi(X)$ is then defined as
\beqn
E[\phi(X)]=\sum_{k\in\mathbb{N}: k_1+k_2+...+k_n=N}\mathbb{P}(X=k)\phi(k).\nonumber
\eqn
\section{Generating function}
For every $s\in\mathbb{R}^n$ we define the generating function $G:\mathbb{R}^n\rightarrow\mathbb{C}$ as
\beqn
G(s)=E[e^{i(s,X)}],\nonumber
\eqn
where we have considered $X\in\mathbb{N}^n\subset\mathbb{R}^n$ and $(s,X)=s_1X_1+s_2X_2+...+s_nX_n$\newpage\noindent and convergence is ensured by $|e^{i(s,X)}|=1$ and the triangle inequality.\\
Observe that
\beqn
\frac{\partial G(s)}{\partial s_j}=E[iX_je^{i(s,X)}].\nonumber
\eqn
In particular we have
\beqn
\frac{\partial G(s)}{\partial s_j}|_{s=0}=iE[X_j].\nonumber
\eqn
We simplify notation by taking the Laplace transform
\beqn
s\rightarrow is,\nonumber
\eqn
and find that in general
\beqn
\frac{\partial^{b_1+b_2+...+b_m}G(s)}{\partial s_{a_1}^{b_1}\partial s_{a_2}^{b_2}...\partial s_{a_m}^{b_m}}|_{s=0}
=E[X_{a_1}^{b_1}X_{a_2}^{b_2}...X^{b_m}_{a_m}].\nonumber
\eqn
Computing a closed form of $G(s)$ follows easily given the identity
\beqn
(x_1+x_2+...+x_n)^N=\sum_{k\in\mathbb{N}^n:k_1+k_2+...+k_n=N}\frac{N!}{k_1!k_2!...k_n!}x_1^{k_1}x_{2}^{k_2}...x_n^{k_n},\nonumber
\eqn
so that
\beqn
G(s)=(p_1e^{s_1}+p_2e^{s_2}+...+p_ne^{s_n})^N.\nonumber
\eqn
In particular $G(0)=1$.
\section{Expectations of polynomials}
We are particularly interested in the case when 
\beqn
\phi\in\mathbb{C}[V]_d,\quad V\cong \mathbb{C}^n.\nonumber
\eqn
In general we have 
\beqn
E[(\phi_1+c\phi_2)(X)]=E[\phi_1(X)]+cE[\phi_2(X)],\nonumber
\eqn
but
\beqn
E[\phi_1\cdot\phi_2(X)]\neq E[\phi_1(X)]E[\phi_2(X)].\nonumber
\eqn
Thus in order to calculate the expected value of a polynomial we need only study expectation values of monomials:
\beqn
E[X_{a_1}^{b_1}X_{a_2}^{b_2}...X_{a_m}^{b_m}],\quad m\leq n.\nonumber
\eqn
In particular we have
\beqn\label{monoexpect}
E[X_a]&=\frac{\partial G(s)}{\partial s_a}|_{s=0}\\
&=Np_a,\\
E[X_aX_b]&=\frac{\partial^2 G(s)}{\partial s_a\partial s_b}|_{s=0}\\
&=N(N-1)p_ap_b+Np_a\delta_{ab},\\
E[X_aX_bX_c]=&\frac{\partial^3 G(s)}{\partial s_a\partial s_b\partial s_c}|_{s=0}\\
=&N(N-1)(N-2)p_ap_bp_c\\&+N(N-1)(p_ap_b\delta_{ac}+p_ap_c\delta_{ab}+p_bp_c\delta_{ab})+Np_a\delta_{ab}\delta_{ac},\eqn
and for a set of distinct integers $1\leq a_1,a_2,...,a_d\leq n\}$ we have
\beqn\label{monoform}
E[X_{a_1}X_{a_2}...X_{a_d}]=\frac{N!}{(N-d)!}p_{a_1}p_{a_2}...p_{a_m}.
\eqn
\section{Bias correction}
For a given homogeneous polynomial $\phi$ of degree $d$, we would like to find a polynomial $\widetilde{\phi}$ such that
\beqn
E[\widetilde{\phi}(X)]=\phi(p).\nonumber
\eqn
We refer to $\widetilde{\phi}$ as the unbiased form of $\phi$.\\
By looking at the general form of the invariants ${\det}_n$ it can be seen that every monomial term is of the form (\ref{monoform}). It follows easily that
\beqn
E[{\det}_n(X)]=\frac{N!}{(N-n)!}{\det}_n(p),\nonumber
\eqn
so that the unbiased version is given simply by
\beqn
\widetilde{{\det}_n}:=\frac{(N-d)!}{N!}{\det}_n.\nonumber
\eqn
It should be noted that this says nothing about what to do about finding an unbiased form of $\log\det$, because the $\log$ function is \textit{not} polynomial. For discussion on the bias correction of the $\log\det$ function see \cite{barry1987}.\\
We leave the computation of unbiased forms of the other invariants presented in this thesis as an open problem. However, the process is exemplified in the following.\\
Consider the expectation:
\beqn
E[X_1X_2X_3]=N(N-1)(N-2)p_1p_2p_3.\nonumber
\eqn
Thus the unbiased form of this monomial is simply
\beqn
\frac{(N-3)!}{N!}X_1X_2X_3.\nonumber
\eqn
Consider
\beqn
E[X_1^2X_2]=N(N-1)(N-2)p_1^2p_2+N(N-1)p_1p_2.\nonumber
\eqn
The unbiased form of this monomial is then
\beqn
\frac{(N-3)!}{N!}(X_1^2X_2-X_1X_2),\nonumber
\eqn
since
\beqn
E[\frac{(N-3)!}{N!}(X_1^2X_2-X_1X_2)]=p_1^2p_2.\nonumber
\eqn
By generalizing (\ref{monoexpect}) for a set of distinct integers $1\leq a,b_1,b_2,...,b_m\leq n$ it follows that 
\beqn
E[\fra{N!}{(N-(m+1))!}(X_a^2X_{b_1}X_{b_2}...X_{b_m}-X_aX_{b_1}X_{b_2}...X_{b_m})]=p_a^2p_{b_1}p_{b_2}...p_{b_m}.\nonumber
\eqn
This is the first step to computing the unbiased form of general monomials. Clearly the process becomes more complicated as the degree of a given random variable within each monomial becomes larger. 

\include{app2}
\small
\bibliographystyle{plain}
\bibliography{thesis}

\printindex

\index{invariant}

\end{document}